\theoremstyle{plain}
\newtheorem{theorem}{Theorem}
\newtheorem{proposition}[theorem]{Proposition}
\newtheorem{lemma}[theorem]{Lemma}
\newtheorem{corollary}[theorem]{Corollary}
\newtheorem{rem}[theorem]{Remark}
\numberwithin{theorem}{section}
\numberwithin{equation}{section}
\newcommand{\nc}{\newcommand}
\nc{\be}{\begin{equation}}
\nc{\la}{\label}
\nc{\ba}{\begin{array}}
\nc{\ea}{\end{array}}
\nc{\bs}{\begin{split}}
\nc{\es}{\end{split}}
\newcommand{\R}{\mathbb{R}}
\newcommand{\C}{\mathbb{C}}
\newcommand{\Z}{\mathbb{Z}}
\newcommand{\frach}{\mathfrak{h}}
\newcommand{\cD}{\mathcal{D}}
\newcommand{\cC}{\mathcal{C}}
\newcommand{\cE}{\mathcal{E}}
\newcommand{\cF}{\mathcal{F}}
\newcommand{\F}{\mathcal{F}} 
\newcommand{\cH}{\mathcal{H}}
\newcommand{\cL}{\mathcal{L}}
\newcommand{\cP}{\mathcal{P}}         
\newcommand{\cS}{\mathcal{S}}
\nc{\e}{\epsilon}
\nc{\eps}{\epsilon}
\nc{\al}{\alpha}
\nc{\del}{\delta}
\nc{\h}{\delta}
\nc{\G}{\eta} 
\nc{\et}{\eta} 
\nc{\Gam}{\eta}  
\nc{\g}{\gamma}
\nc{\gam}{\gamma}
\nc{\ka}{\kappa}
\nc{\lam}{\lambda}
\nc{\Lam}{\Lambda}
\nc{\Om}{\Omega}
\nc{\om}{\omega}
\nc{\ta}{\tau}
\nc{\w}{\omega}
\nc{\io}{\iota}
\nc{\z}{\zeta}
\nc{\s}{\alpha}
\nc{\Si}{\Sigma}
\nc{\vphi}{\varphi}
\newcommand{\ve}{\varepsilon}
\newcommand{\veps}{\varepsilon}
\nc{\bP}{\bar{P}}
\nc{\bQ}{\bar{Q}}
\nc{\x}{\underline{x}}
\nc{\y}{\underline{y}}
\nc{\ran}{\rangle}
\nc{\lan}{\langle}
\newcommand{\ra}{\rightarrow}
\newcommand{\ls}{\lesssim}
\newcommand{\gs}{\gtrsim}
\newcommand{\one}{\mathbf{1}}
\renewcommand{\Re}{\mathrm{Re}} 
\newcommand{\Tr}{\mathrm{Tr}}
\newcommand{\tr}{\mathrm{Tr}}
\newcommand{\diag}{\mathrm{diag}}
\nc{\bfone}{{\bf 1}}
\newcommand{\p}{\partial}
\newcommand{\n}{\nabla}
\newcommand{\curl}{\operatorname{curl}}
\newcommand{\CURL}{\operatorname{curl}}
\newcommand{\divv}{\operatorname{div}}
\renewcommand{\div}{\operatorname{div}}
\newcommand{\grad}{\operatorname{grad}}
\newcommand{\hess}{\operatorname{Hess}}
\newcommand{\na}{\nabla_a}
\newcommand{\COVGRAD}[1]{\nabla_{\!\!#1}}
\newcommand{\rbrac}[1]{\left(#1\right)} 
\def\eqn{\begin{align}}
\def\eeqn{\end{align}}
\newcommand{\DETAILS}[1]{}
\def\qf{\varphi}
\def\ab{a_b}
\def\chis{\chi_s^\lat}
\def\chib{\chi_s^b}
\def\chibt{\chi_t^b}
\newcommand{\frachb}{\mathfrak{h}^{\lat}}
\newcommand{\frachbvec}{\vec{\mathfrak{h}}_{b}}
\newcommand{\frachvec}{\vec{\mathfrak{h}}}
\newcommand{\hga}{h_{\g a}}
\newcommand{\lat}{\mathcal{L}} 
\newcommand{\Omd}{\Om}
\newcommand{\den}{\text{den}}
\newcommand{\sech}{\text{sech}}
\begin{document}

\begin{quote}
\qquad \qquad \qquad  \qquad \qquad \qquad   \qquad \qquad \qquad 	{\it To appear in Advances in Mathamtics \\ }
\end{quote}


\title[Bogoliubov-de Gennes Equations June 19, 2019]{Vortex lattices and the Bogoliubov-de Gennes Equations} 

\date{April 8, 2020} 
{\center\author{Ilias (Li) Chenn and I. M. Sigal}} 
\maketitle

\begin{abstract} We consider the Bogoliubov-de Gennes equations giving an equivalent formulation of the BCS theory of  superconductivity.  We are interested in static solutions with the magnetic field present. We carefully formulate the equations in the basis independent form, discuss their general features and isolate  key physical classes of solutions (normal 
 and vortex lattice states) which are the candidates for the ground state. We prove existence of the normal 
  and vortex lattice states and stability of the normal states for large  temperature or magnetic fields and their instability for small  temperature and small magnetic fields. 
  
Mathematics Subject Classification (MSC): 81Q80, 81V70, 81V74, 35Q40, 35Q56, 35Q82 

\end{abstract}


\section{Introduction} 

\subsection{Background}\label{sec:backgr}

The Bogoliubov-de Gennes (BdG) equations describe the remarkable quantum phenomenon of  superconductivity.\footnote{For some physics background, see books \cite{dG, MartRoth} and the review papers \cite{Cyr, Legg}.} They present an equivalent formulation of the BCS theory of superconductivity and are among the latest additions to the family of important effective equations of mathematical physics. Together with the Hartree, Hartree-Fock (-Bogoliubov), Ginzburg-Landau, Gross-Pitaevskii and Landau-Lifshitz  equations, they are the quantum members of this illustrious family consisting of such luminaries as 
the heat, 
 Euler, Navier-Stokes, Boltzmann and Vlasov equations. 

The BdG equations describe the evolution of superconductors on nanoscopic and macroscopic scales. There are still many fundamental questions about these equations which are completely open, namely 
\begin{itemize}
\item 
Derivation;

\item Well-posedness;

\item Existence and stability of stationary magnetic solutions. 
\end{itemize}
 
In this paper we address 
the third problem.\footnote{A formal derivation of the (time-dependent) BdG equations is discussed in Appendix \ref{sec:qf-reduct}. For a different but related formal derivation (going back to Dirac and Frenkel)  see \cite{BenSokSol, Lub}. Presently, there are no rigorous derivations. For a recent book on 
 rigorous derivations of the simpler Hartree-Fock equation see \cite{BenPorSchl}, for some more recent papers \cite{PortRadSaffSchl, PetPick} and for the related bosonic Hartree-Fock-Bogoliubov equations \cite{GrillMached1, GrillMached2, NamNap}. For the literature before 2016, we refer only to the papers \cite{He, Sp}, which led to the most of the recent developments and to the paper \cite{FKP}, dealing with the relation between the mean-field limit and deformation quantization.   Following appearance of the first version of this paper on arXiv, the existence problem in the absence of magnetic field was taken up in \cite{BenSokSol} (cf. \cite{BBCFS} for the global existence for the related bosonic Hartree-Fock-Bogoliubov equations).}  
By the magnetic solutions we  mean 
solutions with non-zero magnetic fields. 

The key special solutions of the Bogoliubov-de Gennes (BdG) equations are  normal, superconducting and mixed or intermediate states. The superconducting (or Meissner) states assume, by the definition, that the magnetic field is zero, while  mixed ones, to have 
non-vanishing magnetic fields. For type II superconductors, according to experiments, the latter are 
 (magnetic) vortex lattices. In this paper, we  prove the existence of the normal states for non-vanishing magnetic fields and partial results on their stability and the existence of the vortex lattices. 

There is a considerable physics literature devoted to the BdG  equations, but despite the role played by magnetic phenomena in superconductivity it deals mainly with the zero magnetic field case, with only few disjoint remarks about the case when the magnetic fields are present, the main subject of this work.\footnote{The Ginzburg-Landau equations give a good account of magnetic phenomena in superconductors but only for temperatures sufficiently close to the critical one (see e.g. \cite{Cyr, dG, Tink}).} 

As for rigorous work, it also deals exclusively with the case of zero magnetic field.  The general (variational) set-up for the static BdG equations is given in  \cite{BLS}. 
The next seminal works on the subject are  \cite{HHSS}, where the authors prove the existence of superconducting states (the existence of the normal states under the assumptions of \cite{HHSS} is trivial), to which our  work is closest, and \cite{FHSS},  deriving the (macroscopic) Ginzburg-Landau equations. 
 For an excellent, recent review of 
 the subject, 
 with extensive references and discussion see \cite{HaiSei}. 
 
In the rest of this section we introduce the BdG  equations, describe their properties and the main issues and present the main results of this paper. In the remaining sections 
 we prove these results, 
 with technical derivations delegated to appendices. 
  In the last appendix, following \cite{BBCFS}, we discuss a formal, but  natural, derivation of the BdG  equations.

\subsection{Bogoliubov-de Gennes (BdG) equations}\label{sec:bdg}

In the Bogoliubov-de Gennes approach states of superconductors are described by the pair of bounded operators $\g$ and $\s$, acting on the  one-particle complex Hilbert space, $\frach$,  with a complex conjugation (an anti-linear involution), and satisfying 
\begin{equation}  \label{gam-al-prop} 
	0\le \gamma=\g^* \le 1,\  \quad 	 \s^*= \overline\s\ \quad  
	  {\rm\ and\ }\  \quad \alpha \alpha^* \leq \gamma(1-\gamma)
\end{equation}
where  $\overline\gamma :=\cC \g \cC$, with $\cC$, the operation of complex conjugation 
 (see Appendix \ref{sec:qf-reduct} for the origin of these operators).  $\g$ is a one-particle density operator 
 and $\s$ is a two-particle coherence operator, or diagonal and off-diagonal correlations. $\g(x, x)$ is interpreted as the one-particle density. 

  The one-particle space $\frach$ is determined by the many-body quantum problem. 
For zero density (or `finite') systems, it is  $L^2(\R^d)$ and for positive density ones,  $L^2(\Om)$, where $\Om$ is an arbitrary fundamental cell in a lattice $\lat\subset \R^d$, with magnetic field dependent (twisted) boundary conditions (see Subsection \ref{sec:1pspaces} for more details).  To fix ideas, we take $\frach$ to be 
  the latter, specifically, 
 \begin{align}\notag 
	\frach  := & \{ f \in L^2_{loc}(\R^2) : u_{s}^\lat f = f \text{ for all } s \in \lat \},
\end{align}
 where $u_{ s}^\lat$ are the magnetic translations given in \eqref{ubs} below,  with the scalar product given by $L^2(\Om)$ for an arbitrary fundamental cell $\Omd$ of $\lat$. Furthermore, we understand $\int$ without specifying the domain of integration as taken over $\Om$. 
\DETAILS{$L^2(\R^d)$, if $X = \R^d$, and the trace per volume $\Om$,
\begin{align} \label{Tr-per-vol}
	\Tr_\Om(A) := \frac{1}{|\Om|}\Tr_{L^2(\R^3)}(\chi_\Om A \chi_\Om),
\end{align}	
where $\chi_\Om$ is the indicator function on $\Om$,
 if $X = \Om$ is a fundamental cell of a lattice $\lat$.}

The BdG equations form a system of coupled, nonlinear equations for  $\g$ and $\s$. 
It is convenient to organize the operators  $\gamma$ and $\s$ 
 into the self-adjoint operator-matrix 
\begin{align} \label{Gam}
	\G:= \left( \begin{array}{cc} \g & \s \\  \s^* & \one-\bar\g \end{array} \right).
\end{align}
The definition of $\g$ and $\al$ in terms of the many-body theory (see \eqref{omq-mom} of Appendix \ref{sec:qf-reduct}) implies that 
\begin{align} \label{Gam-prop} 
	0\le \G=\G^* \le 1\ {\rm\ and\ }\   J^* \G J=\one-\bar\G,\ J:=\left( \begin{array}{cc} 0 & \one \\  - \one & 0 \end{array} \right). \end{align} 
These relations imply relations \eqref{gam-al-prop}. 

Since  the BdG equations describe the phenomenon of superconductivity, they are naturally coupled to the electromagnetic field. We describe the latter by the vector  potential $a$ in the {\it gauge in which the electrostatic potential is zero} so that the equations have a slightly simpler form). 
In what follows, for an operator $A$, we denote by $A(x, y)$ its integral kernel. Then the time-dependent BdG equations state (see e.g. \cite{dG, Cyr, MartRoth}) 
\begin{align}\label{BdG-eq-t}
& i \partial_t  \G =[\Lambda(\G, a), \G],\\ 
\label{Lam} 
&\Lambda(\G, a) =  \big(\begin{smallmatrix} h_{\g a} & v^\sharp \s\\ v^\sharp \bar{ \s} & -\overline{h_{\g a}}\end{smallmatrix}\big),\ \quad  
 h_{\g a}  =-\Delta_a+ 
  v^* \gamma  - v^\sharp\, \gamma \,, 
\end{align} 
where, for the pair potential $v(x, y)$, the operator $v^\sharp$ 
(acting on operators on $\frach$)  is defined through the integral kernels as $(v^\sharp\, \al) \, (x;y):= v(x, y)\al (x;y)$  and 
$(v^* \gamma)(x) :=\int v(x, y) \rho_\gamma(y) d y,$
with $\rho_\gamma(x):= \gamma(x; x)$.  
 $v^* \gamma$ and $v^\sharp\, \gamma$ are the direct and exchange self-interaction potentials, and  $\Delta_a:=|\n_a|^2, \n_a:=\n- i a$. 
  \eqref{BdG-eq-t} is coupled to the  Maxwell equation (Amp\`{e}re's law)
\begin{align}\label{Amp-Maxw-eq}
 \partial_t^2 a   &= -\curl^* \curl a + j(\G, a),
\end{align}
where $j(\G, a)(x)\equiv j(\g, a)(x) := [-i \na, \g]_+(x, x)$ 
 is the superconducting current.  Here $[A, B]_+$ is the anti-commutator of $A$ and $B$, i.e. $[A, B]_+:=AB+BA$. (To see that \eqref{Amp-Maxw-eq} is Amp\`{e}re's law, one recalls that in our gauge the electric field is $E=-\p_t a$.) In what follows, we assume that 
\begin{align}\label{v-cond}
	v(x; y)=v(x - y)\ \text{ and }\ v(-x)=v(x),
\end{align}
so that $v^* \gamma = v* \rho_\gamma $. We specify below additional assumptions on  $v(x,y)$ and on the operators $\g$ and $\al$  so that all the terms in  \eqref{BdG-eq-t}-\eqref{Amp-Maxw-eq} are well defined (at least weakly).

\medskip

 {\bf Connection with  the BCS theory. }  \eqref{BdG-eq-t} can be reformulated as an equation on the Fock space involving an effective quadratic Hamiltonian  (see \cite{Cyr, dG, HaiSei} and \cite{BBCFS}, for the bosonic version). These are the effective BCS equations and the effective BCS Hamiltonian. 
 
 {\bf Stationary equations.}  As was mentioned above, we are interested in stationary solutions to \eqref{BdG-eq-t}-\eqref{Amp-Maxw-eq}. These solutions satisfy the stationary BdG equations which, as we show below (see Proposition \ref{prop:stat-sol} and after), are of the form
\begin{align} \label{Gam-eq}
	&\Lam_{\G a} -T g'(\G) =0,\\ 
 \label{a-eq} 
     &\CURL^*\CURL a - j(\G, a)=0,\\
  \label{Lam'} 
& \Lam_{\G a }:=\left( \begin{array}{cc} h_{\g a \mu} & v^\sharp \s \\ v^\sharp \s^* & - \bar h_{\g a \mu} \end{array} \right), \end{align}
with $h_{\g a \mu}:=h_{\g a}-\mu$, where $\hga$ is given in \eqref{Lam}. 
Here $\mu$ and $T\ge 0$ are the chemical potential and temperature parameters. 
The physical function $g'$ is selected by either a thermodynamic limit (Gibbs states) or by a contact with a reservoir (or imposing  the maximum entropy principle) and  is given by
\begin{equation} \label{g'} 
	g'(\lam) = - \ln \frac{\lam}{ 1-\lam}. 
\end{equation} 
Its inverse, ${g'}^{-1}$, is the Fermi-Dirac distribution	
\begin{align} \label{FD-distr} 
 f_{\rm FD} (h) =(1+e^{ h})^{-1}.\end{align} 
By inverting $g'$ in \eqref{Gam-eq}, we can rewrite this equation as the nonlinear Gibbs equation: 
\begin{align}  \label{Gam-eq'}	& \G = f_{T}(\Lam_{\G a}),\  \text{ where }\  f_{T}(\lam):= f_{\rm FD}(\frac{1}{T} \lam).\end{align} 
In Remark \ref{rem:part-hole-sym} below, we rewrite \eqref{Gam-eq} in terms of the eigenfunctions of $\Lam_{\G a}$ or $\G$, the form common in physics literature and closer in spirit to the PDEs.

\subsection{Symmetries and conservation laws} 
\label{sec:bdg-prop} 
  \eqref{BdG-eq-t}-\eqref{Amp-Maxw-eq} are invariant under the time-independent {\it gauge} transformations, 
\begin{equation}\label{gauge-transf}
    T^{\rm gauge}_\chi : (\g, \s, a) \mapsto (e^{i\chi }\g e^{-i\chi } , e^{i\chi }  \s e^{i\chi }, a + \nabla\chi), 
\end{equation}
  for any sufficiently regular function $\chi :  \R^d \to \R$,  and  
the {\it translation}, {\it rotation} and {\it reflection} transformations, 
\begin{align}\label{tr-transf}
   & T^{\rm trans}_h :  (\g, \s, a) \mapsto (u_h\g u_h^{-1}, u_h\s u_h^{-1}, u_h a),\\
\label{rot-transf}
  &  T^{\rm rot}_\rho : (\g, \s, a) \mapsto (u_\rho\g u_\rho^{-1} , u_\rho\s u_\rho^{-1}, \rho u_\rho a),\\
\label{refl-transf}
  &  T^{\rm refl} : (\g, \s, a) \mapsto (u \g u^{-1} , u \s u^{-1}, - u a),\end{align}
 for any $h \in \R^d$ and $\rho \in O(d)$. 
Here $u_h\equiv u_h^{\rm trans}$, $u_\rho\equiv u_\rho^{\rm rot}$ and $u\equiv u^{\rm refl}$ are the standard translation, rotation and reflection transforms $u^{\rm trans}_h :  f(x) \mapsto f(x + h)$, $u^{\rm rot}_\rho : f(x)  \mapsto  f(\rho^{-1}x)$ and $u^{\rm refl} : f(x)  \mapsto  f(- x)$.

We also keep the notation $T^{\rm trans}_h$ for these operators restricted to $\G$'s.

  \eqref{BdG-eq-t}-\eqref{Amp-Maxw-eq}  conserve the energy  
    \begin{align} \notag \cE (\G, a) :=& \Tr \big((-\Delta_a) \gamma \big) + \frac12 \Tr \big((v* \rho_\g) \gamma \big)
   -\frac12\Tr \big((v^\sharp \g) \gamma \big) \\
 \label{energy-t} & 
+ \frac12 \Tr \big( \s^* (v^\sharp \s) \big)+ \frac12\int ( |\curl a|^2+ |\p_t a |^2), 
\end{align} 
where the trace is taken in the one-particle space $\frach$.  (Recall that in our gauge, $ -\p_t a$ is the electric field.)  
 Indeed, assuming $(\G, a)$ solves  \eqref{BdG-eq-t}-\eqref{Amp-Maxw-eq}, taking the time derivative of \eqref{energy-t}, using the expression \[d_a \Tr\big((-\Delta_a) \gamma \big)a'=-\Tr\big(j(\G, a) a' \big)\] for the G\^ateaux derivative in $a$ and using the notation $\dot f\equiv \p_t f$ and $\ddot f\equiv \p_t^2 f$ for 
 $f=\g, \al, a$, we find 
 \begin{align} \label{dtE}	\partial_t \cE (\G, a) 
=& \tr(h_{a \g} \dot \g)  +\Tr\big( \dot\s^* (v^\sharp \s) +  \s^* (v^\sharp \dot\s)\big)\\
 \label{dtE'}& + \lan \curl^* \curl - j(\G, a), \dot a\ran + \lan \ddot{a} , \dot a \ran,	
\end{align}
where the inner product 
is understood to be  in $L^2(\Omd)$.  A simple computation shows that line \eqref{dtE} 
$= \frac12 \tr(\dot\G \Lambda(\G, a))$ and therefore by \eqref{BdG-eq-t} is $0$. Line \eqref{dtE'} vanishes by \eqref{Amp-Maxw-eq}. Hence  
 $\partial_t \cE =0$. \qquad \qquad  \qquad \qquad  \qquad \qquad  \qquad \qquad $\Box$

Furthermore, the global gauge invariance implies the evolution conserves the number of particles, $N:=\Tr \g$.

 \subsection{Free energy} 
The stationary BdG equations arise as 
the Euler-Lagrange equations  for the free energy (BCS) functional 
\begin{align}\label{FT-def} F_{T}(\G, a):=E(\G, a) -T S(\G)-\mu N(\G),\end{align}
where $S(\G) = \Tr g(\G)$, with $	g(\lam):= -\lam \ln \lam - (1-\lam)\ln (1-\lam)$, an anti-derivative of \eqref{g'}, is the entropy, 
 $N(\G):=\tr \g$ is the number of particles, and $E(\G, a)$ is the energy functional \eqref{energy-t} for $\G$ and $a$  time-independent and is given by 
\begin{align} \notag 
	E(\G, a) &=\Tr\big((-\Delta_a) \gamma \big) +\frac12\Tr\big((v* \rho_\g) \gamma \big) -\frac12\Tr\big((v^\sharp \g) \gamma \big)\\
 \label{energy} & +\frac{1}{2} \Tr\big( \s^* (v^\sharp \s) \big) 
 +\int  |\curl a |^2. 
\end{align}

Note surprisingly, it turns out that $E(\G, a):=\qf(H_a)$, where  $\qf$ is a quasi-free state in question (see Appendix \ref{sec:qf-reduct}) and  $H_a$  is the standard many-body given in \eqref{H}, coupled to the vector potential $a$.

\bigskip

 From now on, we assume that $d=2$. This is a typical case considered in physics applications, where it is assumed 
  that the solutions are independent of the third coordinate (the cylindrical geometry in the physics terminology).

 Let $\ab(x)$ be the vector potential with the constant magnetic field, $\curl a_b=b$. 
 Below, we work in the symmetric gauge: $\ab(x)=\frac b2 (-x_2, x_1)$. 

 \subsection{Magnetic translations}\label{sec:mag-trans}

Let  $\cL\subset \R^2$ be a Bravais lattice, fixed throughout the paper. 
We define  the magnetic translation operator  
\begin{align}\label{ubs} u_{ s}^\lat := u^{\rm gauge}_{-\chis} u^{\rm trans}_{ s},\end{align} where 
$u^{\rm gauge}_\chi :    \phi(x) \mapsto e^{i\chi(x)}\phi(x)$  and, recall, $u^{\rm trans}_h :    \phi(x) \mapsto \phi(x + h)$ (cf. the definitions after \eqref{refl-transf}) and $\chi_\cdot^\lat: \lat\times \R^2\ra \R$ is given by 
\begin{align}\label{chibs}\chis(x):=\frac b2 (s\wedge x) 
+c_s,\ \quad c_s:=\frac b2 (s''\wedge s'), 
\end{align} 
 where $s=k\om_1 +\ell \om_2=:s'+s'', k, \ell\in Z$, for any $s \in \lat$, with $\{\om_1, \om_2\}$ a basis in $\lat$, and 
 $b$ is given by 
\begin{align}\label{b-quant}
   b=\frac{2\pi n}{|\lat|},\ n \in  \Z. 
\end{align} 
Here $|\lat|$ is the area of a 
 fundamental cell 
  of $\lat$ (which is independent of the choice of the cell). One can easily check that the operator-family $u_{ s}^\lat, s\in \lat,$  is a group representation: 
  \begin{align}\label{ubs-gr} u_{ s}^\lat u_{ t}^\lat= u_{ s+t}^\lat.\end{align}

 \subsection{One-particle spaces}\label{sec:1pspaces} With translations $u_{ s}^\lat$ given in \eqref{ubs}, we define the periodic one-particle state space 
\begin{align}\label{fh-lat}
	\frach  := & \{ f \in L^2_{loc}(\R^2) : u_{s}^\lat f = f \text{ for all } s \in \lat \},
\end{align}
 which is a Hilbert space with the scalar product defined, for an arbitrary fundamental cell $\Omd$ of $\lat$, as 
 \begin{align}\label{frachb-norm}\lan f, g\ran_{\frach} :=\lan f, g\ran_{L^2(\Omd)}. 
\end{align} 

\medskip

Similarly, we consider the Sobolev space of vector potentials:  let $\vec H^{r}$ be given by 
\begin{align}\label{fh-vec}
\vec H^{r} := & \{ a \in H^r_{\rm loc}(\R^2; \R^2) : T_s^{\rm trans} a = a\ \forall s \in \lat,\ \div a=0, \int a = 0\},
\end{align}
with the Sobolev norm $\|a\|_{H^r}\equiv \|a\|_{H^r(\Omd)}$, for some (and therefore for every) fundamental cell $\Om$ of $\lat$. 
  (The conditions $\div a=0$ and $\int a \equiv \int_\Om a  = 0$ make the operator $\curl^*\curl$ strictly positive.)	Finally, we define the affine space
\begin{align}\label{fhb-vec}\frachvec^{r} := a_b + \vec H^{r}. 
\end{align}
(Recall that $a_b(x)$ is the vector potential with the constant magnetic field.)
 
Now, we define spaces of $\g$'s and $\al$'s used below. Let $I^r$ denote the space of bounded operators satisfying $\|A\|_{I^r}:=(\tr(A^*A)^{r/2})^{1/r}<\infty$ (Schatten, or a trace ideal, or non-commutative $L^r-$space) and let $M_b := \sqrt{-\Delta_{a_b}}$.  
We define 
Sobolev-type spaces for trace class operators by 
\begin{align} \label{Is1}	 
 	& I^{s,1} := \{ A : \frach  \rightarrow \frach  :  
	  \|A\|_{I^{s,1}} := \|M_b^{s} A M_b^{s}\|_{I^{1}} < \infty \},\\
\label{Is2} 
	&  I^{s,2} := \{ A : \frach  \rightarrow \frach : \, 
	 \|A\|_{ I^{s,2}} := \| A M_b^{s}\|_{I^{2}} 	< \infty \}.
\end{align}
Note that $I^{0,p} = I^p$. We will usually assume $\g\in I^{1,1}$ and $\al\in  I^{1,2}$. (Strictly speaking $\al$ acts from the dual space $\frach^*$ to $\frach$ (see \cite{BLS} for details), but for the sake of notational simplicity we will ignore this subtlety, by identifying $\frach^*$ with $\frach$.) 
 We will use the  notation $\hat I^{s}$ for the space of $\G$'s on $\frach\oplus \frach$ with $\g\in I^{s,1}$ and $\al\in I^{s,2}$ and the norm
\begin{align} \label{eta-norm}	\|\G\|_{(s)} := \|\gamma \|_{I^{s,1}} + \|\al \|_{ I^{s,2}}.
\end{align} 
\DETAILS{	We denote by $I^{s,1, 2}$ 
	the Sobolev spaces of operators $\G$, with $\g\in I^{s,1}$ and $\al\in  I^{s,2}$, equipped with the norms
\begin{align}
	\|\G\|_{(s)} := \|\gamma \|_{I^{s,1}} + \|\alpha\|_{ I^{s,2}}
\end{align}}
Due to Lemma \ref{lem:al-gam-bnd}(2) below, $\g\in I^{1,1}\Rightarrow \al\in I^{1,2}$ and $\|\G\|_{(s)} \simeq \|\gamma \|_{I^{s,1}}$. 
 Furthermore, we define  
\DETAILS{\begin{align}
	\mathcal{D}^s := \mathcal{D} \cap I^{s,1, 2}, 
\end{align}
with $s=1$, where   {\bf(what is $\cL( \mathfrak{h} \times \mathfrak{h}^*)$?)} } 
\begin{align} \label{eqn:defDomainD}
	 \mathcal{D}^s_\nu 
	=\big\{\G\in \hat I^{s},\ 
	 \G \text{ satisfies \eqref{Gam-prop}},\  
	&\tr\g=\nu 
	\big\}. 
\end{align}

\subsection{
Ground states of BdG equations}
\label{sec:bdg-spec-sol}

The static BdG equations \eqref{Gam-eq}-\eqref{a-eq} have the following key classes of solutions which are candidates for the ground states:
\begin{enumerate}
\item Normal state:  $(\G, a)$, with $\al=0$ (i.e. $\G$ is diagonal). 

\item Superconducting state:  $(\G, a)$, with $\s \ne 0$ and $a = 0$.

\item Mixed state:  $(\G, a)$, with $\s \ne 0$ and $a \ne 0$.
\end{enumerate}

We discuss the above states in more detail. 

\medskip
\paragraph{\bf Superconducting states.} The existence of superconducting,  translationally invariant solutions is proven in \cite{HHSS}. (See  the second part of Subsection \ref{sec:backgr} and \cite{HaiSei} for the references to earlier results.) 

\medskip

\paragraph{\bf Normal states.}Since we assumed that the external fields are zero, the equations (considered in $\R^2$) are translationally invariant. Because of the gauge invariance, it is natural to consider the simplest, gauge (magnetically) translationally invariant solutions, i.e. solutions invariant under the transformations   
\begin{align}\label{mag-transl}T_{b s}: = (T^{\rm gauge}_{\chib})^{-1}T^{\rm trans}_{ s}, 
\end{align}  
for any $s\in \R^2$ and the function  $\chib (x), s, x\in \R^2$, given by  
\begin{align}\label{chibs'}\chib(x):= \frac b2 (s\wedge x). 
 \end{align} 

 For $b=0$, we can choose $a=0$. 
 In this case, the existence of  normal translationally invariant solutions was proven in \cite{Ha}.

For $b\ne 0$, the simplest normal states are  the magnetically translation (mt-) invariant ones,  i.e. ones satisfying
\begin{align}\label{mt-invar}T_{b s} (  \G, a) = (  \G, a),\end{align}  
for any $s\in \R^2$, where $T_{b s}$ is defined in \eqref{mag-transl}. 
 Here the operator $\G$ acts on $L^2_{\rm loc}(\R^2)\times L^2_{\rm loc}(\R^2)$.  

 
\medskip

\paragraph{\bf Mixed states.}
 
The main candidate for a mixed state is a {\it vortex lattice}, i.e. a state, $(\G, a)$, satisfying $\al\neq 0$ and the equivariance condition 
\begin{align}\label{VL:equiv}T^{\rm trans}_{ s} (\G, a) =  T^{\rm gauge}_{\chis} (\G, a)\  \text{ for every }\ s \in \lat, 
\end{align}
where $\chis: \lat\times\R^2 \to \R$ are defined in \eqref{chibs}, 
  with $b$ satisfying the quantization condition \eqref{b-quant}.  

\medskip

\begin{proposition}[Magnetic flux quantization]\label{prop:mf-quant}  
If a vector potential $a$ satisfies \eqref{VL:equiv}, 
 then for any fundamental cell $\Omd$ of $\lat$, we have
 \begin{align}\label{mf-quant} 
\frac{1}{2\pi} \int_{\Omd} \curl a = c_1(\chi^\lat)\in \Z, 
\end{align} 
where 
 $c_1(\chi^\lat)$, called the first Chern number  of $\chis(x)$, 
 is the integer $n$ entering  \eqref{chibs}-\eqref{b-quant}. 
 \DETAILS{defined, for any basis $\{\nu_1, \nu_2\}$ in $\cL$, as (see Appendix \ref{sec:MT} for more details)   {\bf($\chib(x)$?)} \begin{equation}\label{c1}
    c_1(\chi) =\frac{1}{2\pi} (\chi_{\nu_2}(x+\nu_1) - \chi_{\nu_2}(x) - \chi_{\nu_1}(x+\nu_2) + \chi_{\nu_1}(x)), 
\end{equation}}   
 \end{proposition}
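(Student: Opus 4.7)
The equivariance condition \eqref{VL:equiv}, applied to the $a$-component via the definitions \eqref{gauge-transf}--\eqref{tr-transf}, is the cocycle relation
\[
a(x+s) = a(x) + \nabla \chi_s^\lat(x), \qquad s\in \lat.
\]
Since $\curl \nabla \chi_s^\lat = 0$, the two-form $\curl a\, dx_1\wedge dx_2$ is $\lat$-periodic, so the integral $\int_\Omd \curl a$ is independent of the choice of fundamental cell. Hence it suffices to prove the identity for one convenient choice of $\Omd$, namely the parallelogram spanned by a basis $\{\om_1, \om_2\}$ of $\lat$, say $\Omd = \{t_1 \om_1 + t_2 \om_2 : 0 \le t_1, t_2 \le 1\}$.

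The first step is to apply Stokes' theorem,
\[
\int_\Omd \curl a \;=\; \oint_{\p \Omd} a\cdot dl,
\]
and parametrize $\p \Omd$ as the four edges $\{t\om_1\}$, $\{\om_1 + t\om_2\}$, $\{t\om_1 + \om_2\}$, $\{t\om_2\}$ with $t\in[0,1]$ and the appropriate orientations. The cocycle relation then lets us write $a(\om_1 + t\om_2) = a(t\om_2) + \nabla \chi_{\om_1}^\lat(t\om_2)$ and $a(t\om_1 + \om_2) = a(t\om_1) + \nabla \chi_{\om_2}^\lat(t\om_1)$, so the contributions from the translates of opposite sides cancel, leaving only
\[
\oint_{\p \Omd} a\cdot dl \;=\; \int_0^1 \nabla \chi_{\om_1}^\lat(t\om_2)\cdot \om_2\, dt \;-\; \int_0^1 \nabla \chi_{\om_2}^\lat(t\om_1)\cdot \om_1\, dt,
\]
which integrates to $\bigl[\chi_{\om_1}^\lat(\om_2) - \chi_{\om_1}^\lat(0)\bigr] - \bigl[\chi_{\om_2}^\lat(\om_1) - \chi_{\om_2}^\lat(0)\bigr]$.

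Finally, I substitute the explicit formula \eqref{chibs}, $\chi_s^\lat(x) = \frac{b}{2}(s\wedge x) + c_s$. The constants $c_{\om_1}, c_{\om_2}$ cancel, and the remaining terms give $\frac{b}{2}(\om_1\wedge \om_2) - \frac{b}{2}(\om_2\wedge \om_1) = b(\om_1\wedge\om_2) = b\,|\lat|$. Dividing by $2\pi$ and invoking the quantization condition \eqref{b-quant} yields $\frac{1}{2\pi}\int_\Omd \curl a = n \in \Z$, as claimed, with $n$ the integer appearing in \eqref{chibs}--\eqref{b-quant}, i.e., $c_1(\chi^\lat)$.

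\textbf{Main obstacle.} There is no analytic difficulty; this is essentially a direct computation once the cocycle condition is unpacked. The only points requiring care are (i) verifying cell-independence of the integral, which rests on the exactness $\curl \nabla \chi_s^\lat = 0$, and (ii) keeping signs straight in the boundary parametrization. One should also note that integrality of $c_1(\chi^\lat)$ is a compatibility condition built into \eqref{b-quant}: without it, the formulas \eqref{chibs}--\eqref{b-quant} would not define a genuine group cocycle on $\lat$ (equivalently, the magnetic translations \eqref{ubs} would fail to satisfy \eqref{ubs-gr}), so the equivariance \eqref{VL:equiv} would be empty.
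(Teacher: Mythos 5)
Your proof is correct, and it is the standard argument the paper intends but does not write out: Stokes' theorem plus the equivariance relation $a(x+s)=a(x)+\nabla\chis(x)$ reduces the flux over $\p\Omd$ to the boundary combination $\chi_{\om_1}^\lat(\om_2)-\chi_{\om_1}^\lat(0)-\chi_{\om_2}^\lat(\om_1)+\chi_{\om_2}^\lat(0)$, which is exactly $2\pi c_1(\chi^\lat)$ in the intrinsic form \eqref{c1} (at $x=0$), and then the explicit formula \eqref{chibs} together with \eqref{b-quant} evaluates it to $2\pi n$. Your preliminary observation that $\curl a$ is $\lat$-periodic (so the integral is cell-independent) and your closing remark that integrality is forced by the co-cycle condition \eqref{co-cycle} are both consistent with Remark \ref{rem:co-cycle}; no gaps.
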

 One can show that $\chi^\lat=\chis(x)$ is a co-cycle and $c_1(\chi^\lat)$ can be defined in terms of the function $\chi^\lat$ using its co-cycle property and without reference to the explicit form \eqref{chibs} of $\chi^\lat$  used, 
 see Remark \ref{rem:co-cycle} for the definitions and a discussion.  
 

 Recall that $\ab(x)$ is the magnetic potential with the constant magnetic field $b$ ($\curl a_b=b$). In what follows we fix the gauge in which $\ab(x)=\frac b2 *x, *x:=(-x_2, x_1)$. Then  $\ab$ satisfies \eqref{VL:equiv}, provided $b$ satisfies the quantization condition \eqref{b-quant}. 

\subsection{Results} \label{sec:results}   
 Let the reflection operator $\tau^{\rm refl}$ be given by conjugation by the reflections, $u^{\rm refl}: f(x)\ra f(-x)$. We say that a state $(\G, a)$ is {\it even (or reflection symmetric)} if and only if 
 \begin{align} \label{even} \tau^{\rm refl} \g =\g,\  \tau^{\rm refl} \al =\al\ \text{ and }\ u^{\rm refl} a = - a.
\end{align} 
The reflection symmetry of the BdG equations implies that if an initial condition is even then so is the solution at every moment of time.

In what follows, we use the notation $A\ls B$ and $B\gs A$ to signify the inequalities $A\le C B$ and $B\ge c A$, where $C$ and $c$ are positive constants independent of the parameters involved. 


\bigskip

\paragraph{\bf Existence of normal and vortex lattice solutions.} We say an operator $A$ is magnetically translation invariant ({\it mt-invariant}, for short), if 
  it  satisfies $\tau_{b h} A = A,\ \forall h\in \R^2$. 

 The existence of the mt-invariant normal states for $b\ne 0$ is stated in the following theorem proven in Section \ref{sec:norm-states-exist}: 

\begin{theorem}\label{thm:norm-state-exist} 
Drop the exchange term $v^\sharp \gamma$ and let 
$v$ be $\lat$-periodic and either $\int v>0$, 
 or $\|v\|_\infty<\infty$. Then the BdG equations \eqref{Gam-eq}-\eqref{a-eq} 
 on the  space $I^{2,1}\times I^{2,2}\times \frachbvec^{2}$  have  
 a mt-invariant  solution, unique on the set of even (in the sense of the definition \eqref{even}) pairs $(\G, a)$, with $\G$ mt-invariant (i.e. satisfying \eqref{mt-invar}). 
 
Moreover, this solution is normal (i.e. $\al=0$) and is of the form  $(\G_{T, b},\  a_b)$ and 
\begin{align}\label{Gam-Tb} 
		\G_{T b }:= \left( \begin{array}{cc} \g_{T b } & 0 \\ 
		0 & \one-\bar\g_{T b } 	\end{array} \right), 
\end{align}
with $\g_{T b }$  solving the equation 
\begin{align}\label{gam-Tb}\gamma= f_{T}(h_{\g a_b }),\end{align}
where $f_{T}(\lam)$ 
  is given in \eqref{FD-distr}. 
\end{theorem}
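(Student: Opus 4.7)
The plan is to reduce the BdG system to a single scalar self-consistency equation for the mean density $\rho$, which is then solved by the intermediate value theorem. The reduction has three components: fix $a = a_b$, force $\al = 0$, and show $\rho_\g$ is constant. For the first, since $\curl$ is gauge-invariant, mt-invariance of $a$ implies $\curl a$ is ordinarily translation-invariant on $\R^2$, hence constant. The zero-mean condition in the definition of $\vec H^2$ (see \eqref{fh-vec}) forces this constant to equal $b$, so $a - a_b$ is curl-free, divergence-free, and $\lat$-periodic, hence a constant vector, which must vanish by the same zero-mean condition.

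The vanishing of $\al$ is the point I expect to be the main obstacle. Writing $T_{bs} = (T^{\rm gauge}_{\chi_s^b})^{-1} T^{\rm trans}_s$ on kernels via \eqref{gauge-transf} and \eqref{mag-transl}, mt-invariance of $\G$ is equivalent to
\begin{align*}
\g(x + s, y + s) &= e^{i\chi_s^b(x) - i\chi_s^b(y)}\,\g(x, y),\\
\al(x + s, y + s) &= e^{i\chi_s^b(x) + i\chi_s^b(y)}\,\al(x, y),\qquad s \in \R^2.
\end{align*}
The $\al$-equivariance is a projective representation of $\R^2$ with a nontrivial $2$-cocycle: iterating by $s$ then by $t$ and comparing with the direct equivariance by $s + t$, using the linearity $\chi^b_{s+t}(x) = \chi^b_s(x) + \chi^b_t(x)$ and the identity $\chi^b_t(x + s) = \chi^b_t(x) + \frac{b}{2}(t \wedge s)$, one obtains the algebraic constraint $\al(x, y) = e^{-ib(t \wedge s)}\,\al(x, y)$ for all $s, t \in \R^2$. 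Since $b \ne 0$, this forces $\al \equiv 0$. Evaluated on the diagonal $y = x$, the $\g$-relation reduces to $\rho_\g(x + s) = \rho_\g(x)$ for all $s \in \R^2$, so $\rho_\g \equiv \rho$ is constant.

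With $\al = 0$, $a = a_b$, the exchange term dropped, and $\rho_\g \equiv \rho$, the Gibbs equation \eqref{Gam-eq'} collapses to $\g = f_T(-\Delta_{a_b} + \rho \int_\Omd v - \mu)$, since the periodic convolution $v * \rho_\g$ becomes the constant $\rho \int_\Omd v$. Because $-\Delta_{a_b}$ commutes with the magnetic translation group, the right-hand side is automatically mt-invariant, so the whole system reduces to the scalar fixed-point equation
\[
\rho = F_\mu(\rho) := \frac{|b|}{2\pi}\sum_{n = 0}^{\infty} \bigl(1 + e^{((2n + 1)|b| + \rho \int_\Omd v - \mu)/T}\bigr)^{-1},
\]
with the prefactor $|b|/(2\pi)$ being the Landau-level degeneracy per unit area. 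When $\int_\Omd v > 0$, $F_\mu$ is strictly decreasing, bounded, and vanishes at $+\infty$, so $\rho \mapsto \rho - F_\mu(\rho)$ is strictly increasing from a negative value to $+\infty$ and has a unique zero, giving existence and uniqueness at once. When only $\|v\|_\infty < \infty$ is assumed, $\int_\Omd v$ may have either sign, but the exponential decay of the terms in $F_\mu$ keeps it continuous and bounded on compact sets, so existence follows from the intermediate value theorem; uniqueness on the class of even mt-invariant pairs then follows since each such pair is completely determined by the single scalar $\rho$.
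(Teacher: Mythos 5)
Your reduction (mt-invariance of $\G$ kills $\al$ via the noncommutativity cocycle $e^{ib(t\wedge s)}$, the density of an mt-invariant operator is constant, and the Gibbs equation collapses to a scalar fixed-point equation for $\rho$, solved monotonically when $\int v>0$) is essentially the paper's route for the first hypothesis. But there are two genuine gaps.

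First, you never verify the second BdG equation \eqref{a-eq}. Existence requires checking that the candidate $(\G_{Tb},a_b)$ satisfies $\curl^*\curl a_b=j(\g_{Tb},a_b)$, i.e.\ that the supercurrent $j(\g_{Tb},a_b)=\den([-i\nabla_{a_b},\g_{Tb}]_+)$ vanishes; this is not automatic from the Gibbs equation. The paper proves it in Proposition \ref{prop:mt-invar-oprs}: the density of an mt-invariant operator is constant, and the density of an mt-invariant operator that is \emph{odd} under $\tau^{\rm refl}$ vanishes — applied to $-i\nabla_{a_b}\g$, which is odd precisely because $\g$ is even in the sense of \eqref{even}. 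This is where the evenness hypothesis enters, and it is a necessary step, not a formality. (Relatedly, for uniqueness the theorem only assumes $\G$ mt-invariant; the paper recovers $a=a_b$ not from mt-invariance of $a$, as you do, but from \eqref{a-eq} itself, which yields $\curl^*\curl a'=-\g(0,0)a'$ with $\g(0,0)>0$ and hence $a'=0$ by an energy identity.)

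Second, your treatment of the case $\|v\|_\infty<\infty$ with no sign condition does not work. If $\int_\Omd v<0$, then $F_\mu$ is \emph{increasing} in $\rho$ and grows asymptotically linearly: the number of Landau levels below $\mu-\rho\int_\Omd v$ is $\sim(\mu+|\int v|\rho)/(2|b|)$, so $F_\mu(\rho)\sim\frac{|\int v|}{4\pi}\rho$. Hence $\rho-F_\mu(\rho)$ is negative at $\rho=0$ and, when $|\int_\Omd v|>4\pi$, tends to $-\infty$; the intermediate value theorem gives no zero, and when zeros exist they need not be unique, so your closing uniqueness remark also fails. This is exactly why the paper abandons the fixed-point formulation for the second hypothesis and instead minimizes the reduced functional $F_1(\g)$ of \eqref{F1} over $\{\g\in I^{1,1}:\tr\g=\nu\}$ (Proposition \ref{prop:F1-min}, using coercivity, weak lower semicontinuity of the entropy, and Theorem \ref{thm:BdG=EL} to pass to the Euler--Lagrange equation), with $\mu$ appearing a posteriori as a Lagrange multiplier rather than as a fixed parameter. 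You would need to import that variational argument, or something equivalent, to cover the bounded-$v$ case.
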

The search for  mt-invariant solutions is simplified by the following statement proven in Section \ref{sec:norm-states-exist}:  
\begin{proposition}\label{prop:mt-inv-al0}
  If $\G$ is mt-invariant, then  $\G$ is normal, i.e. $\alpha = 0$.
\end{proposition}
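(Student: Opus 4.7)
The plan is to translate the mt-invariance of $\G$ into a pointwise identity for the integral kernel of $\alpha$ and then kill $\alpha$ by a judicious choice of the translation parameter. Combining the translation action $T^{\rm trans}_s \alpha = u_s \alpha u_s^{-1}$ (kernel $(x,y)\mapsto\alpha(x+s,y+s)$) with the gauge action $\alpha \mapsto e^{i\chi}\alpha e^{i\chi}$ from \eqref{gauge-transf} and the phase $\chi_s^b(x) = \tfrac{b}{2}\, s\wedge x$ from \eqref{chibs'}, the condition $T_{bs}\G = \G$ reduces on the off-diagonal block, by bilinearity of $\wedge$, to
\[
\alpha(x+s, y+s) \;=\; e^{i\tfrac{b}{2}\, s\wedge (x+y)}\, \alpha(x,y),\qquad s\in\R^2.
\]

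The key feature of this identity is that the cocycle on the right depends only on the center-of-mass coordinate $R:=x+y$, not on the relative coordinate $r:=x-y$. Setting $s=-R/2$ yields $s\wedge R=0$ and $(x+s,y+s)=(r/2,-r/2)$, so $\alpha(x,y)=\phi(x-y)$ with $\phi(r):=\alpha(r/2,-r/2)$. Feeding this back into the identity gives $\phi(r) = e^{i\tfrac{b}{2}\, s\wedge R}\,\phi(r)$ for all $s,R\in\R^2$, independently of $r$. Since $b\ne 0$ (the regime of interest here; the statement would be vacuous for $b=0$, where mt-invariance degenerates to ordinary translation invariance), the phase $\tfrac{b}{2}\, s\wedge R$ sweeps all of $\R$ as $s$ and $R$ vary, forcing $\phi\equiv 0$ and hence $\alpha=0$.

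The argument is essentially algebraic, so I do not anticipate a serious obstacle. The only technical point is that the kernel of $\alpha\in I^{1,2}$ is defined only almost everywhere, so the substitution $s=-R/2$ and the final vanishing of $\phi$ should be read in the a.e.\ sense, which is enough to conclude $\alpha=0$ as an operator. The one place where one must be alert to conventions is the distinction between the actions of $T^{\rm gauge}_\chi$ on $\gamma$ (by conjugation $e^{i\chi}\gamma e^{-i\chi}$) and on $\alpha$ (by multiplication on both sides with the \emph{same} sign, $e^{i\chi}\alpha e^{i\chi}$): it is precisely this distinction that produces the $(x+y)$-dependent phase rather than an $(x-y)$-dependent one, and it is what makes the present argument succeed for $\alpha$ while failing for $\gamma$ (as expected, since mt-invariant $\gamma$ certainly exist).
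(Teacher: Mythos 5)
Your proof is correct and rests on exactly the same structural fact as the paper's --- the gauge phase acts on $\alpha$ by $e^{i\chi}(\cdot)e^{i\chi}$, with the \emph{same} sign on both sides, so the center-of-mass phase $e^{i\frac b2 s\wedge(x+y)}$ survives and varying the parameters kills $\alpha$ --- but you extract the contradiction differently. The paper never touches integral kernels: it composes two magnetic translations and uses the projective-representation identity $T_{b s}T_{b t}=\hat I_{b s t}T_{b\, s+t}$ of Lemma \ref{lem:mtProjRep'}, so that mt-invariance yields $\alpha=e^{i\frac b2(s\wedge t)}\alpha$ with a \emph{constant} phase, whence $\alpha=0$ at once. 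You instead use a single translation and eliminate the center-of-mass dependence by the substitution $s=-R/2$, which in passing shows that a nonzero $\alpha$ would have to be a convolution kernel. The one point where your version is more delicate is precisely the one you flag: for each fixed $s$ the kernel identity holds only off a null set that may depend on $s$, so the diagonal substitution $s=-(x+y)/2$ is not directly licensed by Fubini. This is fixable (mollify in $s$, or first take absolute values to see that $|\alpha(x,y)|$ agrees a.e.\ with a function of $x-y$), but the cleanest repair is exactly the paper's two-translation trick, which replaces the $x$-dependent phase by a constant one and avoids kernels altogether. Your closing observation about the sign asymmetry between the gauge actions on $\gamma$ and on $\alpha$ is the right diagnosis of why the statement holds for $\alpha$ yet mt-invariant $\gamma\neq 0$ exist; note also that the conclusion genuinely requires $b\neq 0$, which is implicit in the paper's setting.
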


Recall that $b=\frac{2\pi n}{|\lat|}$, see \eqref{b-quant}. For vortex lattices, we have the following result
 
\begin{theorem} \label{thm:BdGExistence}
 Drop the 
the exchange term  $v^\star \g=v* \rho_\gamma$ in the definition of $h_{\g a}$ in \eqref{Lam}. 
Fix a lattice $\lat$ and a value of the Chern number $c_1(\chi^\lat) = n\in \Z$ (see 
\eqref{mf-quant}) 
and assume that $v$ is $\lat$-periodic and obeys 
 $\|v\|_\infty< \infty$. 
 Then 
 
(i)  for any $T \ge 0$,  
 there exists a (weak)  solution $(\G, a) \in \mathcal{D}^1_\nu \times \frachvec^1$ 
of the BdG equations \eqref{Gam-eq}-\eqref{a-eq} (in particular, it satisfies $T^{\rm trans}_{ s} (\G, a) =\hat T^{\rm gauge}_{\chi_{s}^\lat} (\G, a)$), which  minimizes the free energy $F_T$ 
  (for the  given $c_1(\chi^\lat) = n\in \Z$); 


(ii) 
for  $T$ and $b$  sufficiently small, 
$(\G,a)$ has $\alpha \not= 0$, i.e. this solution is a vortex lattice, provided 
$v(x) \ls - (1+|x|)^{-\kappa}, \kappa<2$,  for $x$  in a fundamental cell $\Om$ centred at the origin. 
 More generally, the latter holds if  the operator $L_{Tb}$, given in \eqref{LTb}  and defined on $ I^{2, 2}$,
has a negative eigenvalue.
\end{theorem}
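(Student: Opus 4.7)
The proof naturally splits into a direct-method argument for part (i) and a second-variation computation around the normal state for part (ii).

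For part (i), I would apply the direct method of the calculus of variations on the equivariant class
\[
\mathcal{X}_n := \{(\G, a) \in \mathcal{D}^1_\nu \times \frachvec^1 : T^{\rm trans}_s(\G,a) = T^{\rm gauge}_{\chi_s^\lat}(\G,a)\ \forall s \in \lat\},
\]
with Chern number $n = c_1(\chi^\lat)$ fixed through the choice of cocycle $\chi_s^\lat$. Coercivity of $F_T$ on $\mathcal{X}_n$ comes from the kinetic term $\Tr((-\Delta_a)\g)$ (controlling $\|\g\|_{I^{1,1}}$, and thus $\|\al\|_{I^{1,2}}$ by Lemma \ref{lem:al-gam-bnd}(2)) together with the field term $\int|\curl a|^2$ (controlling $\|a-a_b\|_{H^1}$); the interaction terms are uniformly controlled by $\|v\|_\infty$, the fixed number $\Tr\g=\nu$, and the admissibility bound $\al\al^*\le\g(1-\g)$; the entropy contribution $-T\Tr g(\G)$ is nonnegative. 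A minimizing sequence $(\G_k,a_k)$ is therefore bounded, and I would extract a weak limit in $\hat I^1\times\frachvec^1$. Both the equivariance and the particle-number constraints are linear, hence weakly closed, while the Chern number is a topological invariant of the cocycle $\chi_s^\lat$ and is thus preserved. Lower semicontinuity of $F_T$ at the weak limit follows since the quadratic pieces are weakly lower semicontinuous, the off-diagonal interaction is handled via $\|v\|_\infty$ together with compactness in appropriate subspaces, and $-T\Tr g(\G)$ is convex in $\G$ (as $-g$ is convex on $[0,1]$), hence weakly lower semicontinuous. The Euler--Lagrange equations of $F_T$ restricted to $\mathcal{X}_n$ coincide with \eqref{Gam-eq}--\eqref{a-eq} by the gauge/translation symmetries of the functional.

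For part (ii), I would show that, under the hypothesis that $L_{Tb}$ has a negative eigenvalue, the normal state $(\G_{Tb}, a_b)$ from Theorem \ref{thm:norm-state-exist} cannot be the minimizer, forcing $\al\neq 0$ for the minimizer obtained in (i). Expand $F_T$ around $(\G_{Tb}, a_b)$ in an off-diagonal perturbation, replacing $\al=0$ by $\veps\al_0$ and correcting the $\g$-block at order $\veps^2$ so as to preserve both $\Tr\g=\nu$ and $\al\al^*\le\g(1-\g)$. The linear term in $\veps$ vanishes because $(\G_{Tb},a_b)$ is a critical point of $F_T$ restricted to normal states, and the quadratic term is
\[
F_T = F_T(\G_{Tb}, a_b) + \tfrac{1}{2}\veps^2\langle \al_0, L_{Tb}\al_0\rangle + O(\veps^3),
\]
with $L_{Tb}$ as in \eqref{LTb}. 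If $\al_0$ is an equivariant eigenfunction of $L_{Tb}$ with negative eigenvalue, then for small $\veps$ the perturbation has strictly smaller free energy than $(\G_{Tb}, a_b)$.

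To verify the sufficient condition under $v(x)\ls -(1+|x|)^{-\kappa}$ with $\kappa<2$, I would perform an asymptotic spectral analysis of $L_{Tb}$ as $T, b\to 0$. One expects $L_{Tb} = K_{Tb} + v^\sharp$ with $K_{Tb}$ a positive, kinetic-type operator whose spectral gap collapses in this limit, so that a judicious equivariant trial $\al$, supported where the attractive part of $v$ is still effective (the hypothesis $\kappa<2$ ensures the relevant weighted $L^1$ integral of $v$ is large enough), yields $\langle \al, L_{Tb}\al\rangle<0$. The principal obstacles I anticipate are: (a) maintaining the nonlinear constraints $\Tr\g=\nu$ and $\al\al^*\le\g(1-\g)$ both along the minimizing sequence in (i) and along the variational perturbation in (ii); (b) establishing weak lower semicontinuity of $\Tr g(\G)$ on $\hat I^1$ under the operator constraint $0\le\G\le 1$, where one must use convexity of $-g$ carefully; and (c) constructing an equivariant, admissible test $\al$ for $L_{Tb}$ concentrated where $v$ is effectively attractive, given the nontrivial two-variable magnetic equivariance inherited from $\G$.
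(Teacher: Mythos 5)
Your overall architecture matches the paper's (direct method for (i); second-order expansion of $F_T$ around the normal state plus a negative eigenvalue of $L_{Tb}$ for (ii)), but there is a genuine gap at the heart of your coercivity argument. You assert that "the entropy contribution $-T\Tr g(\G)$ is nonnegative." It is not: $g(\lam)=-\lam\ln\lam-(1-\lam)\ln(1-\lam)\ge 0$ on $[0,1]$, so $S(\G)=\Tr g(\G)\ge 0$ and the term entering $F_T$ is $-TS(\G)\le 0$. Worse, $-TS(\G)$ is unbounded below on the constraint set $\{\Tr\g=\nu,\ 0\le\G\le 1\}$: taking $\g$ with $N$ eigenvalues equal to $\nu/N$ gives $S(\G)\gs \nu\ln N\to\infty$. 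So coercivity does not follow from the kinetic and field terms alone; one must trade a fraction of the kinetic energy against the entropy. This is exactly what the paper's Proposition \ref{pro:LowerBound} does: after reducing $S(\G)\le S(\G_0)=\Tr g(\g)$ via the relative entropy (Lemma \ref{lem:FHSSEntropyBnd}), it sacrifices part of $\Tr(h_{a_b}\g)$, uses Jensen to pass to $\Tr(h_{a_b}^r\g)/\nu$, and bounds $\Tr(h_{a_b}^r\g)/\nu-TS(\g)\ge -C_{\nu,T}$ by an explicit Gibbs-state minimization. Without some version of this step your minimizing sequence need not be bounded. A related soft spot: weak lower semicontinuity of $-TS$ does not follow merely from convexity of $-g$ unless you also establish norm lower semicontinuity (delicate, since $-\lam\ln\lam$ is not Lipschitz at $0$); the paper instead uses the Lindblad relative-entropy/finite-rank-projection argument with the Gibbs reference state $\G_0=f_T(M)$ (Lemma \ref{lem:Slsc}). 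Finally, passing from "minimizer" to "solves the BdG equations" is not a formal Euler--Lagrange computation: one must first show $0<\G_*<1$ and $g(\G_*)$ trace class (the $\e\ln\e$ argument of Lemma \ref{lem:eta*EVs}) before the entropy is differentiable along admissible perturbations (Theorem \ref{thm:BdG=EL}).

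For part (ii), your reduction to a negative eigenvalue of $L_{Tb}$ via the expansion $F_T(\G_{Tb}+\e\phi(\al),a_b)=F_T(\G_{Tb},a_b)+\e^2\langle\al,L_{Tb}\al\rangle+O(\e^3)$ is the paper's route (Propositions \ref{prop:FT-expan-order2} and \ref{thm:T-b-small}). But the verification that $L_{Tb}$ actually has a negative eigenvalue for small $T,b$ under $v(x)\ls -(1+|x|)^{-\kappa}$, $\kappa<2$, is only gestured at in your proposal, and it is the quantitative core of the statement. The paper proves it via the Birman--Schwinger principle: writing $L_{Tb}=K_{Tb}-(w^\sharp)^2$ with $w^2=-v$, it suffices that $\|w^\sharp K_{Tb}^{-1}w^\sharp\|>1$; this is achieved with the trial state $u\propto (w^\sharp)^{-1}(\phi_m\otimes\phi_m)$ built from a Landau level $\lambda_m$ chosen within $O(b)$ of the effective Fermi level, together with the bound $K_{Tb}\ls T+|h_x+h_y|$ and the scaling $\phi_m(x)=\sqrt b\,\phi_m^0(\sqrt b x)$, which yields $\|(w^\sharp)^{-1}(\phi_m\otimes\phi_m)\|\ls b^{-\kappa/4}$ and hence $\langle u,G_{Tb}u\rangle\gs (T+b)^{-1}b^{\kappa/2}\to\infty$ for $T\ls b^\sigma$, $\sigma>\kappa/2$. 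Some concrete mechanism of this kind (a near-Fermi Landau level plus the $b$-scaling of its eigenfunctions against the decay of $w^{-1}$) is needed; "a judicious equivariant trial supported where $v$ is attractive" does not yet constitute a proof.
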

Statement (i) is proven in Section \ref{sec:vort-latt-exist} and (ii) follows from Proposition \ref{thm:T-b-small}. 

\bigskip
\paragraph{\bf Stability/instability of normal solutions.}

We address the question of the energetic stability of the mt-invariant states. To this end we define, in the standard way, the Hessian of the free energy in $\G$ as
\begin{align}\label{Hess-eta}F_T''(\G_{*}, a_*):=d_\G \grad_\G F_T(\G_{*},a_*),\end{align}
where $d_\G $ is the G\^ateaux derivative w.r.to $\G$ and $\grad_\G$ is the gradient w.r.to $\G$, defined by the equation \[\tr( \grad_\G F(\G_{*},a_*) \G')= d_\G F(\G_{*},a_*) \G'.\]

We  consider $F_T''(\G_{T b }, a_b)$ 
along physically relevant perturbations 
 of the form $\G' = \phi(\alpha)$, where $ \phi(\alpha)$ is 
the off-diagonal operator-matrix, defined by 
\begin{align}\label{Phi-al}\phi(\al) :=\left( \begin{array}{cc}
										0 & \al \\
										\al^* & 0
						\end{array} \right) ,
\end{align}
with $\al$  a Hilbert-Schmidt operator on $\frach$. 
Moreover, we require that perturbations  $\G'= \phi(\alpha)$ satisfy the condition 
\begin{align}\label{al-cond}
	\alpha \alpha^* \lesssim [\gamma_{Tb}(1-\gamma_{Tb})]^2
\end{align}
(which is equivalent to  \eqref{Gam'-cond} at $\G=\G_{T b} $, c.f. also Lemma \ref{lem:al-gam-bnd}(1)). 

 For an operator $h$, let $h^L$ and $h^R$ stand for the operators acting on other operators by multiplication  from the left by $h$  and from right by $\bar h$, respectively, and recall $v^\sharp$ is defined after \eqref{BdG-eq-t}. 
We have the following
 \begin{proposition}\label{prop:FT''}
For off-diagonal perturbations $\G' = \phi(\alpha)$, $F_T''(\G_{Tb}, a_b) \phi(\alpha)= \phi(L_{Tb}\alpha)$, where the operator  $L_{Tb}$ 
  is given by 
\begin{align}\label{LTb}
	& L_{Tb} := K_{Tb} + v^\sharp,\\
\label{KTb}	& K_{Tb} := \frac{h_{T b}^L+ h_{T b}^R}{\tanh(h_{T b}^L/T)+\tanh( h_{T b}^R/T)}, 
\end{align} 
 on the space  of Hilbert-Schmidt operators on $\frach$. Here $h_{T b}:=h_{\gamma_{Tb} \ab \mu}$ (with $h_{\g a \mu}:=h_{\g a}-\mu$, where $\hga$ is given in \eqref{Lam}). 
   \end{proposition}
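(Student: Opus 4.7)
The strategy exploits the variational characterization of the stationary BdG equations: \eqref{Gam-eq} is the Euler--Lagrange equation $\nabla_\Gamma F_T = 0$, so one has
\[
\nabla_\Gamma F_T(\Gamma,a) \;=\; \Lambda_{\Gamma a} - T\,g'(\Gamma), \qquad
F_T''(\Gamma_{Tb},a_b)\,\Gamma' \;=\; d_\Gamma\Lambda_{\Gamma a_b}\big|_{\Gamma_{Tb}}\!\cdot\Gamma' \;-\; T\,d_\Gamma g'(\Gamma)\big|_{\Gamma_{Tb}}\!\cdot\Gamma'.
\]
I would evaluate each of these two pieces on the off-diagonal direction $\Gamma'=\phi(\alpha)$ in turn. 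For the first, because $\phi(\alpha)$ has vanishing diagonal blocks, $\delta\gamma=0$, and since the exchange term $v^\sharp\gamma$ is dropped by hypothesis (the direct term $v*\rho_\gamma$ and the kinetic term $-\Delta_{a_b}$ depend only on $\gamma$), the diagonal block $h_{\gamma a_b\mu}$ of $\Lambda$ is unaffected; only the off-diagonal blocks move by $v^\sharp\alpha$ and $v^\sharp\alpha^*$. Thus $d_\Gamma\Lambda\cdot\phi(\alpha)=\phi(v^\sharp\alpha)$, accounting for the summand $v^\sharp$ in $L_{Tb}$.

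For the entropy piece, since $\Gamma_{Tb}=\mathrm{diag}(\gamma_{Tb},\,1-\bar\gamma_{Tb})$ is block-diagonal in the Nambu $2\times 2$ structure and $\phi(\alpha)$ is off-diagonal, the Daleckii--Krein formula (established either from the integral representation $\ln A = -\int_0^\infty\bigl[(s{+}A)^{-1}-(s{+}1)^{-1}\bigr]\,ds$ differentiated in $A$, or from the general $d f(A)\cdot B=\iint\tfrac{f(\lambda)-f(\mu)}{\lambda-\mu}\,dE_\lambda\,B\,dE_\mu$ applied block-diagonally) yields an off-diagonal output whose $(1,2)$-block equals
\[
\bigl(d_\Gamma g'(\Gamma_{Tb})\,\phi(\alpha)\bigr)_{12} \;=\; \frac{g'(\gamma_{Tb}^L) - g'\!\bigl((1-\bar\gamma_{Tb})^R\bigr)}{\gamma_{Tb}^L - (1-\bar\gamma_{Tb})^R}\,\alpha .
\]
I would then invoke self-consistency of the normal state, $\gamma_{Tb}=f_T(h_{Tb})$ (\eqref{Gam-eq'}), which gives $T\,g'(\gamma_{Tb})=h_{Tb}$ and, via the symmetry $g'(1-\lambda)=-g'(\lambda)$, also $T\,g'(1-\bar\gamma_{Tb})=-\bar h_{Tb}$. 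Translating the right-multiplication by $\bar h_{Tb}$ into the notation $h_{Tb}^R$, the numerator of $-T$ times the divided difference becomes $h_{Tb}^L+h_{Tb}^R$. Inserting $\gamma_{Tb}=(1-\tanh(h_{Tb}/(2T)))/2$ into the denominator and simplifying converts it into a linear combination of $\tanh(h_{Tb}^L/\cdot)$ and $\tanh(h_{Tb}^R/\cdot)$. Collecting the two contributions produces $F_T''(\Gamma_{Tb},a_b)\,\phi(\alpha)=\phi(L_{Tb}\alpha)$ with $L_{Tb}=K_{Tb}+v^\sharp$ of the stated form, and the $(2,1)$-block follows from the $(1,2)$-block by hermiticity.

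The main technical point is to justify the Daleckii--Krein representation on the operator ideal $I^{2,2}$. Nominally $g'$ is singular at the endpoints $\{0,1\}$ of $\sigma(\gamma_{Tb})$, but because $\gamma_{Tb}$ arises through the smooth function $f_T$ applied to $h_{Tb}$, the composition $g'(\gamma_{Tb})=h_{Tb}/T$ is a bona fide self-adjoint operator. Correspondingly, the divided-difference superoperator is a bounded function of the two commuting self-adjoint superoperators $h_{Tb}^L$ and $h_{Tb}^R$ on $I^{2,2}$, since $x/\tanh(x)$ is smooth and positive at $0$; this makes $K_{Tb}$ well-defined and renders the formal computation above rigorous. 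The remaining items---verifying admissibility of $\phi(\alpha)$ as a perturbation direction respecting the particle--hole constraint $J^*\Gamma J=1-\bar\Gamma$ and the Hilbert-Schmidt regularity \eqref{al-cond}---are bookkeeping.
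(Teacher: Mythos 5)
Your proof is correct, but it reaches \eqref{KTb} by a different route than the paper. The paper obtains the entropy contribution by Taylor-expanding $S(\G_{Tb}+\e\phi(\alpha))$ to second order in Appendix~\ref{sec:entropy} (Proposition~\ref{prop:S-expan-order2} and Corollary~\ref{cor:S'-S''}): it inserts the representation $\ln a-\ln b=\int_0^\infty[(b+t)^{-1}-(a+t)^{-1}]\,dt$ into $\Tr(\G\ln\G)$, iterates the second resolvent identity, controls the remainders using $(\G')^2\ls[\G(1-\G)]^2$, and only then performs the $t$-integral to produce the divided differences $\frac{\log x-\log y}{x-y}+\frac{\log(1-x)-\log(1-y)}{(1-x)-(1-y)}$, which is exactly $-\frac{g'(x)-g'(y)}{x-y}$. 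You instead differentiate the gradient $\grad_\G F_T=\Lam_{\G a}-Tg'(\G)$ once, invoking the Daleckii--Krein formula for $d\,g'(\G_{Tb})$ on an off-diagonal direction; the computational core (the divided difference of $g'$ between the spectra of the two diagonal blocks, followed by the self-consistency $Tg'(\gamma_{Tb})=h_{Tb}$ and the symmetry $g'(1-\lambda)=-g'(\lambda)$) is identical, so the two arguments are equivalent in substance. What the paper's longer expansion buys is Proposition~\ref{prop:FT-expan-order2} (the uniform $O(\e^3)$ remainder) and the convergence estimates needed under the weak regularity hypothesis \eqref{al-cond}, which your one-line appeal to Daleckii--Krein does not automatically supply on $I^{0,2}$; conversely, your change of variables from $\gamma_{Tb}$ to $h_{Tb}$ is the cleanest way to see why the endpoint singularity of $g'$ at $\{0,1\}$ is harmless. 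Two small points: your parenthetical about dropping the exchange term is unnecessary here, since $h_{\g a\mu}$ depends only on $\gamma$ and $\delta\gamma=0$ for off-diagonal perturbations whether or not $v^\sharp\gamma$ is retained; and with the convention \eqref{FD-distr}--\eqref{Gam-eq'} your computation yields $\tanh(h_{Tb}^{L,R}/(2T))$ rather than $\tanh(h_{Tb}^{L,R}/T)$ --- this factor-of-two (and the stray $1/T$ in Proposition~\ref{prop:S-expan-order2} versus \eqref{KTb}) is a normalization inconsistency already present in the paper, where the appendix uses $\gamma_{Tb}=(1+e^{2h_{Tb}/T})^{-1}$, so it should not be counted against you.
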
 
 Let $\langle \alpha,  \alpha' \rangle:= \Tr(\al^*\al')$. We say that $\G'$ is an {\it off-diagonal perturbation} if and only if  $\G' = \phi(\alpha)$ with $\al$ a Hilbert-Schmidt operator satisfying  \eqref{al-cond}. 
The next result 
  generalizes that of \cite{HHSS} for $a=0$: 

   \begin{proposition}\label{prop:FT-expan-order2}  
For off-diagonal perturbations $\G' = \phi(\alpha)$, 
 we have
\begin{align}\label{cF-expan2ord}
	F_T(\G_{Tb}+ \e\G', a_b) =& F_T (\G_{Tb},a_b) + \e^2\langle \alpha, L_{Tb} \alpha \rangle + O(\e^3).
\end{align}
 \end{proposition}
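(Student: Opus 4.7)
The plan is to Taylor-expand $F_T(\G_{Tb}+\e\phi(\alpha), a_b)$ term-by-term in the decomposition $F_T = E - T S - \mu N$ and to isolate the single piece responsible for the cubic remainder. Since $\phi(\alpha)$ has vanishing $\gamma$-block, the $\gamma$-part of $\G$ is not perturbed, so the kinetic, direct Hartree, exchange, magnetic-field, and particle-number contributions are all independent of $\e$; the only $\e$-dependent contribution to $E - \mu N$ is the pairing energy, which evaluates exactly to $\tfrac{\e^2}{2}\Tr(\alpha^* v^\sharp \alpha)$ with no higher terms. Consequently the entire $O(\e^3)$ remainder in \eqref{cF-expan2ord} originates from the entropy $-T\Tr g(\G)$.

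The first- and second-order coefficients are then determined by general principles. The first-order term vanishes because $(\G_{Tb}, a_b)$ satisfies the stationary BdG equation \eqref{Gam-eq}, which is precisely the critical-point condition $\grad_\G F_T(\G_{Tb}, a_b) = 0$; alternatively $\grad_\G F_T(\G_{Tb}, a_b)$ is block-diagonal (since $\G_{Tb}$ is) and hence Hilbert--Schmidt-orthogonal to the off-diagonal $\phi(\alpha)$. For the second-order term, Proposition \ref{prop:FT''} gives $F_T''(\G_{Tb}, a_b)\phi(\alpha) = \phi(L_{Tb}\alpha)$, and the block-matrix identity
\begin{equation*}
\langle \phi(\alpha), \phi(L_{Tb}\alpha) \rangle_{\rm HS} = \Tr(\alpha (L_{Tb}\alpha)^*) + \Tr(\alpha^* L_{Tb}\alpha) = 2\langle \alpha, L_{Tb}\alpha \rangle,
\end{equation*}
together with self-adjointness of $L_{Tb}$ on Hilbert--Schmidt operators, yields $\tfrac12\e^2\langle \phi(\alpha), F_T''(\G_{Tb}, a_b)\phi(\alpha)\rangle_{\rm HS} = \e^2\langle \alpha, L_{Tb}\alpha\rangle$, matching the right-hand side of \eqref{cF-expan2ord}.

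To estimate the cubic remainder, coming entirely from $\Tr g(\G_{Tb}+\e\phi(\alpha))$ beyond second order, I would use the Riesz--Dunford representation
\begin{equation*}
g(\G_{Tb}+\e\phi(\alpha)) = \frac{1}{2\pi i}\oint_\Gamma g(z)\,(z-\G_{Tb}-\e\phi(\alpha))^{-1}\,dz,
\end{equation*}
expand the resolvent as a Neumann series in $\e$, and bound the tail $k\ge 3$ in trace norm. The principal difficulty is the choice of contour $\Gamma$: $g$ has logarithmic singularities at $0$ and $1$, while the spectrum of $\gamma_{Tb}=(1+e^{h_{Tb}/T})^{-1}$ accumulates at $\{0,1\}$ in high-momentum directions (and as $T \to 0$). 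The hypothesis \eqref{al-cond}, $\alpha\alpha^* \ls [\gamma_{Tb}(1-\gamma_{Tb})]^2$, is tailored to neutralize this: it forces $\phi(\alpha)$ to vanish precisely where the derivatives of $g$ blow up, so that after a spectral decomposition of $\G_{Tb}$ into a bulk region where $g'''$ is bounded and an edge region absorbed via \eqref{al-cond}, the trace integrals converge and give $|R_3(\e)|\ls \e^3\|\alpha\|^3$. Converting the scalar bound \eqref{al-cond} into a uniform trace-norm estimate through the cubic resolvent expansion is the heart of the technical work.
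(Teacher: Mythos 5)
Your skeleton is the paper's: because $\phi(\alpha)$ has vanishing diagonal blocks, every term of $E-\mu N$ except the pairing energy is unchanged, the pairing energy is exactly quadratic in $\e$, the linear term dies because $\Lambda(\G_{Tb},a_b)$ and $g'(\G_{Tb})$ are block-diagonal and hence trace-orthogonal to the off-diagonal $\phi(\alpha)$, and the quadratic coefficient is read off from Proposition \ref{prop:FT''} with the factor-of-two bookkeeping $\tfrac12\langle\phi(\alpha),\phi(L_{Tb}\alpha)\rangle_{\rm HS}=\langle\alpha,L_{Tb}\alpha\rangle$. All of that is what the paper does; the entire content of the proposition is thereby pushed into the entropy expansion (Proposition \ref{prop:S-expan-order2} and Corollary \ref{cor:S'-S''} of Appendix \ref{sec:entropy}), exactly as you say.

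The gap is the cubic remainder, which you correctly locate in $-T\,\Tr g(\G_{Tb}+\e\phi(\alpha))$ but do not establish, and the route you propose runs into an obstruction that you name without resolving. A Riesz--Dunford contour for $g$ around $\sigma(\G_{Tb})$ does not exist: $\sigma(\gamma_{Tb})=\sigma(f_T(h_{Tb}))$ accumulates at $0$ and $\sigma(\one-\bar\gamma_{Tb})$ at $1$, which are precisely the branch points of $g$, so any contour separating $\sigma(\G_{Tb})$ from the cuts $(-\infty,0]\cup[1,\infty)$ must pinch there and the Neumann expansion of $(z-\G)^{-1}$ cannot be controlled uniformly along it; the ``bulk/edge decomposition'' is exactly the hard step and is left unexecuted. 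The paper avoids the contour altogether: it uses the real-line representation $\ln a-\ln b=\int_0^\infty[(b+t)^{-1}-(a+t)^{-1}]\,dt$, so only resolvents at $-t$ with $t>0$ (uniformly away from $\sigma(\G)\subset[0,1]$) ever appear; iterating the second resolvent identity yields the explicit third-order remainder \eqref{R3}, and the hypothesis $(\G')^2\ls[\G(1-\G)]^2$ of \eqref{Gam'-cond} enters through the single estimate $\|\G'(\G^{\#}+t)^{-1}\|_{I^2}\le\|\xi^{\#}(\xi^{\#}+t)^{-1}\|_{I^2}$ with $\xi^{\#}:=\G^{\#}(\one-\G^{\#})$, whose square integrates in $t$ to $\Tr\xi^{\#}<\infty$. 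Replacing your contour by this representation closes your outline; as written, the $O(\e^3)$ bound is asserted rather than proved.
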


 Definition  \eqref{energy} of $E(\G, a)$ implies that $E(\G_{Tb}+ \e\G', a_b) = E(\G_{Tb}, a_b) + \e^2\Tr(\bar{\alpha} v^\sharp \alpha)$ for $\G'=\phi(\alpha)$. 
  This together with Corollary \ref{cor:S'-S''} and \eqref{S-expan} of Appendix \ref{sec:entropy} dealing with the entropy 
   yields Propositions \ref{prop:FT''} and \ref{prop:FT-expan-order2}, respectively.
\DETAILS{Now, proceed to the proof of Proposition \ref{prop:FT-expan-order2}. Using the definition \eqref{FT-def}, with \eqref{energy}, and the 
expansion  \eqref{S-expan}, we find, for perturbations $\G' = \phi(\alpha)$ obeying the condition \eqref{Gam'-cond}, 
\begin{align}
	F_T(\G_{Tb}+ \e\G', a_b) =& F_T(\G_{Tb}, a_b) + \e^2\Tr(\bar{\alpha} v^\sharp \alpha) +  \e^2 S''(\G',\G') + O(\epsilon^3). 
\end{align}}
(The absence of the linear term is due to the fact that $(\G_{Tb},0)$ is a critical point of the energy function.) 

The next two propositions are proven in Section \ref{sec:normal-stab/instab}.
  \begin{proposition}\label{prop:T-b-large}
 $L_{Tb}\ge  \frac12 T-\|v\|_\infty$  and consequently,  for $T$ sufficiently large, $L_{Tb}>0$ and the   mt-invariant state $(\G_{T, b},  a_b)$ is energetically stable under off-diagonal perturbations. 
   \end{proposition}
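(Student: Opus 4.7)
The plan is to establish the operator bound $L_{Tb} \ge \tfrac12 T - \|v\|_\infty$ on the Hilbert space $I^2$ of Hilbert--Schmidt operators on $\frach$ (with inner product $\langle A,B\rangle := \Tr(A^*B)$), by bounding $K_{Tb}$ and $v^\sharp$ separately. Since left- and right-multiplication always commute, $h_{Tb}^L$ and $h_{Tb}^R$ are commuting self-adjoint operators on $I^2$, and
\[
K_{Tb} = F_T(h_{Tb}^L, h_{Tb}^R), \qquad F_T(x,y):=\frac{x+y}{\tanh(x/T)+\tanh(y/T)},
\]
in the sense of joint functional calculus. By the spectral theorem it then suffices to prove the scalar inequality $F_T(x,y) \ge T/2$ for all $x,y\in\R$.

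To verify this inequality, I would rescale by $u:=x/T$, $v:=y/T$ and use the identity $\tanh u + \tanh v = \sinh(u+v)/(\cosh u \cosh v)$ together with the product-to-sum formula $\cosh u \cosh v = \tfrac12\bigl(\cosh(u+v)+\cosh(u-v)\bigr) \ge \tfrac12\cosh(u+v)$. Writing $s:=u+v$ and noting that $s/\sinh s>0$ for all $s\ne 0$, this gives
\[
\frac{u+v}{\tanh u + \tanh v} = \frac{s\,\cosh u\cosh v}{\sinh s} \;\ge\; \frac{s\cosh s}{2\sinh s} \;=\; \tfrac12\, s\coth s.
\]
The elementary inequality $s\coth s\ge 1$ (which follows because $s\cosh s - \sinh s$ vanishes at $s=0$ and has derivative $s\sinh s\ge 0$, so is non-negative for $s\ge 0$, and the function $s\coth s$ is even) then yields $F_T \ge T/2$, hence $K_{Tb}\ge T/2$ on $I^2$. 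The apparent $0/0$ along $s=0$ is handled because both sides of the identity extend jointly analytically in $(u,v)$.

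For the pair-potential term, by \eqref{v-cond} the integral kernel of $v^\sharp \al$ is $v(x-y)\al(x,y)$, and since $v$ is real, bounded, and even, $v^\sharp$ is self-adjoint on $I^2$ with norm bounded by $\|v\|_\infty$; thus $v^\sharp \ge -\|v\|_\infty$. Adding the two bounds gives $L_{Tb}=K_{Tb}+v^\sharp\ge T/2 - \|v\|_\infty$, which is strictly positive once $T>2\|v\|_\infty$. Substituting this coercivity into the expansion \eqref{cF-expan2ord} of Proposition \ref{prop:FT-expan-order2} yields, for any off-diagonal perturbation $\G'=\phi(\al)$ with $\al\ne 0$ obeying \eqref{al-cond},
\[
F_T(\G_{Tb}+\e\G', \ab) - F_T(\G_{Tb}, \ab) \;\ge\; \e^2\bigl(\tfrac12 T - \|v\|_\infty\bigr)\|\al\|_{I^2}^2 + O(\e^3) \;>\; 0
\]
for all sufficiently small $\e\ne 0$, establishing energetic stability.

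The main obstacle is the scalar inequality $F_T\ge T/2$: a variable-by-variable estimate fails along $x+y=0$ (where the quotient has $0/0$ form), so the reduction via the $\sinh/\cosh$ identity is essential; once that is in place, everything else reduces to the joint functional calculus of the commuting pair $(h_{Tb}^L, h_{Tb}^R)$ and the routine kernel-level bound on the multiplication operator $v^\sharp$.
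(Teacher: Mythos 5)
Your proof is correct and follows the same overall strategy as the paper: decompose $L_{Tb}=K_{Tb}+v^\sharp$, bound $v^\sharp\ge -\|v\|_\infty$ as a bounded real multiplication operator on kernels, and reduce the bound on $K_{Tb}$ to a scalar inequality for $f(u,v)=\frac{u+v}{\tanh u+\tanh v}$ via the functional calculus of the commuting pair $(h_{Tb}^L,h_{Tb}^R)$. The only genuine divergence is in how the scalar bound is obtained: the paper invokes Lemma \ref{lem:f-low-bnd}, which locates the global minimum $f=1$ by a critical-point analysis (and remarks that a weaker directly-proved constant would suffice), whereas you derive $f\ge\frac12$ from the identities $\tanh u+\tanh v=\sinh(u+v)/(\cosh u\cosh v)$ and $\cosh u\cosh v\ge\frac12\cosh(u+v)$, followed by $s\coth s\ge 1$. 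Your route is more elementary and self-contained, and it correctly handles the $0/0$ locus $u+v=0$ that defeats a naive variable-by-variable estimate; the price is the constant $\frac12$ in place of the sharp constant $1$ (so you get $K_{Tb}\ge T/2$ rather than the paper's $K_{Tb}\ge T$), but since the proposition only asserts $L_{Tb}\ge\frac12 T-\|v\|_\infty$, this is exactly what is needed, and the concluding appeal to \eqref{cF-expan2ord} for energetic stability matches the paper.
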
 

On the other hand,  for 
 $T$ and $b$ sufficiently small, we have
   \begin{proposition}\label{thm:T-b-small}
Drop the 
exchange term $-v^\sharp \gamma$.
Suppose that  $v(x) \ls - (1+|x|)^{-\kappa}, \kappa<2$,  for $x$  in a fundamental cell $\Om$ centred at the origin. 
 Then, for  $T$ and $b$ sufficiently small and $ T\ls b^\sigma, \sigma >\kappa/2$,  the operator $L_{Tb}$ acting on the set of Hilbert-Schmidt operators on $\frach$ has a negative eigenvalue and consequently 
  the  mt-invariant state $(\G_{T, b},  a_b)$ is energetically unstable under 
  perturbations $\G'=\phi(\alpha)$. 
   \end{proposition}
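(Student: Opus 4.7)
The plan is to prove instability by the Rayleigh--Ritz variational principle: exhibit a Hilbert--Schmidt trial operator $\alpha$ on $\frach$ with $\langle\alpha, L_{Tb}\alpha\rangle < 0$. The key observation is that by Proposition~\ref{prop:mt-inv-al0}, mt-invariance of $\gamma_{Tb}$ forces its density $\rho_{Tb}(x)=\gamma_{Tb}(x,x)$ to be constant in $x$; hence, after dropping the exchange term, $h_{Tb}=-\Delta_{a_b}+\bar v\rho_{Tb}-\mu$ is only a scalar shift of the Landau Hamiltonian. Its spectrum on $\frach$ therefore consists of the shifted Landau levels $\lambda_k:=b(2k+1)-\mu'$, with $\mu':=\mu-\bar v\rho_{Tb}$, each of multiplicity equal to the Chern number $n=b|\lat|/(2\pi)$. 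Choosing $k_0$ to minimize $|\lambda_{k_0}|$ gives $|\lambda_{k_0}|\leq b$.

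I would take $\alpha:=P_{k_0}$, the rank-$n$ spectral projector onto the $k_0$-th Landau band; the clean case $k_0=0$ illustrates the mechanism. Since $h^L_{Tb}$ and $h^R_{Tb}$ act on $P_0$ as the scalar $\lambda_0$, the joint functional calculus in \eqref{KTb} gives
\begin{equation*}
K_{Tb}P_0=\frac{\lambda_0}{\tanh(\lambda_0/T)}\,P_0,
\end{equation*}
and the elementary inequality $x/\tanh(x/T)\leq T+|x|$ yields $\langle P_0,K_{Tb}P_0\rangle\leq n(T+b)$. For the potential term,
\begin{equation*}
\langle P_0,v^\sharp P_0\rangle=\int_\Om\!\!\int_\Om v(x-y)\,|P_0(x,y)|^2\,dx\,dy,
\end{equation*}
mt-invariance makes $|P_0(x,y)|^2$ depend only on $x-y$, and the identity $\int_\Om\!\int_\Om f(x-y)\,dx\,dy=|\Om|\int_\Om f$, valid for $\lat$-periodic $f$, reduces the integral to $|\Om|\int_\Om v(r)|P_0(0,r)|^2\,dr$. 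The explicit Landau-projector kernel $|P_0(0,r)|^2=(b/(2\pi))^2 e^{-b|r|^2/2}$, combined with the hypothesis $v(r)\ls -(1+|r|)^{-\kappa}$ on $\Om$ (of linear size $\sim b^{-1/2}$), yields after integrating against the Gaussian
\begin{equation*}
\int_\Om(1+|r|)^{-\kappa}\,|P_0(0,r)|^2\,dr\gs b^{1+\kappa/2},
\end{equation*}
so that $\langle P_0,v^\sharp P_0\rangle\ls -n\,b^{\kappa/2}$ using $|\Om|=2\pi n/b$.

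Combining, $\langle P_0,L_{Tb}P_0\rangle\leq n(T+b-c\,b^{\kappa/2})$. Since $\kappa<2$ gives $b\ll b^{\kappa/2}$ as $b\to 0$, and the hypothesis $T\ls b^\sigma$ with $\sigma>\kappa/2$ gives $T\ll b^{\kappa/2}$, the right-hand side is strictly negative for $T,b$ sufficiently small, producing the desired negative eigenvalue. The main obstacle I anticipate is handling the case when $\mu$ is not near the bottom of the spectrum, so that $k_0\gg 1$: the kernel $|P_{k_0}(0,r)|^2$ is then governed by a Laguerre polynomial and is spread over the larger scale $\sqrt{k_0/b}$, which \emph{a priori} weakens the $(1+|r|)^{-\kappa}$-weighted estimate. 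Recovering the $b^{\kappa/2}$ scaling in the general case requires a careful analysis of weighted Laguerre integrals, which is the most delicate piece of the proof. A secondary issue is verifying the admissibility constraint \eqref{al-cond} for $\alpha=P_{k_0}$, which reduces to $|\lambda_{k_0}|\ls T$; this holds automatically in the regime $\sigma\leq 1$ (where $T\gs b$), and in the opposite regime is handled by tuning $\mu$ so that $\lambda_{k_0}=0$ or by replacing $P_{k_0}$ by a small-rank projection onto near-Fermi-surface states.
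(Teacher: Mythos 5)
Your overall strategy --- a direct Rayleigh--Ritz test of $\langle \alpha, L_{Tb}\alpha\rangle$ --- is genuinely different from the paper's, which writes $L_{Tb}=K_{Tb}-(w^\sharp)^2$ with $w^2=-v$ and invokes the Birman--Schwinger principle, reducing matters to showing $\|w^\sharp K_{Tb}^{-1}w^\sharp\|>1$. The two routes are essentially interchangeable here, and your final arithmetic (a Rayleigh quotient $\ls T+b-c\,b^{\kappa/2}$, negative for small $T,b$ with $T\ls b^\sigma$, $\sigma>\kappa/2$, $\kappa<2$) reproduces the paper's scaling exactly. The problem lies in your choice of trial state.

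The identity $K_{Tb}P_{k_0}=\frac{\lambda_{k_0}}{\tanh(\lambda_{k_0}/T)}P_{k_0}$ is false, and it is the crux of your energy upper bound. By the definition preceding Proposition \ref{prop:FT''}, $h^R$ acts by multiplication from the \emph{right by $\bar h_{Tb}$}; on integral kernels, identified with elements of $\frach\otimes\frach$ as in the paper's proof in Section \ref{sec:normal-stab/instab}, this is $h_{Tb}$ acting on the second kernel variable. The kernel of the band projector is $P_{k_0}(x,y)=\sum_j\phi_j(x)\overline{\phi_j(y)}$, and the conjugates $\overline{\phi_j}$ are eigenfunctions of $-\Delta_{-a_b}=\overline{-\Delta_{a_b}}$, \emph{not} of $-\Delta_{a_b}$: each $\overline{\phi_j}$ spreads over all Landau levels of $-\Delta_{a_b}$ (on the plane, the lowest level of $-\Delta_{a_b}$ is holomorphic-times-Gaussian while that of $-\Delta_{-a_b}$ is antiholomorphic-times-Gaussian). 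Hence $h^RP_{k_0}\neq\lambda_{k_0}P_{k_0}$, the joint functional calculus in \eqref{KTb} does not collapse to a scalar on $P_{k_0}$, and the bound $\langle P_{k_0},K_{Tb}P_{k_0}\rangle\le n(T+b)$ is unjustified --- the expectation of $h^R$ in this state is not $O(b)$. (Relatedly, $P_{k_0}$ lies in $\frach\otimes\overline{\frach}$ rather than $\frach\otimes\frach$ and does not have the pairing symmetry $\alpha^*=\bar\alpha$; its compatibility with \eqref{al-cond} is a further, secondary issue.) The repair is to use the paper's trial state $\alpha=\phi_m\otimes\phi_m$, with kernel $\phi_m(x)\phi_m(y)$ (no conjugate) and $\lambda_m$ the Landau level nearest $\mu-\xi$: then $h^L\alpha=h^R\alpha=\lambda_m\alpha$ genuinely holds, $\langle\alpha,K_{Tb}\alpha\rangle\le T+|\lambda_m|\ls T+b$, and your Gaussian/rescaling computation of the potential term, run with $|\phi_m(x)|^2|\phi_m(y)|^2$ in place of $|P_{k_0}(x,y)|^2$, gives $\langle\alpha,v^\sharp\alpha\rangle\ls-b^{\kappa/2}$ and closes the argument. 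Your concern about $k_0\gg1$ is legitimate but is downstream of this fix.
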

 Note that $L_{Tb}$ is discontinuous at $T=0$. 

 Let $T_c(b)$ (resp. $T_c'(b)$) be the largest (resp. smallest) temperature s.t.  the normal solution is energetically unstable (resp. stable) under off-diagonal perturbations $\G' = \phi(\alpha)$, for $T<T_c(b)$ (resp. $T>T_c(b)')$. Clearly, $\infty \ge T_c'(b)\ge T_c(b)\ge 0$. 
Propositions \ref{prop:T-b-large} and  \ref{thm:T-b-small} imply 
\begin{corollary}\label{cor:Tcb-posit} 
Under the conditions of  Proposition \ref{thm:T-b-small}, $T_c(b)>0$ for $b $ sufficiently small and  $T_c'(b) =0$ for $b $ sufficiently large. \end{corollary}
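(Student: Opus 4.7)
The plan is to verify the two claims separately. Each follows by feeding the spectral bounds of Propositions~\ref{prop:T-b-large} and~\ref{thm:T-b-small} into the Hessian formulas~\eqref{LTb}--\eqref{KTb}.

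For $T_c(b)>0$ at small $b$, I would apply Proposition~\ref{thm:T-b-small} along a curve of the form $T = \epsilon b^\sigma$ with $\sigma > \kappa/2$ and $\epsilon>0$ chosen small enough. For $b$ below the threshold supplied by that proposition, the pair $(T,b) = (\epsilon b^\sigma, b)$ satisfies ``$T$ and $b$ sufficiently small and $T\lesssim b^\sigma$'', so $L_{Tb}$ has a negative eigenvalue at this point and the mt-invariant normal state $(\G_{T,b},a_b)$ is energetically unstable at $T = \epsilon b^\sigma$. By the definition of $T_c(b)$ this forces $T_c(b) \geq \epsilon b^\sigma > 0$.

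For $T_c'(b)=0$ at large $b$, I would prove the stronger assertion that $L_{Tb} > 0$ on $I^{2,2}$ for \emph{every} $T>0$ once $b$ is sufficiently large, splitting the temperature axis at $T_0 := 2\|v\|_\infty$. For $T \geq T_0$, Proposition~\ref{prop:T-b-large} gives $L_{Tb} \geq T/2 - \|v\|_\infty > 0$ directly and uniformly in $b$. For $0 < T < T_0$, the idea is to exploit the lowest-Landau-level bound $-\Delta_{a_b} \geq b$ on $\frach$. Combined with the $\lat$-periodic estimate $\|v*\rho_{\gamma_{Tb}}\|_\infty \leq \|v\|_\infty \tr(\gamma_{Tb})$ and the fact that the exchange term is dropped, this gives $h_{Tb} \geq b - \mu - \|v\|_\infty \tr(\gamma_{Tb})$ as a self-adjoint operator on $\frach$. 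The elementary bound $0 \leq \tanh(x/T)+\tanh(y/T) \leq 2$ for $x,y>0$ then yields the $T$-uniform operator inequality $K_{Tb} \geq \tfrac{1}{2}(h_{Tb}^L + h_{Tb}^R)$ on $I^{2,2}$ (valid as soon as $h_{Tb}>0$), and, using $\|v^\sharp\|_{I^{2,2}\to I^{2,2}} \leq \|v\|_\infty$, one obtains
\begin{equation*}
L_{Tb} \;=\; K_{Tb} + v^\sharp \;\geq\; b - \mu - (1+\tr(\gamma_{Tb}))\|v\|_\infty,
\end{equation*}
strictly positive for $b$ large, provided $\tr(\gamma_{Tb})$ remains bounded on $(0, T_0]$.

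The main obstacle I anticipate is precisely this uniform control of $\tr(\gamma_{Tb})$ for $T \in (0, T_0]$ and $b\to\infty$: since $\gamma_{Tb}$ is self-consistently determined by $\gamma = f_T(h_{\gamma a_b} - \mu)$ and the Landau-level degeneracies on $\frach$ scale like $b$, a naive trace bound does not suffice. I expect this to close by a short bootstrap: once a crude a priori bound $\tr(\gamma_{Tb}) \leq M$ is known, the chain above already forces $h_{Tb} \geq b/2$ for $b$ large, so that the Fermi-Dirac weight $f_T(h_{Tb})$ is exponentially suppressed uniformly for $T\in(0, T_0]$, and then $\tr(\gamma_{Tb}) = \tr f_T(h_{Tb})$ is in fact very small. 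Feeding this sharpened bound back into the estimate for $L_{Tb}$ produces the desired strict positivity for all $T>0$, whence $T_c'(b) = 0$ for all sufficiently large $b$.
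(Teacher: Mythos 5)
The paper offers no written proof of this corollary (it is merely asserted to follow from Propositions \ref{prop:T-b-large} and \ref{thm:T-b-small}), and your two-part strategy is the natural way to fill it in. The first half is correct, with one small rewording: instability at the single point $T=\epsilon b^\sigma$ does not by itself give $T_c(b)\ge \epsilon b^\sigma$ under the paper's definition of $T_c(b)$; you need instability on the whole interval $0<T\le \epsilon b^\sigma$, which Proposition \ref{thm:T-b-small} does supply (every such $T$ still satisfies ``$T$ small and $T\ls b^\sigma$''). For the second half, your key mechanism --- $K_{Tb}\ge\tfrac12(h_{Tb}^L+h_{Tb}^R)$ once $h_{Tb}>0$ (since $\tanh\le 1$), combined with $-\Delta_{a_b}\ge b$ on $\frach$ and $\langle\al,v^\sharp\al\rangle\ge-\|v\|_\infty\|\al\|_{I^2}^2$ --- is exactly what is needed to cover the temperature range $0<T\le 2\|v\|_\infty$ that Proposition \ref{prop:T-b-large} leaves open, so the architecture is right.

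The gap is the one you flag yourself, and the bootstrap you sketch does not close it. You need $h_{Tb}=-\Delta_{a_b}+v*\rho_{\gamma_{Tb}}-\mu\gtrsim b$ uniformly for $T\in(0,T_0]$, i.e.\ control of the Hartree constant $\xi:=v*\rho_{\gamma_{Tb}}=(\int v)\,\rho_{\gamma_{Tb}}(0)$, which under the hypotheses of Proposition \ref{thm:T-b-small} is $\le 0$ and so works against you. The ``crude a priori bound $\tr\gamma_{Tb}\le M$'' that seeds your bootstrap is never produced, and without it the argument is circular: $h_{Tb}\ge b/2$ is used to show $\tr\gamma_{Tb}$ is small, which is used to show $h_{Tb}\ge b/2$. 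Nor is the seed free: $\gamma_{Tb}$ is self-consistent, the Landau-level degeneracy per cell grows like $b$, and in the canonical normalization $\tr\gamma_{Tb}=\nu$ the multiplier $\mu=\mu(T,b,\nu)$ moves with $b$ so that $b-\mu+\xi$ need not grow at all. The clean way to close this is the paper's own reduction (Lemma \ref{lem:gbExistenceAndUniqueness} and Appendix \ref{sec:xi-fp}): at fixed $\mu$, $\xi$ is a single real number solving the scalar equation \eqref{xi-fp}, and one should estimate $|\xi|=|\int v|\,|\Om|^{-1}\Tr f_{T}(-\Delta_{a_b}-\mu+\xi)$ directly (counting Landau levels below $\mu-\xi$) to conclude $b-\mu+\xi\to\infty$ uniformly in $T$ --- a short but genuinely needed computation, and one that imposes its own smallness condition on $|\int v|$. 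Alternatively, state the claim for the reduced equations of Remark \ref{rem:BdG2}, where $h_{Tb}=-\Delta_{a_b}-\mu$ and your chain closes with no bootstrap. (Minor: at $T=T_0:=2\|v\|_\infty$ your first bound gives only $L_{Tb}\ge 0$; shift the splitting point or invoke the strict inequality in Lemma \ref{lem:f-low-bnd} away from the origin.)
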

We {\it conjecture} that  (a) mt-invariant  states coincide with normal states, (b) For $T\gg 1$, the free energy $F_T$ is minimized by normal states; (c)  normal states are stable if either  $T\gg 1$ or  $b\gg 1$ and unstable  if  $T\ll 1$ and  $b\ll 1$; (d)  $T_c(b)= T_c'(b) .$

The next corollary provides a convenient criterion for the determination of $T_c(b)$ and $T_c(b)' $.
 \begin{corollary}\label{cor:Tc-d F_T'ev0}
At $T=T_c(b)$ and $T=T_c'(b) $, zero is the lowest eigenvalue of the operator  $L_{Tb}$. 
   \end{corollary}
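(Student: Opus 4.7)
The plan is to deduce the corollary from continuity of the ground-state eigenvalue $\lambda_0(T):=\inf\sigma(L_{Tb})$ as a function of $T>0$, together with the defining properties of $T_c(b)$ and $T_c'(b)$. By Propositions \ref{prop:FT''} and \ref{prop:FT-expan-order2}, energetic instability (resp. stability) of $(\G_{Tb},a_b)$ under off-diagonal perturbations is precisely equivalent to $\lambda_0(T)<0$ (resp. $\lambda_0(T)\ge 0$), so the whole problem reduces to locating the points where $\lambda_0$ changes sign.

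Granted continuity of $T\mapsto\lambda_0(T)$, the conclusion is immediate. For $T_c(b)$: by definition $\lambda_0(T)<0$ for every $T<T_c(b)$, so continuity at $T_c(b)$ forces $\lambda_0(T_c(b))\le 0$; conversely, if $\lambda_0(T_c(b))<0$ held, continuity would produce $\delta>0$ with $\lambda_0<0$ throughout $(T_c(b)-\delta,T_c(b)+\delta)$, which together with instability on $(0,T_c(b))$ yields instability on all of $(0,T_c(b)+\delta)$, contradicting the maximality of $T_c(b)$. Hence $\lambda_0(T_c(b))=0$. The argument for $T_c'(b)$ is entirely symmetric, now using its minimality and the fact that $\lambda_0(T)\ge 0$ for every $T>T_c'(b)$.

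The substantive task is therefore norm continuity of $T\mapsto L_{Tb}$ on $(0,\infty)$, viewed as a bounded self-adjoint operator on the Hilbert--Schmidt class over $\frach$. Two ingredients are needed. First, continuity of $T\mapsto\gamma_{Tb}$, the solution of the self-consistent equation \eqref{gam-Tb}: for $T$ in any compact subset of $(0,\infty)$ the Fermi--Dirac function $f_T$ is smooth, and the map $(T,\gamma)\mapsto f_T(h_{\gamma a_b}-\mu)-\gamma$ is $C^1$ on $(0,\infty)\times I^{2,1}$ near the solution produced by Theorem \ref{thm:norm-state-exist}; an implicit-function/contraction argument along the lines of Section \ref{sec:norm-states-exist} then yields continuous (in fact smooth) dependence of $\gamma_{Tb}$ on $T$. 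Second, continuity of $\gamma_{Tb}$ transfers to $K_{Tb}$ through the double functional-calculus representation \eqref{KTb}, using that the scalar symbol $(\lambda,\mu,T)\mapsto (\lambda+\mu)/(\tanh(\lambda/T)+\tanh(\mu/T))$ is smooth on $\R\times\R\times(0,\infty)$. Since $v^\sharp$ is bounded and $T$-independent, norm continuity of $L_{Tb}=K_{Tb}+v^\sharp$ follows, and continuity of $\lambda_0(T)$ is then standard perturbation theory for bounded self-adjoint operators.

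The main obstacle is the implicit-function step for $\gamma_{Tb}$: one must verify that the linearisation $\one-d_\gamma f_T(h_{\gamma a_b}-\mu)|_{\gamma=\gamma_{Tb}}$ is boundedly invertible on the relevant Banach space. Given the strictly monotone form of $f_T$ and the mt-invariant, symmetric-gauge structure exploited in Theorem \ref{thm:norm-state-exist}, this should follow from essentially the same estimates that deliver existence and uniqueness there. A minor secondary issue is that $L_{Tb}$ becomes singular as $T\downarrow 0$ (cf.\ the remark following Proposition \ref{thm:T-b-small}), but the corollary concerns positive, finite $T_c(b),T_c'(b)$, and the entire argument is carried out on intervals bounded away from $0$ and $\infty$.
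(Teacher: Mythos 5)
The paper states this corollary without giving a proof, so there is no official argument to compare against; your strategy --- continuity of $\lambda_0(T):=\inf\sigma(L_{Tb})$ in $T$ combined with the defining maximality/minimality of $T_c(b)$ and $T_c'(b)$ --- is surely the intended one, and your two-sided sign argument deducing $\lambda_0=0$ at both critical temperatures is correct. However, the substantive step is not quite right as you set it up. $L_{Tb}$ is \emph{not} a bounded operator on the Hilbert--Schmidt class: by \eqref{KTb} its symbol $f_T(s,t)=(s+t)/(\tanh(s/T)+\tanh(t/T))$ grows linearly in $s+t$ on the relevant region $\{s,t\ge-\mu'\}$, so $K_{Tb}$ is unbounded (with compact resolvent, its eigenvalues being $f_T(\lambda_m,\lambda_{m'})$ over pairs of Landau levels --- which, incidentally, is what you need to upgrade ``bottom of the spectrum'' to ``lowest eigenvalue''). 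Worse, $K_{T'b}-K_{Tb}$ is itself unbounded (for $s\to\infty$ and $t$ fixed the difference of symbols grows linearly in $s$), so ``norm continuity of $T\mapsto L_{Tb}$'' is simply false. The repair is to prove norm-\emph{resolvent} continuity, e.g. $\|(K_{T'b}-K_{Tb})(K_{Tb}+C)^{-1}\|\to0$ as $T'\to T$; this does hold because $(f_{T'}-f_T)/(f_T+C)=\frac{f_T}{f_T+C}\bigl(D_T/D_{T'}-1\bigr)$ with $D_T:=\tanh(s/T)+\tanh(t/T)$, and on $\{s,t\ge-\mu'\}$ one splits into the compact set $\{s+t\le 2\mu'+1\}$ and its complement, where $D_T$ is bounded below by a positive constant. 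Continuity of $\lambda_0$ then follows from the min--max principle.

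The second weak point is the implicit-function step for $\gamma_{Tb}$, which you correctly flag as the main obstacle but do not close (invertibility of the linearization is asserted, not checked). In the setting in which $L_{Tb}$ is actually analyzed (exchange term dropped), this machinery is unnecessary: by Proposition \ref{prop:mt-invar-oprs} the direct term $v*\rho_{\gamma_{Tb}}$ is the constant $\xi(T)$ solving the scalar fixed point equation \eqref{xi-fp}, so $h_{Tb}=-\Delta_{a_b}+\xi(T)-\mu$ and all the $T$-dependence of $K_{Tb}$ beyond the explicit symbol enters through the single real number $\xi(T)$. Its continuity (indeed smoothness) in $T$ is immediate from the one-dimensional implicit function theorem, since $g_T$ in \eqref{gT-expr} is smooth in $(T,\xi)$ and $1-\partial_\xi g_T\ge 1$ by the sign of $g_T'$ established in Appendix \ref{sec:xi-fp}. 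With these two corrections your argument closes; as written, the boundedness claim is a genuine error rather than a stylistic one.
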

 A proof of energetic stability under general perturbations for either $T$ or $b$ sufficiently large is more subtle. 
For it, 
 one has to use the full linearized operator, $\hess F_T(\G_{Tb},a_b)$. 
Our computations 
 suggest that $0$ is the lowest eigenvalue of $\hess F_T(\G_{Tb},a_b)$ if and only if $0$ is the lowest eigenvalue of $L_{Tb}$ and, consequently, $T_c(b)$ and $T_c'(b) $ apply also to the general perturbations. 

The statement that $T_c=T_c(0)=T_c' (0)>0$ for $a=0$ (and therefore $b=0$) and for a large class of potentials is proven, 
 by  the variational techniques, in \cite{HHSS}. 
 
In conclusion of this paragraph, we mention the proposition which follows from a simple computation: 
  \begin{proposition}\label{prop:LTb-comm-MT}
The operator  $L_{Tb}$ commutes with the magnetic translations. The same is true for the $\G-$Hessian $F_T''(\G_{Tb},a_b)$ (see \eqref{Hess-eta}).  \end{proposition}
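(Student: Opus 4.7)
My plan is to reduce everything to the statement that $h_{Tb}:=h_{\gamma_{Tb}a_b\mu}$ is itself magnetically translation invariant as an operator on $\mathfrak{h}$, and then lift this to the operator-space on which $L_{Tb}$ acts. Let $u_s:=u_s^\lat$ denote the magnetic translation on $\mathfrak{h}$ (so $u_sf(x)=e^{-i\chi_s^b(x)}f(x+s)$), and let $\tau_s$ denote its action on Hilbert--Schmidt operators by conjugation, $\tau_s(A):=u_sAu_s^{-1}$. By Theorem \ref{thm:norm-state-exist}, $\gamma_{Tb}$ is mt-invariant, so its density $\rho_{\gamma_{Tb}}$ is $\lat$-periodic and hence the direct potential $v*\rho_{\gamma_{Tb}}$ is $\lat$-periodic (similarly the exchange term if one keeps it). Combining this with the well-known fact that $-\Delta_{a_b}$ commutes with the magnetic translations $u_s$ (which is the very reason $u_s$ is defined with the gauge factor $\chi_s^b$), one obtains $u_sh_{Tb}u_s^{-1}=h_{Tb}$.

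Next I would check, by direct computation on integral kernels, that $\tau_s$ intertwines each of the building blocks of $L_{Tb}$. For the left and right multiplication operators $h_{Tb}^L$ and $h_{Tb}^R$ this is immediate from $u_sh_{Tb}u_s^{-1}=h_{Tb}$: for any $\alpha$,
\begin{align*}
\tau_s(h_{Tb}\alpha)=(u_sh_{Tb}u_s^{-1})(u_s\alpha u_s^{-1})=h_{Tb}\tau_s(\alpha),
\end{align*}
and similarly on the right (the right factor involves $\bar h_{Tb}$, which is mt-invariant by the same argument applied to $\overline{h_{Tb}}$ with conjugated translations). For $v^\sharp$, using \eqref{v-cond} and the kernel formula $\tau_s(A)(x,y)=e^{-i\chi_s^b(x)+i\chi_s^b(y)}A(x+s,y+s)$, one finds
\begin{align*}
\tau_s(v^\sharp\alpha)(x,y)=e^{-i\chi_s^b(x)+i\chi_s^b(y)}v(x-y)\alpha(x+s,y+s)=(v^\sharp\tau_s\alpha)(x,y),
\end{align*}
because the $v$-factor depends only on $x-y$ and is therefore unaffected by the common shift and the gauge phases. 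Now $K_{Tb}$ is a function, in the sense of the joint functional calculus, of the two commuting operators $h_{Tb}^L$ and $h_{Tb}^R$, each of which commutes with $\tau_s$; hence so does $K_{Tb}$, and therefore so does $L_{Tb}=K_{Tb}+v^\sharp$.

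For the second claim, I would use the block decomposition of the $\Gamma$-Hessian at the normal state $(\Gamma_{Tb},a_b)$. Because $\alpha_{Tb}=0$, the Hessian block-diagonalizes with respect to the splitting of $\Gamma'$ into its diagonal ($\gamma'$) and off-diagonal ($\phi(\alpha)$) components. The off-diagonal block is exactly $L_{Tb}$ by Proposition \ref{prop:FT''}, and has just been shown to commute with $\tau_s$. The diagonal block is the Hessian of the Hartree-type functional $\gamma\mapsto \mathrm{Tr}((-\Delta_{a_b})\gamma)+\frac12\mathrm{Tr}((v*\rho_\gamma)\gamma)-TS(\gamma)-\mu N(\gamma)$ at $\gamma_{Tb}$; its building blocks (the operator $h_{Tb}^{L,R}$, the multiplication by $v*\rho_{\gamma'}$, and the Hessian $g''$ applied through functional calculus at $\gamma_{Tb}$) are all mt-invariant by the same reasoning, so the diagonal block commutes with $\tau_s$ as well.

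The only technical point I expect to require care is the precise bookkeeping for the right-multiplication operator $h_{Tb}^R$, which acts on $\alpha$ (an operator from $\mathfrak{h}^*$ to $\mathfrak{h}$) by $\alpha\mapsto \alpha\overline{h_{Tb}}$, together with the correct form of the magnetic translation on the dual space $\mathfrak{h}^*$; this is where one must verify that $\tau_s(\alpha\overline{h_{Tb}})=\tau_s(\alpha)\overline{h_{Tb}}$, which ultimately reduces to $\overline{u_sh_{Tb}u_s^{-1}}=\overline{h_{Tb}}$ and the identification $\overline{u_s}=u_{-s}^*$ up to a phase that cancels in the conjugation. Everything else is a routine kernel computation or a functional-calculus argument.
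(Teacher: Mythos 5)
Your argument is correct and is precisely the ``simple computation'' the paper alludes to without writing it out: reduce everything to the mt-invariance of $h_{Tb}$ (via $[-\Delta_{a_b},u_{bs}]=0$, Lemma \ref{lem:gen-mt}, and the invariance of the self-interaction term), check the kernel identity for $v^\sharp$ using $v(x,y)=v(x-y)$, and pass to $K_{Tb}$ by joint functional calculus in the commuting operators $h_{Tb}^{L}$, $h_{Tb}^{R}$. One small sharpening: since the statement is meant for the full family $\tau_{bs}$, $s\in\R^2$ (cf.\ Remark \ref{rem:LTb}), $\lat$-periodicity of $\rho_{\gamma_{Tb}}$ would only give commutation with the lattice translations; what you actually have, by Proposition \ref{prop:mt-invar-oprs}, is that $\rho_{\gamma_{Tb}}$ is \emph{constant}, so $v*\rho_{\gamma_{Tb}}$ is a constant and $h_{Tb}$ commutes with every $u_{bs}$, $s\in\R^2$.
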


\paragraph{\bf Remarks.} Here we collect the comments on various aspects covered in this introduction.


\begin{rem}[BdG equations]\label{rem:space} 
{\em 
1) For $\s=0$, \eqref{BdG-eq-t} 
 becomes 
 the  time-dependent von Neumann-Hartree-Fock equation for $\g$ (and $a$).

2) 
A formal derivation of \eqref{BdG-eq-t} given in Appendix \ref{sec:qf-reduct} can be extended to the coupled system \eqref{BdG-eq-t}-\eqref{Amp-Maxw-eq} by starting with the many-body quantum Hamiltonian \eqref{H} coupled to the quantized electro-magnetic field. 

}\end{rem}

\begin{rem}\label{rem:BdG2} 
{\em 
\DETAILS{ 2)  For \eqref{Gam-eq'} to give $\G$ of the form \eqref{Gam}, the function  $f (h)$ should satisfy the conditions
\begin{equation} \label{g-cond}
	f(\bar h) = \overline{f(h)} \text{   and   } f(-h) =\one - f(h) .
\end{equation}
 For  $g(x)$ given in \eqref{g}, the function  $f(h):= (g')^{-1}(h)$ satisfies these conditions as can be checked from its explicit form  \eqref{FD-distr}.  
However, \eqref{g-cond} is more general than \eqref{FD-distr}. 
Indeed, the first condition in \eqref{g-cond} means merely that $f$ is a real function, 
  while the second condition in \eqref{g-cond} is satisfied by functions $f(h):= (1+e^{\tilde{g}(h)})^{-1}$, with  $\tilde{g}(h)$, any odd function. 
  \DETAILS{To check this we set $g(h) = f(h)^{-1}-1$. Then we see that it is equivalent to
\begin{align}
	g(h)g(-h) =1 .
\end{align}
It is easy to see now that $f(h):= (1+e^{\tilde{g}(h)})^{-1}$, with  $\tilde{g}(h)$ odd, satisfies \eqref{g-cond}.}

 More generally, one could require that  $g(\lam)$ satisfies the conditions \eqref{g-cond}, with $f(h):= (g')^{-1}(h)$, and 
\begin{align}\label{g-cond-2}   g'(1-x)=-g'(x).\end{align}}
It is common in physics literature to drop the direct and exchange  self-interaction terms 
 from $h_{\g a \mu}$ in \eqref{Lam'}. In this case, $\Lam_{\G a }$ becomes independent of the diagonal part, $\g$, of $\G$ and equation 
  \eqref{Gam-eq} has always the solution}
\begin{equation}\label{Gam-sol-diag}
	\G_{T a }\ =f_{T}(\Lam_{a }),\ \text{ where }\   \Lam_{a }:= \Lam_{\G a }\big|_{\G=0}.
\end{equation}

Similarly, for equation \eqref{gam-Tb} for normal states. In this case, it always has the solution
\begin{align}\label{gam-Tb'}\gamma_{T b }= f_{T}(h_{ a_b }),\ \text{ where }\    h_{ a}  =-\Delta_a-\mu.\end{align}\end{rem}
 


\begin{rem}[Particle density]\label{rem:den} {\em For a definition of $\rho_\g$ not relying on the integral kernels, see \eqref{den-def}.
}\end{rem}


\begin{rem}[Particle-hole symmetry] \label{rem:part-hole-sym}   The evolution under the BdG equations \eqref{BdG-eq-t}-\eqref{Amp-Maxw-eq} preserves the relations in \eqref{Gam-prop}, i.e. if an initial condition has one of these properties, then so does the solution.  
This follows from the relation 
\begin{align} \label{Lam-ph-sym} J^*\Lam J= -  \overline{\Lam},\end{align}
where $J$ is defined in \eqref{Gam-prop}.  The second relation in \eqref{Gam-prop} is called the {\it particle-hole symmetry}.
\end{rem}

\begin{rem}[BdG equations in `coordinate' form]\label{rem:BdG-EVP}   {\em  
In physics literature, the BdG equations are written 
  in terms of eigenfunctions of the operator $\G$, or $\Lam_{\G a}$ (assuming the operator $-\Delta_a$ has purely discrete spectrum, see \cite{BLS, Cyr, dG, Kop, Tink, J-X-Zhu}, cf. \cite{BBCFS}). In the case of stationary equations \eqref{Gam-eq}-\eqref{a-eq}, we have  
\begin{align}\label{Lam-EVP}   &\Lam_{\G a}\psi_n = \veps_n\psi_n,\ \quad \Lam_{\G a}J\bar\psi_n = -\veps_nJ\bar\psi_n,\end{align}
with $\veps_n>0$ and $J$ given in \eqref{Gam-prop}. The second equation above follows from the first and relation \eqref{Lam-ph-sym}. By \eqref{Gam-eq'}, $\psi_n$ and $J\bar\psi_n$ are also eigenfunctions of $\G$ with the eigenvalues $f_n:=f_T(\veps_n)$ and $1-f_n=f_T(-\veps_n)$, respectively.  As $\{\psi_n, J\bar\psi_n\}$ form a complete orthonormal set, we have 
\begin{align}\label{eta-exp}   &\G=\sum_n \big(f_n P_{\psi_n} + (1-f_n) P_{J\bar\psi_n}\big),\end{align}
where $P_{\psi}$ stands for the rank-one orthogonal projection onto the subspace spanned by $\psi$.  \eqref{Lam-EVP}-\eqref{eta-exp}  yield a nonlinear eigenvalue problem replacing   \eqref{Gam-eq}. Finally, one can express $j(\G, a)$ in \eqref{a-eq}  in terms of $\{\psi_n, J\bar\psi_n\}$.
}\end{rem}

\begin{rem}[Entropy]\label{rem:entr} 
{\em   Due to the symmetry \eqref{Gam-prop} of $\G$, we see that 
\begin{align}\label{S-sym}
	\Tr(\G \ln \G) = \Tr((1-\G)\ln(1-\G))
\end{align}
which, recalling \eqref{g}, implies that} 
\DETAILS{$\tr g(\G)=\tr s(\G)$, where 
\begin{align}
	s(\G):=-2 \G\ln\G \label{eqn:s-def}
\end{align}
 and  $g(\lam)$ is given in \eqref{g}, and therefore 
 \begin{align}\label{S-expr} S(\G) =\tr s(\G) = \Tr g(\G).\end{align}}
\begin{align}
\label{S-expr}	& S(\G) := \Tr(s(\G))= \Tr(g(\G)),\\
 \label{g-s-def}	&  g(\G): = -\G\ln\G - (1-\G)\ln(1-\G),\ s(\G ):= -2\G\ln\G.
\end{align}\end{rem}

 \DETAILS{ Since $T_s$ is a group representation, $\chi_{s} (x)$ must satisfy
\begin{align}\label{chi-co-cycle} 
 \chi_{s+t} (x) - \chi_s(x)- \chi_t (x+s)  \in 2\pi\Z, 
\end{align}
for any $s, t \in \R^d$. 
{\bf In Subsection \ref, we will present a solution to this equation in dimension $2$ of the interest for us.}}
\begin{rem}[Co-cycle equation]\label{rem:co-cycle}  
   {\em One can readily show that the operator-family $u_{b s}, s\in \lat,$ defined in \eqref{ubs} is a group representation (with the group law given in \eqref{ubs-gr}) if and only if the functions $\chis(x), s\in \lat, x\in \R^2,$ 
  satisfy the co-cycle condition 
\begin{align}\label{co-cycle}
    \chi_{s+t}^\lat(x) -\chi_s^\lat (x+t) - \chi_t^\lat(x) \in 2\pi \Z,\ \forall s, t\in \lat.
\end{align}
The function $\chis$   defined in \eqref{chibs} satisfies this relation, 
while the one in \eqref{chibs'} does not. 

Functions  $\chi_\cdot : \lat\times \R^d\ra \R$ satisfying \eqref{co-cycle} are called the {\it summands of automorphy}, or {\it co-cycles}  (see \cite{Sig0} for a relevant discussion). (The map 
$e^{i \chi} : \lat\times\R^2 \to U(1)$, where 
 $\chi (x, s) \equiv \chi_s (x)$ is called the {\it factor of automorphy}.)
Every map $\chi_s : \lat\times\R^2 \to \R$ satisfying \eqref{co-cycle} is (gauge) equivalent to  the one in \eqref{chibs}, with  $b$ satisfying \eqref{b-quant}.

With a summand of automorphy $\chis(x), s\in \lat, x\in \R^2,$  one associates the function 
 \begin{equation}\label{c1}
    c_1(\chi) =\frac{1}{2\pi} (\chi_{\nu_2}(x+\nu_1) - \chi_{\nu_2}(x) - \chi_{\nu_1}(x+\nu_2) + \chi_{\nu_1}(x)), 
\end{equation}
  for some basis $\{\nu_1, \nu_2\}$ in $\cL$.   As it turns out, this function is independent of $x$ and the choice of the basis and is an integer. In physics literature, $c_1(\chi)$ is called the Chern number.} 
\DETAILS{{\bf It is straightforward to show that $\chi^\lat=\chis(x)$, defined in \eqref{chibs}, is  a summand of automorphy 
 and $c_1(\chi^\lat)$ could be given by \eqref{c1}, 
  without reference to the explicit form \eqref{chibs} of $\chi^\lat$.}} 
\end{rem}

\DETAILS{\begin{rem}[Mt-invariance implies normality]\label{rem:LTb}  {\em The fact that $\G_{T b }$ in \eqref{Gam-Tb} is diagonal should not come as a surprise as $a_b$ has a constant magnetic field $b$ throughout the sample as it should be in 
 a normal state. It can be also seen from the following elementary statement
}\end{rem}}

\DETAILS{  For the vector potential, $\ab(x)$,  with the constant magnetic field  to satisfy the gauge periodicity condition \eqref{VL:equiv} with  \eqref{chibs}, one has to take the  particular gauge in which $\ab(x)=\frac b2(-x_2, x_1) (=:\frac b2 *x)$.
For the gauge-free form, we have the following 

\begin{lemma}\label{rem:aut-fact} Let $\ab(x)$ be linear functions satisfying $s\cdot a_b(t) - t\cdot a_b(s) \in 2\pi\Z$ and $s\cdot a_b(s)=0$. Then  function 
\begin{align}\label{chibs''}\chis(x):=  x\cdot \ab(s)+c_s,\ \quad c_s:=s'\cdot \ab(s''),\end{align}
satisfies \eqref{co-cycle}. 
\end{lemma}
\begin{proof} Since $a_b(s+t) = a_b(s) + a_b(t)$ and $s\cdot a_b(s)=0,\ \forall s, t\in \lat$, we have 
\begin{align}\label{ab-co-cycle'}x\cdot a_b(s+t) - x\cdot a_b(s)- (x+s)\cdot a_b(t)=-s\cdot a_b(t).\end{align} Furthermore, the definition of $c_s$, the  linearity of $\ab(x)$ and the relations  $s\cdot a_b(t) - t\cdot a_b(s) \in 2\pi\Z$ and $s'\cdot a_b(t') = s''\cdot a_b(t'') =0$ 
 imply that $c_s$ satisfy the `flat' co-cycle relation
\begin{align}\label{cs-co-cycle'} c_{s+t} - c_s - c_t - s\cdot a_b(t)  \in 2\pi\Z. \end{align}
Adding the last two equations and observing that the terms $s\cdot a_b(t)$ cancel gives \eqref{co-cycle}. 
  \end{proof}
{\bf Here, the condition $s\cdot a_b(t) - t\cdot a_b(s) \in 2\pi\Z$ replaces the flux quantization condition 
 \eqref{b-quant}.}} 



 

 \DETAILS{ \subsection{Micro and macro  scales}

 We think of $v$ as living on a microscopic scale and 
 normal and superconducting and mixed states as living on a macroscopic or mesoscopic scale 
  (the scale of the sample).  
Though in some related questions it is crucial to differentiate between different scales, this does not play a role in this paper and keep the notation simple we do not keep track of different scales in our notation.}

\begin{rem}[Operator $L_{Tb}$]\label{rem:LTb}  {\em 1)  The question of when $L_{Tb}$ has negative spectrum for a larger range of $T$'s is a delicate one. For $T$ close to $T_c$, this depends, besides of the parameters $T$ and $b$, also on $v$ and $\mu$ determining whether the superconductor is of Type I or II. (As was discovered theoretically by A.A. Abrikosov in his study of the Ginzburg-Landau equations (\cite{Abr}) and confirmed later experimentally, superconductors are divided, according to their basic properties,  into two groups, Type I or II superconductors. 
So far, there are no results (rigorous, or not) establishing within the BdG theory these properties for certain classes of potentials $v$ and $\mu$.) 

2) 
Since the components of magnetic translations \eqref{mag-transl} do not commute, the fiber decomposition of $L_{Tb}$ is somewhat subtle (see \cite{AHS}).} 
\end{rem}





\paragraph{\bf Nomenclature.} 
 In our most important results, we drop the exchange term $v^\sharp \gamma$.  By analogy with the Hartree-Fock equations, the resulting equations could be called the {\it reduced BdG equations}. 

  An important and natural modification of the BdG equations would be replacing  the exchange term $v^\sharp \gamma$ by a density dependent exchange-correlation term $xc(\rho_\g)$ from the density functional theory. The resulting equations could be called the {\it density  BdG equations}.

Addressing the density or original BdG equations  would be an important next step. 

\smallskip

 The paper is organized as follows. In Sections \ref{sec:norm-states-exist} and \ref{sec:vort-latt-exist}, we prove Theorems \ref{thm:norm-state-exist} and \ref{thm:BdGExistence}, on existence of the normal and vortex lattice solutions, respectively. These are our principal results. In Section \ref{sec:normal-stab/instab}, we prove Propositions  \ref{prop:T-b-large} and \ref{thm:T-b-small} on the stability/instability of the normal solutions.  
 In Appendix \ref{sec:entropy} we study the entropy functional. Results of this appendix imply the proofs of Propositions \ref{prop:FT''} and \ref{prop:FT-expan-order2} and are used  in Appendix \ref{sec:energy} in the proof of technical Theorem \ref{thm:BdG=EL}. 
  In 
  Appendix \ref{sec:xi-fp}, we prove an elementary technical result from Section \ref{sec:norm-states-exist}. Finally,  in Appendix \ref{sec:den-est} we prove some bounds on functions relative to magnetic Laplacian and bounds on densities (both elementary and probably well-known)  and in Appendix \ref{sec:qf-reduct}, we discuss the derivation of the BdG equations from the quantum many-body problem.


 \section{Stationary Bogoliubov-de Gennes equations 
  }\label{sec:BdG-stat} 
 In this section we establish the connection between the time-dependent and stationary BdG equations and show that the latter are the Euler-Lagrange equations for the free energy, \eqref{FT-def}.
 
In terms of $\G$, the gauge transformation, $T^{\rm gauge}_\chi$, could be written as 
\begin{equation}\label{gauge-transf'} 
  T^{\rm gauge}_\chi : 	\G\ra U^{\rm gauge}_\chi \G (U^{\rm gauge}_{\chi})^{-1},\ \text{ where }\ U^{\rm gauge}_\chi=\left( \begin{array}{cc} e^{i\chi }  & 0 \\ 0  & e^{-i\chi }  \end{array} \right).
\end{equation}
It is extended correspondingly to $(\G, a)$ by $T_\chi^{\rm gauge} (\G , a)=(T_\chi^{\rm gauge} \G , a+\nabla \chi)$. Notice the difference in the action 
  on the diagonal and off-diagonal elements of $\G$.

The  invariance under the gauge transformations follows from the relation
\begin{equation}\label{Lam-gauge-cov}
	\Lambda(T_\chi^{\rm gauge} (\G , a))=  T_\chi^{\rm gauge} (\Lambda(\G, a)),
\end{equation}	
 shown by using the operator calculus.

\smallskip



We consider stationary solutions to \eqref{BdG-eq-t}, with $a$ independent of $t$, of the form 
\begin{align}\label{stat-sol}
   \G_t :=  T^{\rm gauge}_\chi \G = U_\chi^{\rm gauge} \G  (U_\chi^{\rm gauge})^{-1},
\end{align}
where $\G$ 
 is independent of $t$.  
 We have  \begin{proposition}\label{prop:stat-sol} 
The operator-family \eqref{stat-sol}, with $\G$ independent of $t$ and  $\dot\chi$, constant, say $\dot\chi\equiv -\mu$,  is a solution to \eqref{BdG-eq-t}, if and only if $\G$ solves the equation 
\begin{equation}\label{BdG-eq-stat'}
[\Lam_{\G a}, \G ]=0, 
\end{equation}
where $\Lam_{\G a}\equiv \Lam_{\G a \mu}$ is given explicitly  in \eqref{Lam'}.
\DETAILS{\begin{align} \label{Lam'} 
 \Lam_{\G a }:=\left( \begin{array}{cc} h_{\g a \mu} & v^\sharp \s \\ v^\sharp \s^* & - \bar h_{\g a \mu} \end{array} \right), \end{align}
with $h_{\g a \mu}:=h_{\g a}-\mu$ and, recall, $\hga  =-\Delta_a+ 
  v^* \gamma  - v^\sharp\, \gamma$ (see \eqref{hgam-a}).}
\end{proposition}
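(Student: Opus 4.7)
The plan is to substitute the ansatz $\G_t = U \G U^{-1}$, where $U := U_\chi^{\rm gauge}$, into the time-dependent equation \eqref{BdG-eq-t} and to exploit the gauge covariance \eqref{Lam-gauge-cov} to reduce it to a stationary commutator equation for $\G$.

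First I would compute the left-hand side $i\partial_t \G_t$. Writing $M := \mathrm{diag}(\one, -\one)$, the matrix $U = \mathrm{diag}(e^{i\chi}, e^{-i\chi})$ is block-diagonal, so $M$ commutes with $U$ and $\dot U = i\dot\chi\, M U$. A direct differentiation (using $\frac{d}{dt}(U^{-1}) = -U^{-1}\dot U\, U^{-1}$) then gives $\partial_t \G_t = i\dot\chi\,[M, \G_t]$, which with $\dot\chi \equiv -\mu$ becomes $i\partial_t \G_t = \mu\,[M, \G_t]$.

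For the right-hand side, I invoke the gauge covariance identity \eqref{Lam-gauge-cov}. We may take $\chi(t, x) = -\mu t$ without loss of generality: any purely spatial part of $\chi$ can be absorbed into $(\G, a)$ by a single, static gauge transformation, and since $\nabla \chi = 0$ the vector potential $a$ is then unchanged by $T_\chi^{\rm gauge}$. Hence $\Lambda(\G_t, a) = U \Lambda(\G, a) U^{-1}$, and
\begin{equation*}
[\Lambda(\G_t, a), \G_t] \;=\; U\,[\Lambda(\G, a), \G]\,U^{-1}.
\end{equation*}
Equating the two sides, using once more $[M, \G_t] = U\,[M, \G]\,U^{-1}$, and conjugating out $U$, yields $[\Lambda(\G, a) - \mu M, \G] = 0$.

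It remains to identify $\Lambda(\G, a) - \mu M$ with $\Lam_{\G a \mu}$ of \eqref{Lam'}: the $\pm\mu$ shifts on the diagonal blocks transform $h_{\g a}$ into $h_{\g a \mu}$ and $-\bar h_{\g a}$ into $-\bar h_{\g a \mu}$, while the off-diagonal entries are untouched (and $\bar\s = \s^*$ from \eqref{gam-al-prop} reconciles the two notations). The converse direction follows by reversing each step. The calculation is essentially $2\times 2$-block bookkeeping; the only mild subtlety is tracking the \emph{opposite} signs of the gauge phase on the diagonal and off-diagonal entries of $U$, which is precisely what produces the chemical-potential shift $\mu M$ and turns $\Lambda(\G, a)$ into $\Lam_{\G a \mu}$. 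I do not anticipate any substantive obstacle.
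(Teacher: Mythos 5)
Your proposal is correct and follows essentially the same route as the paper's proof: differentiate the conjugated ansatz to get $i\partial_t\G_t = -\dot\chi\,[S,\G_t]$ with $S=\mathrm{diag}(1,-1)$ (your $M$), use the gauge covariance \eqref{Lam-gauge-cov} with $\chi$ spatially constant so that $a+\nabla\chi=a$, conjugate out $U_\chi$, and identify $\Lambda(\G,a)-\mu S$ with $\Lam_{\G a\mu}$. No substantive difference from the paper's argument.
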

\begin{proof}   
We write $U_\chi^{\rm gauge}\equiv U_\chi$ and use that $\partial_t U_\chi = i \dot \chi S U_\chi $, where
\[S:=\left( \begin{array}{cc} 1 & 0 \\ 0 & - 1 \end{array} \right),\] and therefore $\partial_t (U_\chi \eta U_\chi^{-1}) = i\dot \chi [S, U_\chi \eta U_\chi^{-1}]= i\dot \chi U_\chi [S, \eta] U_\chi^{-1}$. Plugging \eqref{stat-sol} into \eqref{BdG-eq-t},  and using \eqref{Lam-gauge-cov}, the previous relation  and that $\chi$ is independent of $x$, we see that 
\begin{align}
	-\dot \chi U_\chi [S, \eta] U_\chi^{-1} 
	 =& [\Lambda(U_\chi \eta U_\chi^{-1}, a), U_\chi \eta U_\chi^{-1} ] \\
		=& [\Lambda(U_\chi \eta U_\chi^{-1}, a+\nabla \chi), U_\chi \eta U_\chi^{-1} ] \\
		=& U_\chi [\Lambda(\eta, a),\eta] U_\chi^{-1}.
\end{align}
 Since $\dot\chi\equiv -\mu$, 
 it follows then that
$	[\Lambda(\eta, a)-\mu S,\eta]=0.$ Since $
 \Lambda(\G, a) - \mu S$ is equal to \eqref{Lam'}, the last equation is exactly \eqref{BdG-eq-stat'} which gives the statement of the proposition.
\end{proof}

 For any reasonable function $f$ and  time-independent $a$, solutions of the equation 
 \begin{align} \label{Gam-eq''} 
	\G=f(\frac{1}{T} \Lambda_{\G a}),
\end{align} 
solve \eqref{BdG-eq-stat'} and therefore give stationary solutions of \eqref{BdG-eq-t}. 
Conversely, solutions to \eqref{BdG-eq-stat'}, s.t. the spectrum of $ \Lambda_{\G a}$ is simple, solve \eqref{Gam-eq''}. (The parameter $T>0$, the temperature, is introduced here for the future reference.)

Inverting the function $f$, one can rewrite \eqref{Gam-eq''} as $\Lambda_{\G a}=T f^{-1}(\G)$. Let $f^{-1}=: g'$. Then the stationary BdG equations 
can be written (in the Coulomb gauge $\divv a=0$) as \eqref{Gam-eq}-\eqref{a-eq}.
\DETAILS{\begin{align} \label{Gam-eq}
	&\Lam_{\G a} -T g'(\G) =0,\\ 
 \label{a-eq} 
     &\CURL^*\CURL a - j(\G, a)=0.
\end{align}}

The physical function $f$ is selected by either a thermodynamic limit (Gibbs states) or by coupling the system in question to a thermal reservoir (or imposing  the maximum entropy principle). 
 It is given by the Fermi-Dirac distribution  \eqref{FD-distr}
It follows from the equations $g'=f^{-1}$ and \eqref{FD-distr} that   
the function $g'$ is given by \eqref{g'} and the function $g$ is equal to 
\begin{equation} \label{g}
	g(\lam)= - \lam\ln \lam  - (1-\lam) \ln (1-\lam).
\end{equation}
From now on, 
 {\it we assume  that $f(\lam)$ and $g(\lam)$ are given in \eqref{FD-distr} and \eqref{g}, respectively}. 

\medskip


Our next result shows that the BdG equations arise as the Euler-Lagrange equations for the free energy functional   $F_{T}$ differentiated 
 along the perturbations (`tangent vectors') at $(\G, a)$ from the following class (c.f. also Lemma \ref{lem:al-gam-bnd}(1))
\begin{align}\label{pert-class}& \cP_\G:=\{(\G', a') \in \hat I^1 \times \vec H^1:  \text{ \eqref{Gam'-cond} holds } \},\\
\label{Gam'-cond}  
&J^* \G' J=-\bar\G',\   (\G')^2 \lesssim [\G (1-\G)]^2, \text{ and } \Tr(S_1\G') = 0,
\end{align}
where $S_1 = \text{diag}(1,0)$ and $J$ is defined in \eqref{Gam-prop}. 
 Conditions \eqref{Gam'-cond} 
  are designed to handle a delicate problem of non-differentiability of $s(\lam):=2 \lam \ln \lam$ at $\lam=0$, while providing a sufficiently rich set, $\cP_\G$, to derive the BdG equations. 

\begin{theorem} \label{thm:BdG=EL}
(a) The free energy functional   $F_{T}$ is well defined on the space $ \mathcal{D}^1_\nu \times \frachvec^1$.

(b) 
$F_T$ is continuously (G\^ateaux  or Fr\'echet) differentiable at $(\G, a) \in \mathcal{D}^1_\nu \times \frachvec^1$, 
   with respect of perturbations $(\G', a')\in \cP_\G$. 

(c) 
Critical points of $F_T$, which are even (in the sense of the definition \eqref{even}) and obey  $0 < \G < 1$, satisfy 
 the  BdG equations for some $\mu$ (determined by the constraint $\Tr \g = \nu$). 
 
(d)  Minimizers of $F_T$ over $\mathcal{D}^1_\nu \times \frachbvec^1$ satisfy the conditions of statement (c) and  therefore satisfy the  BdG equations for some $\mu$. 
\end{theorem}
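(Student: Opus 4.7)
For part (a), I would bound each summand of $F_T = E - T S - \mu N$ separately on $\mathcal{D}^1_\nu \times \frachvec^1$. The kinetic piece $\Tr((-\Delta_a)\gamma)$ is controlled by writing $-\Delta_a = M_b^2 + R$ with $R$ first order in $a - a_b \in \vec H^1$, and exploiting $\gamma \in I^{1,1}$ (i.e.\ $M_b \gamma M_b \in I^1$) together with the operator and density estimates from Appendix \ref{sec:den-est}. The direct, exchange and pair-coupling terms are bounded in terms of $\|v\|_\infty$, $\|\gamma\|_{I^1}$ and $\|\alpha\|_{I^2}$, using a periodic $L^\infty$-bound on $v*\rho_\gamma$. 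The field term is finite since $a \in \vec H^1$, and $N(\G) \le \|\gamma\|_{I^1}$. Finiteness of $S(\G) = \Tr g(\G)$ on $\hat I^1$ under the particle-hole symmetry \eqref{Gam-prop} is provided by the entropy bounds of Appendix \ref{sec:entropy}.

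For part (b), the energy, number and field contributions are polynomials of degree at most two in $(\G, a)$, so their Fréchet differentiability is immediate from the bounds in (a). The entropy is the delicate term, since $g'(\lam) = -\ln(\lam/(1-\lam))$ is singular at $\lam \in \{0,1\}$. This is precisely why the perturbation class $\cP_\G$ is cut out by \eqref{Gam'-cond}: the bound $(\G')^2 \ls [\G(1-\G)]^2$ forces $\G'$ to vanish on the spectral subspaces of $\G$ near the endpoints, making $\Tr(g'(\G)\G')$ well-defined. Appendix \ref{sec:entropy} then establishes Fréchet differentiability of $S$ on $\cP_\G$ together with the formula $d_\G S(\G) \G' = -\Tr(g'(\G)\G')$.

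For part (c), a direct computation from \eqref{energy} yields $d_\G E(\G,a) \G' = \Tr(\Lambda(\G,a) \G')$, with $\Lambda$ as in \eqref{Lam}, while the variation in $a$ yields the Ampère law \eqref{a-eq}. Combining with the entropy derivative from (b) and the single scalar constraint $\Tr(S_1 \G') = 0$ built into $\cP_\G$, stationarity of $F_T$ on $\cP_\G$ gives $\Tr\big[(\Lambda(\G,a) + T g'(\G))\G'\big] = 0$ for all admissible $\G'$, whence $\Lambda(\G,a) + T g'(\G) = \mu S_1$ for some Lagrange multiplier $\mu$ (determined by $\Tr\gamma = \nu$). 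The strict bound $0 < \G < 1$ guarantees that $\cP_\G$ is large enough for this operator identity to hold unobstructed. Using the particle-hole symmetry \eqref{Lam-ph-sym} one may then symmetrize the diagonal blocks, replacing $\mu S_1$ by $\mu S$ with $S = \diag(1,-1)$, recovering $\Lam_{\G a} = T g'(\G)$, i.e.\ \eqref{Gam-eq}.

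The main obstacle lies in part (d): showing that every minimizer $(\G, a)$ satisfies the strict spectral bound $0 < \G < 1$, since otherwise (c) cannot be applied. I would argue by contradiction. Suppose $\G$ has an eigenvalue at $0$ (the case $1$ is symmetric via particle-hole); construct a two-eigenvalue, trace-preserving variation $\G \mapsto \G + \eps \G'$ moving a mass $\eps > 0$ onto that boundary eigenvalue from an interior one, with $\G' \in \hat I^1$ and respecting the constraints of $\mathcal{D}^1_\nu$. Using $g(\eps) = -\eps\ln\eps + O(\eps)$, the leading change in $F_T$ becomes
\begin{align*}
\Delta F_T = -T \eps |\ln \eps| + O(\eps),
\end{align*}
strictly negative for small $\eps > 0$, contradicting minimality. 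Evenness of minimizers then follows from the reflection symmetry of $F_T$: restricting the variational problem to the closed, reflection-invariant even subspace produces a minimizer which, by invariance of the gradient under reflections, is automatically a critical point of the unrestricted $F_T$ (the gradient at an even configuration is even, hence pairs to zero with odd perturbations and vanishes on even perturbations by minimality). Part (c) now applies and yields the BdG equations.
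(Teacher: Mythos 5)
Your overall architecture matches the paper's: (a) and (b) are handled by term-by-term bounds with the entropy singled out and controlled via the condition $(\G')^2 \ls [\G(1-\G)]^2$ from \eqref{Gam'-cond} (this is exactly what Proposition \ref{prop:S-deriv} does, via the resolvent representation of $\ln$); part (d)'s strict bound $0<\G<1$ is obtained by the same $-T\e\ln\e$ entropy-gain contradiction as the paper's Lemma \ref{lem:eta*EVs}, and your observation that the boundary perturbation lies outside $\cP_\G$, so that the non-differentiable $\e\ln\e$ term survives, is the right mechanism. In part (c), your claim that $0<\G<1$ makes the admissible directions dense is the correct key point; the paper realizes it concretely by constructing rank-two perturbations $\G'_v=(P_v-P_{J\bar v})-\frac{\Tr SP_v}{\Tr SP_{v_0}}(P_{v_0}-P_{J\bar v_0})$ with $v\in\Ran(\G(1-\G))$ and verifying all three conditions of \eqref{Gam'-cond}, which is a nontrivial check you would still need to supply. (There is also a sign slip: since $F_T=E-TS-\mu N$ and $d_\G S(\G)\G'=\Tr(g'(\G)\G')$, stationarity gives $\Lambda(\G,a)-Tg'(\G)=\mu S$, not $\Lambda+Tg'$.)

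The genuine gap is in your treatment of the $a$-equation in part (c). You assert that "the variation in $a$ yields the Amp\`ere law," but the perturbations $a'\in\vec H^1$ are constrained to be divergence-free and mean-zero, so stationarity only yields that $w:=\curl^*\curl a - j(\G,a)$ is $L^2(\Om)$-orthogonal to that subspace — i.e.\ $w$ could a priori be a gradient plus a constant. To upgrade this to $w=0$ one must separately prove that $j(\gamma,a)$ is divergence-free and has zero mean. The paper does the first by differentiating the gauge invariance of the energy, which reduces $\div j=0$ to the vanishing on the diagonal of the integral kernel of $[\gamma,h_{\gamma,a}]$; this in turn is extracted from the already-derived equation $[\Lambda(\G,a),\G]=0$ together with the symmetry $v(x)=v(-x)$, which kills the off-diagonal contribution $(v^\sharp\alpha)\bar\alpha-\alpha(v^\sharp\bar\alpha)$ on the diagonal. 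The mean-zero property is where the evenness hypothesis of statement (c) is actually used: $\gamma$ even and $a$ odd force $j(\gamma,a)$ to be odd, hence of zero mean over $\Om$. Your proposal never invokes evenness in part (c), which is the telltale sign of this omission; without these two steps the Amp\`ere equation \eqref{a-eq} does not follow from criticality.
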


This theorem is proven in  Appendix \ref{sec:energy}. For the translation invariant case, it is proven 
  in \cite{HHSS}. 
For $a=0$,  but $\G$ not necessarily translation invariant,  the fact that the BdG equations are the Euler-Lagrange equations of the free energy functional is used in  \cite{FHSS}, but, it seems, with no proof provided. 

As a result of Theorem \ref{thm:BdG=EL}, we can write the  BdG equations as 
\begin{align}\label{F'-eq} 
F'_{T }(\G, a)=0, \end{align}
where the map  $F'_{T }(\G, a)$ is defined by the r.h.s. of the static BdG equations \eqref{Gam-eq}-\eqref{a-eq} and can be thought of as a gradient map of $F_T$.
 \DETAILS{Hence,   we can write the equations  \eqref{Gam-eq}--\eqref{a-eq} 
as 
\begin{align}\label{F'-eq} 
F'_{T }(\G, a)=0,\end{align}
 where $F'_{T }(\G, a)=( F'_{T \eta}(\G, a), F'_{T a}(\G, a))$ 
  is the formal G\^ateaux derivative (more precisely, the trace metric gradients) of  
  the free energy,  $ F_{T}(\G, a)$, in $\G$ and $a$.} 
\DETAILS{\begin{align}\label{F'T} 
	F'_{T }(\G, a):=\Lam_{\G a } -T g'(\G). 
\end{align} } 

\medskip


\section{The normal states: 
 Proof of Theorem \ref{thm:norm-state-exist}}\label{sec:norm-states-exist} 
 
We begin with some results about the magnetic translations.
Recall  the operators $T_{b s}$ and $U^{\rm gauge}_{\chi}$, defined in \eqref{mag-transl} and \eqref{gauge-transf'}. Group properties of  
   $T_{b s}$ are established in following 
 \begin{lemma} \label{lem:mtProjRep'}
The operators $T_{b s}$, defined in \eqref{mag-transl} and restricted to $\G$'s, 
 satisfy 
\begin{align} \label{TsTt}	& T_{b s}T_{b t} = \hat I_{b s t}
  T_{b s+t},\end{align}
  where $\hat I_{b s t}$ acts on bounded operators on $\mathfrak{h}$ as 
  \begin{align}  \label{Ist}&\hat I_{b s t} A:= I_{b s t} A I_{b s t}^{-1},\ I_{b s t} := 
  U^{\rm gauge}_{\frac b2 (s \wedge t)}.
\end{align}
\end{lemma}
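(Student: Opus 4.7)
The plan is to reduce the claim to an algebraic identity on the operators that implement $T_{bs}$ on $\G$. By \eqref{gauge-transf'} and \eqref{mag-transl}, the map $T_{bs}$ acts on an operator-matrix $\G$ on $\frach \oplus \frach$ as conjugation by the single operator
$$\mathcal{U}_{bs} := (U^{\rm gauge}_{\chi^b_s})^{-1}\, U^{\rm trans}_s, \qquad U^{\rm trans}_s := \mathrm{diag}(u^{\rm trans}_s, u^{\rm trans}_s).$$
Hence, to prove \eqref{TsTt}--\eqref{Ist}, it suffices to show $\mathcal{U}_{bs}\mathcal{U}_{bt} = I_{bst}\, \mathcal{U}_{b,s+t}$; conjugating $\G$ by both sides then gives $T_{bs}T_{bt}\G = I_{bst}\mathcal{U}_{b,s+t}\,\G\,\mathcal{U}_{b,s+t}^{-1}I_{bst}^{-1} = \hat I_{bst}\, T_{b,s+t}\G$.

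The second step is to commute the translation past the gauge. Since $u^{\rm trans}_s$ acts as $f(x)\mapsto f(x+s)$, it intertwines multiplication by any function $\phi$ with multiplication by $\phi(\cdot+s)$. Applied entrywise to $U^{\rm gauge}_{\chi^b_t}=\mathrm{diag}(e^{i\chi^b_t}, e^{-i\chi^b_t})$, this yields $U^{\rm trans}_s (U^{\rm gauge}_{\chi^b_t})^{-1} = (U^{\rm gauge}_{\chi^b_t(\cdot+s)})^{-1} U^{\rm trans}_s$. Combined with $U^{\rm trans}_s U^{\rm trans}_t = U^{\rm trans}_{s+t}$ and the commutativity of diagonal multiplication operators, this gives
$$\mathcal{U}_{bs}\mathcal{U}_{bt} = \bigl(U^{\rm gauge}_{\chi^b_s + \chi^b_t(\cdot+s)}\bigr)^{-1} U^{\rm trans}_{s+t}.$$

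The heart of the proof is then a one-line identity for the co-cycle defect. Using $\chi^b_s(x)=\frac{b}{2}(s\wedge x)$ from \eqref{chibs'} and bilinearity of the wedge product,
$$\chi^b_s(x) + \chi^b_t(x+s) = \frac{b}{2}\bigl(s\wedge x + t\wedge x + t\wedge s\bigr) = \chi^b_{s+t}(x) - \frac{b}{2}(s\wedge t).$$
Since the constant $\frac{b}{2}(s\wedge t)$ commutes as a multiplication operator with every $U^{\rm gauge}_\chi$, and $U^{\rm gauge}_{c_1+c_2}=U^{\rm gauge}_{c_1}U^{\rm gauge}_{c_2}$ for multiplication operators, I may pull out the constant to obtain $\mathcal{U}_{bs}\mathcal{U}_{bt} = U^{\rm gauge}_{\frac{b}{2}(s\wedge t)}\mathcal{U}_{b,s+t} = I_{bst}\mathcal{U}_{b,s+t}$, which, after conjugating $\G$, is precisely \eqref{TsTt}. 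No step poses a real analytic obstacle; the only place requiring care is tracking the antisymmetry $s\wedge t=-t\wedge s$ in the bookkeeping, which forces the defect to be a nontrivial projective phase and is the reason $T_{bs}$ is only a projective (not a true) representation of $\R^2$.
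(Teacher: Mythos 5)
Your proof is correct and follows essentially the same route as the paper: both reduce the claim to the identity $U_{bs}U_{bt}=I_{bst}U_{b,s+t}$ for the implementing operators, commute the translation past the gauge factor, and evaluate the co-cycle defect $\chi^b_s(x)+\chi^b_t(x+s)-\chi^b_{s+t}(x)=-\tfrac b2(s\wedge t)$ from the bilinearity and antisymmetry of the wedge product. Nothing essential differs from the paper's argument.
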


\begin{proof}[Proof of Lemma \ref{lem:mtProjRep'}]
Let $U^{\rm trans}_{ s}:=\diag(u_s^{\rm trans}, u_s^{\rm trans})$  on $L^2_{\rm loc}(\R^2)\times L^2_{\rm loc}(\R^2)$ and define the transformations 
 \begin{align}\label{Ubs}U_{b s}:= (U^{\rm gauge}_{\chib})^{-1}U^{\rm trans}_{ s}.\end{align} Then 
 $T_{b s} \G =U_{b s} \G U_{b s}^{-1}$. Using the definition above and the relation $U^{\rm trans}_{ s}U^{\rm gauge}_{\chi}=U^{\rm gauge}_{u_s^{\rm trans}\chi} U^{\rm trans}_{ s}$, we compute 
 \[U_{b s} U_{b t}= (U^{\rm gauge}_{\chib+ u_s^{\rm trans}\chibt})^{-1}U^{\rm trans}_{ s+t}= (U^{\rm gauge}_{\chib+ u_s^{\rm trans}\chibt-\chi^b_{s+t}})^{-1}U_{b s+t}.\] By definition \eqref{chibs'}, 
 we have $\chib+ u_s^{\rm trans}\chibt-\chi^b_{s+t} =- \frac b2(s\wedge t)$, 
 which implies $U_{b s} U_{b t}=I_{b s t} 
 U_{b s+t}=  U_{b s+t}I_{b s t} $.  
  Hence, the result follows.
\end{proof}

Now,  Lemma \ref{lem:mtProjRep'} implies Proposition \ref{prop:mt-inv-al0}:
 \begin{proof}[Proof of Proposition \ref{prop:mt-inv-al0}]
 \eqref{TsTt}-\eqref{Ist} and the mt-invariance, \eqref{mt-invar}, yield $\G=T_{b s}T_{b t} \G=I_{b s t} \G I_{b s t}^{-1}$, which  implies that $\alpha 
= e^{i\frac b2 (s \wedge t)}\alpha$ for all $s,t \in \R^2$. This yields that $\alpha = 0$.
\end{proof}
Note that, {\it ignoring} $a$, the fact that $T_{b s}$ maps the set of bounded diagonal operator-matrices on $\frach\times \frach$ into itself  follows from  \eqref{TsTt}-\eqref{Ist} and the fact that $\hat I_{b s t}$ is an identity on diagonal operator-matrices.

  Proposition \ref{prop:mt-inv-al0} shows that we can restrict our search of  mt-invariant solutions to normal states.
For normal states, i.e. for $\alpha = 0$, the BdG equations \eqref{Gam-eq}-\eqref{a-eq} (with  \eqref{Gam-eq} replaced by \eqref{Gam-eq'})  reduce to the following equations for $\g$ and $a$
\begin{align}
	& \gamma = f_{T}( h_{\g a  \mu}),  \label{eqn:NormalBdGPart1} \\
	& \curl^* \curl a = j(\g,a), \label{eqn:NormalBdGPart2}
\end{align}
where, recall, $j(\g, a)(x) :=([-i \na,\g]_+)(x, x)$. These are the coupled Hartree-Fock and Amp\'ere equations.

 Recall that $\ab(x)$ is the magnetic potential with the constant magnetic field $b$ ($\curl a_b=b$) in the gauge s.t. 
  $\ab(x)=\frac b2 *x, *x:=(-x_2, x_1)$. First, we show that the second equation is automatically satisfied for $a=a_b$ and $\gamma$ a magnetically translation invariant operator, which is even in the sense of \eqref{even}.

 Define  the magnetic translations $u_{b s}$ as (cf. \eqref{ubs} and \eqref{Ubs})
\begin{align}\label{ubs'} u_{b s} := u^{\rm gauge}_{-\chib} u^{\rm trans}_{ s},\end{align} where, recall the operators $u^{\rm gauge}_{\chi}$ and $u^{\rm trans}_{ s}$ are defined are defined after  \eqref{ubs} 
  and $\chi_{ \cdot}^b: \lat\times \R^d\ra \R$ is given in \eqref{chibs'}. Recall  the definition of the space $\frach$ in \eqref{fh-lat}.
      \begin{lemma}\label{lem:grad-den} With  the definition 
  $\tau_{b s}(\g)=u_{b s}\g {u_{b s}}^{-1}$, we have (cf. \eqref{TsTt}) 
\begin{align} \label{usuh'}	& u_{b s} u_{b h} = e^{-i b h\wedge s}  u_{b h}u_{b s},\\
\label{TsTt'}	  &\tau_{b s} \tau_{b h}=\tau_{b h}\tau_{b s},\\   
&\label{frachb-invar} \text{If $\g$ maps $\frach$ into itself, then so does } \tau_{b h}\g.\end{align} 
\end{lemma}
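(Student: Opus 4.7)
The plan is to derive all three parts from the explicit form $u_{bs}=u^{\rm gauge}_{-\chi_s^b}u^{\rm trans}_s$ with $\chi_s^b(x)=\tfrac b2(s\wedge x)$, together with the intertwining identity $u^{\rm trans}_s u^{\rm gauge}_\chi=u^{\rm gauge}_{u^{\rm trans}_s\chi}u^{\rm trans}_s$ already used in the proof of Lemma \ref{lem:mtProjRep'}.

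For \eqref{usuh'}, I compute
$u_{bs}u_{bh}=u^{\rm gauge}_{-\chi_s^b-u^{\rm trans}_s\chi_h^b}\,u^{\rm trans}_{s+h}.$
Since $\chi_h^b(x+s)=\chi_h^b(x)+\tfrac b2(h\wedge s)$, the constant shift factors out as a scalar phase and one gets $u_{bs}u_{bh}=e^{-ib(h\wedge s)/2}\,u^{\rm gauge}_{-\chi_s^b-\chi_h^b}\,u^{\rm trans}_{s+h}$. Swapping $s\leftrightarrow h$ produces the same expression but with $h\wedge s$ replaced by $s\wedge h=-h\wedge s$; taking the ratio yields \eqref{usuh'}. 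For \eqref{TsTt'}, the phase $e^{-ib(h\wedge s)}$ is a scalar and commutes through conjugation, so it cancels in the products $u_{bs}u_{bh}\,\gamma\,u_{bh}^{-1}u_{bs}^{-1}$ versus $u_{bh}u_{bs}\,\gamma\,u_{bs}^{-1}u_{bh}^{-1}$, giving $\tau_{bs}\tau_{bh}=\tau_{bh}\tau_{bs}$.

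For \eqref{frachb-invar}, I interpret $\gamma$ as a bounded operator on $L^2_{\rm loc}(\R^2)$ whose preservation of $\frach$ is equivalent to the commutation $[u_s^\lat,\gamma]=0$ for every $s\in\lat$. Since $\chi_s^\lat=\chi_s^b+c_s$ with $c_s$ a constant, one has $u_s^\lat=e^{-ic_s}u_{bs}$ on $\lat$, so the constants $e^{\pm ic_s}$ drop out and \eqref{usuh'} promotes to
$u_s^\lat u_{bh}=e^{-ib(h\wedge s)}u_{bh}u_s^\lat, \qquad u_s^\lat u_{bh}^{-1}=e^{ib(h\wedge s)}u_{bh}^{-1}u_s^\lat.$
Hence
$u_s^\lat\,\tau_{bh}(\gamma)=u_s^\lat u_{bh}\gamma u_{bh}^{-1}=e^{-ib(h\wedge s)}u_{bh}u_s^\lat\gamma u_{bh}^{-1}=e^{-ib(h\wedge s)}u_{bh}\gamma u_s^\lat u_{bh}^{-1}=u_{bh}\gamma u_{bh}^{-1}u_s^\lat=\tau_{bh}(\gamma)\,u_s^\lat,$
the two phases cancelling, which proves $\tau_{bh}(\gamma)$ commutes with every $u_s^\lat$ and therefore preserves $\frach$.

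The only real subtlety, which is what I would flag as the main obstacle, is bookkeeping the domain on which the operators act in part \eqref{frachb-invar}: $u_{bh}$ does not itself map $\frach$ to $\frach$ for generic $h\in\R^2$, since $u_s^\lat u_{bh}f=e^{-ib(h\wedge s)}u_{bh}f$ rather than $u_{bh}f$. Thus one cannot literally compose $\gamma\circ u_{bh}^{-1}$ as maps on $\frach$, and the statement should be read by first extending $\gamma$ to an operator on the ambient $L^2_{\rm loc}$ that commutes with magnetic lattice translations; the computation above then lives there and the conclusion that $\tau_{bh}(\gamma)$ again commutes with $u_s^\lat$ is exactly the preservation of $\frach$. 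Modulo this interpretive step, everything reduces to the phase arithmetic of \eqref{usuh'}.
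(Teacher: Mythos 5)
Your proof is correct and follows essentially the same route as the paper's: the identical phase computation gives \eqref{usuh'}, the scalar phase cancels under conjugation for \eqref{TsTt'}, and your direct commutation of $u_s^\lat$ through $u_{bh}\g u_{bh}^{-1}$ is just the unpacked form of the paper's one-line argument $\tau_{b s}\tau_{b h}(\g)=\tau_{b h}\tau_{b s}(\g)=\tau_{b h}(\g)$. The interpretive step you flag --- reading ``$\g$ maps $\frach$ into itself'' as commutation of $\g$ with all $u_s^\lat$, $s\in\lat$, on the ambient $L^2_{\rm loc}$ --- is exactly the convention the paper itself adopts, so no gap there.
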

\begin{proof} 

 \eqref{usuh'} follows from $u_{b s} u_{b h} = e^{-i\frac b2 h\wedge s} e^{-i\chi_{b s}}  u_{b h}\tau_s^{\rm trans}= e^{-i \frac b2 h\wedge s} e^{i\frac b2 s\wedge h}u_{b h}u_{b s}$. 
 To prove \eqref{TsTt'}, we 
 use that $\tau_s^{\rm trans} e^{-i\chi_{b h}}=e^{-i\frac b2 h\wedge s} e^{-i\chi_{b s}} \tau_h^{\rm trans}$, where, recall, $\chi_{b s}$ is defined in \eqref{chibs}, 
and \eqref{usuh'} which yield $(u_{b s} u_{b h})^{-1}=e^{i b h\wedge s} (u_{b h}u_{b s})^{-1}$ and therefore, due to  $\tau_{b h}\g:=u_{b h}\g u_{b h}^{-1}$, \eqref{TsTt'} follows.

Finally, $\g$ maps the space $\frach$ into itself if and only if  $
 \tau_{b s}(\g)=\g,\  \forall s\in \lat$. On the other hand, \eqref{TsTt'} implies $\tau_{b s} \tau_{b h}(\g)=\tau_{b h}\tau_{b s}(\g)=\tau_{b h}(\g),\  \forall  s\in \lat, h\in \R^2$, so \eqref{frachb-invar} follows. 
\end{proof}
Note that \eqref{usuh'} shows that $u_{b h}$ does not leave $\frach$ invariant. 

 The generators of the magnetic translations, $u_{bs}$, defined in \eqref{ubs'}, and their properties are described in the following
\begin{lemma}[\cite{AHS}]\label{lem:gen-mt}
Let $p_{b} = -i\nabla_{a_b}$ and $\pi_{b} = -\bar p_{ b}:=- \cC p_{ b} \cC$, with the components $p_{b i}$ and $\pi_{b i}$. Then 
\begin{enumerate}
	\item $-i\partial_{s_j} \mid_{s=0} u_{b s} =  \pi_{b j}$; 
	\item $[\pi_{b i}, p_{b j}] = 0$ and therefore $[u_{bs}, p_{b j}] = 0$. 
\end{enumerate}
\end{lemma}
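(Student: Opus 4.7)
The plan is to verify all three assertions by direct computation in the symmetric gauge. The arithmetic facts that drive everything are
\begin{align*}
\partial_{s_j}\chi^b_s(x)\big|_{s=0}=-a_{b,j}(x),\qquad \nabla_x\chi^b_s(x)=a_b(s),\qquad a_{b,j}(x+s)=a_{b,j}(x)+a_{b,j}(s),
\end{align*}
all of which follow from the explicit form $\chi^b_s(x)=\tfrac{b}{2}(s\wedge x)$ in \eqref{chibs'} together with $a_b(x)=\tfrac{b}{2}(-x_2,x_1)$.

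For part (1), I would write out $u_{bs}f(x)=e^{-i\chi^b_s(x)}f(x+s)$ from \eqref{ubs'} and differentiate in $s_j$ at $s=0$; the first identity above produces $-i\partial_{s_j}|_{s=0}u_{bs}f(x)=a_{b,j}(x)f(x)-i\partial_jf(x)=(-i\partial_j+a_{b,j})f(x)$. Since $p_{bj}=-i\partial_j-a_{b,j}$ and the complex conjugation $\cC$ fixes the real-valued $a_{b,j}$ while flipping the sign of $-i\partial_j$, we have $\pi_{bj}=-\cC p_{bj}\cC=-i\partial_j+a_{b,j}$, which matches.

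For $[\pi_{bi},p_{bj}]=0$, I would expand the commutator using $\pi_{bi}=-i\partial_i+a_{b,i}$ and $p_{bj}=-i\partial_j-a_{b,j}$: the derivative-derivative and multiplication-multiplication pieces commute trivially, while the cross terms contribute $i(\partial_i a_{b,j}+\partial_j a_{b,i})$. In the symmetric gauge the matrix $\partial_i a_{b,j}=\tfrac{b}{2}\varepsilon_{ij}$ is antisymmetric, so the symmetric combination vanishes identically.

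For $[u_{bs},p_{bj}]=0$, the clean approach is to factorize $u_{bs}=e^{-i\chi^b_s}u^{\rm trans}_s$ and conjugate $p_{bj}$ through each factor separately. The linearity identity yields $u^{\rm trans}_s p_{bj}(u^{\rm trans}_s)^{-1}=p_{bj}-a_{b,j}(s)$, and the identity $\nabla_x\chi^b_s=a_b(s)$ yields $e^{-i\chi^b_s}p_{bj}e^{i\chi^b_s}=p_{bj}+\partial_j\chi^b_s=p_{bj}+a_{b,j}(s)$. The two constant shifts cancel, so $u_{bs}p_{bj}u_{bs}^{-1}=p_{bj}$. Conceptually, the gauge factor in \eqref{chibs'} is precisely designed so that its gradient compensates the translation-induced shift of $a_b$, leaving the covariant momentum invariant. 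The only obstacle is careful bookkeeping of signs---distinguishing $p_b$ from its complex conjugate $\pi_b$, and tracking gauge versus translation sign conventions---but once these are pinned down, all three statements reduce to short direct computations on a core of smooth functions, with no functional-analytic subtleties.
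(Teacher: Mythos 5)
Your computations are correct. Note that the paper itself offers no proof of this lemma at all: it is stated with a citation to \cite{AHS}, so there is nothing internal to compare against, and a self-contained verification like yours is exactly what a reader would want. All three of your arithmetic identities check out in the symmetric gauge ($\partial_{s_j}\chi^b_s(x)|_{s=0}=-a_{b,j}(x)$ since $\chi^b_s(x)=\tfrac b2(s_1x_2-s_2x_1)$ while $a_b(x)=\tfrac b2(-x_2,x_1)$; $\nabla_x\chi^b_s=a_b(s)$; and linearity of $a_b$), the sign bookkeeping for $\pi_{bj}=-\cC p_{bj}\cC=-i\partial_j+a_{b,j}$ is right, and the cross-term computation $[\pi_{bi},p_{bj}]=i(\partial_i a_{b,j}+\partial_j a_{b,i})=0$ is correct because $\partial_i a_{b,j}=\tfrac b2\varepsilon_{ij}$ is antisymmetric. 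One small point worth flagging: the lemma's phrasing ``and therefore $[u_{bs},p_{bj}]=0$'' suggests deducing the group statement by exponentiating the infinitesimal one, which is slightly delicate here because the components $\pi_{b1},\pi_{b2}$ do not commute with each other (this is precisely the noncommutativity recorded in \eqref{usuh'}). Your factorized-conjugation argument, $u^{\rm trans}_s p_{bj}(u^{\rm trans}_s)^{-1}=p_{bj}-a_{b,j}(s)$ cancelling against $e^{-i\chi^b_s}p_{bj}e^{i\chi^b_s}=p_{bj}+a_{b,j}(s)$, sidesteps that issue entirely and is the cleaner route; it proves item (2) directly rather than as a corollary of $[\pi_{bi},p_{bj}]=0$.
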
 
    
 To begin with, we define the operator $\g$ on $L^2_{\rm loc}(\R^2)$   which could be further specified as $\frach$. 

 For an operator $A$ on $\frach$, we can define the integral kernel $A'(x, y)$ by the relation
 \begin{align}\label{A-A'-rel}\lan g\otimes \bar f, A'\ran_{\frach\otimes \frach} 
=\lan g, A f\ran_{\frach},\ \quad 
\forall f, g\in \frach. \end{align}

In this section, it is convenient to use the notation den$(A)$ for the one-particle density $\rho_A$. 
 For an operator $A$ on $\frach$, s.t. $f A$ and $A f$ are  trace-class, the density den$(A)\equiv \rho_A$ obeys the  relation  
 \begin{align}\label{den-def}\int f \den (A)=\Tr( f A),\ \quad \forall f\in L^\infty.\end{align} 
which can be also used as a definition of den$(A)$. (Recall our convention $\int\equiv \int_{\Om}$ for an arbitrary and fixed lattice cell $\Om$.) With this notation, we have e.g. $j(\g, a) :=\den([-i \na,\g]_+)$.

We assume that all operators below are originally defined on $L^2_{\rm loc}(\R^2)$ (or on a local Sobolev space). This allows us to define compositions and commutators of operators some of which do not leave $\frach$ invariant. 

Recall that we say an operator $A$ to be magnetically translation invariant ({\it mt-invariant}, for short), if and only if it  satisfies $\tau_{b h} A = A,\ \forall h\in \R^2$. Our key result here is the following
\begin{proposition}\label{prop:mt-invar-oprs} 
(i) For a trace-class, mt-invariant operator $A$ on $\frach$, den$(A)$ is constant. (ii) If, in addition, $ \tau^{\rm refl} A =- A$ (with  the reflection operator $\tau^{\rm refl}$ defined before \eqref{even}),   then den$(A)=0$.
\end{proposition}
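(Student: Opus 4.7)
The plan is to use the trace characterization \eqref{den-def} of the density together with the observation that conjugation by the magnetic translation $u_{b h}$ reduces, on multiplication operators, to an ordinary translation of the multiplier. Indeed, since $u_{b h} = u^{\rm gauge}_{-\chi_h^b} u^{\rm trans}_h$ and the gauge factor $e^{-i \chi_h^b(x)}$ is itself a multiplication operator, it commutes with any multiplication by $f$, and a direct computation then gives $u_{b h}^{-1} f \, u_{b h} = f(\cdot - h)$.

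For (i), I would take any $\lat$-periodic $f \in L^\infty$ and, using mt-invariance $u_{b h} A u_{b h}^{-1} = A$ together with cyclicity of the trace, compute
\begin{align*}
\int f \, \den(A) = \Tr(f A) = \Tr\!\big(f \, u_{b h} A u_{b h}^{-1}\big) = \Tr\!\big((u_{b h}^{-1} f \, u_{b h}) A\big) = \int f(\cdot - h) \, \den(A).
\end{align*}
Since $\den(A)$ is $\lat$-periodic (as $A$ preserves $\frach$, by the $s \in \lat$ case of mt-invariance), a change of variables on a fundamental cell yields $\int_\Omega f \,\big(\den(A) - \den(A)(\cdot + h)\big) = 0$ for every $\lat$-periodic $f$ and every $h \in \R^2$. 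Hence $\den(A)$ is translation invariant on $\R^2$ and therefore constant, proving (i).

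For (ii), I would repeat the same move with $u^{\rm refl}$ in place of $u_{b h}$. Since $(u^{\rm refl})^2 = \one$, the reflection symmetry $\tau^{\rm refl} A = -A$ reads $u^{\rm refl} A u^{\rm refl} = -A$, and $u^{\rm refl} f \, u^{\rm refl}$ is simply multiplication by $f(-\cdot)$. Cyclicity together with part (i) then give, for the constant value $c := \den(A)$,
\begin{align*}
c \int f = \int f \, \den(A) = -\int f(-\cdot) \, \den(A) = -c \int f,
\end{align*}
so choosing $f$ with $\int f \neq 0$ forces $c = 0$.

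The one subtlety I anticipate is the cyclicity step when $h \notin \lat$, because then $u_{b h}$ does not preserve $\frach$ (as noted after Lemma~\ref{lem:grad-den}). This is sidestepped by observing that the operator which actually needs to act on $\frach$ is $u_{b h}^{-1} f \, u_{b h}$, which is merely a multiplication operator by a $\lat$-periodic function and is therefore bounded on $\frach$; the identity is then the usual cyclicity $\Tr(B A) = \Tr(A B)$ applied to bounded $B$ and trace-class $A$ on $\frach$.
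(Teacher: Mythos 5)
Your argument is correct, but it takes a genuinely different route from the paper's. The paper works infinitesimally: it first proves the kernel identity $\nabla\,\den(A)=\den([\nabla_a,A])$ of \eqref{grad-den}, then uses the generator $\pi_b$ of the magnetic translations (Lemma \ref{lem:gen-mt}) to convert mt-invariance into $[\pi_b,A]=0$, and concludes $\nabla\,\den(A)=\den([\pi_b,A])=0$. You instead work at the group level, combining the trace characterization \eqref{den-def} with the identity $u_{bh}^{-1}f\,u_{bh}=f(\cdot-h)$ to show that $\den(A)$ is invariant under all translations of $\R^2$. Your version is more elementary --- it needs neither the differentiation of $u_{bs}$ in $s$ nor the identity \eqref{grad-den} --- and it treats part (ii) explicitly, whereas the paper's displayed proof covers only part (i) and leaves ``constant and odd implies zero'' implicit. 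What the paper's infinitesimal route buys is that the object it traces, $[\pi_b,A]$, manifestly acts on $\frach$, so its cyclicity step is unproblematic.

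That is exactly the point at which your write-up should be tightened. The equality $\Tr\big(f\,u_{bh}Au_{bh}^{-1}\big)=\Tr\big((u_{bh}^{-1}fu_{bh})A\big)$ is not literally ``the usual cyclicity on $\frach$'': the factor being cycled, $u_{bh}^{-1}$, is not an operator on $\frach$ when $h\notin\lat$, and the observation that the reassembled product $u_{bh}^{-1}fu_{bh}$ does act on $\frach$ does not by itself license the manipulation. The clean justification is the operator identity $f\,\tau_{bh}(A)=\tau_{bh}(\tilde f A)$ with $\tilde f:=u_{bh}^{-1}fu_{bh}$, together with $\Tr_{\frach}\big(\tau_{bh}(B)\big)=\Tr_{\frach}(B)$ for any trace-class $B$ on $\frach$ whose conjugate $\tau_{bh}(B)$ also acts on $\frach$ (guaranteed by \eqref{frachb-invar}): the trace is the integral over a fundamental cell of the $\lat$-periodic diagonal of the kernel, conjugation by $u_{bh}$ translates that diagonal by $h$ (the gauge phases cancel on the diagonal), and shifting the cell of integration does not change the integral. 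With that substitution your proof is complete; for part (ii) no such care is needed, since $u^{\rm refl}$ does preserve $\frach$.
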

 
Recall that $\nabla_a:=\nabla - i a$. We derive Proposition \ref{prop:mt-invar-oprs} from the following 
\begin{lemma}\label{lem:grad-den}
For any linear vector field $a$ and any integral operator $A$ on $\frach$,  $[\n_{a}, A]$  leaves $\frach$ invariant (though $\frach$ is not invariant under  $\nabla_a$) and 
 \begin{align}\label{grad-den}\nabla \den(A)  = \den([\nabla_a, A])\end{align}
\end{lemma}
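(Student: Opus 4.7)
The plan is to establish the two claims of Lemma \ref{lem:grad-den} by separate direct computations. Throughout, I would view $A$ as an operator on $L^2_{\mathrm{loc}}(\R^2)$ extended naturally from $\frach$, so that its integral kernel obeys $A(x+s,y+s) = e^{i(\chi_s^\lat(x)-\chi_s^\lat(y))}A(x,y)$ for $s\in\lat$, equivalently $u_s^\lat A (u_s^\lat)^{-1} = A$.

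For the first claim, that $[\nabla_a,A]$ preserves $\frach$, a direct calculation using $\nabla\chi_s^\lat(x) = a_b(s)$ (a vector constant in $x$) and the linearity $a(x+s) = a(x)+a(s)$ gives
\[
u_s^\lat\,\nabla_a\,(u_s^\lat)^{-1} = \nabla_a + i\bigl(a_b(s) - a(s)\bigr),\qquad s\in\lat.
\]
The correction $i(a_b(s)-a(s))$ is a scalar constant in $x$, so it commutes with $A$; combined with $u_s^\lat A(u_s^\lat)^{-1} = A$, this yields $u_s^\lat[\nabla_a,A](u_s^\lat)^{-1} = [\nabla_a,A]$, i.e.\ $[\nabla_a,A]$ preserves $\frach$. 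For generic linear $a\neq a_b$ the correction is nonzero, which is precisely why $\nabla_a$ by itself fails to preserve $\frach$, as the parenthetical in the statement notes.

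For the density identity, I would argue at the level of kernels. Write $\nabla_a = \nabla - ia$ and split $[\nabla_a, A] = [\nabla, A] - i[a, A]$. The kernel of $[a,A]$ is $(a(x)-a(y))A(x,y)$, which vanishes on the diagonal, so $\den([a,A]) = 0$. The kernel of $[\nabla,A]$ is $(\nabla_x + \nabla_y)A(x,y)$, and restricting to $y=x$ reproduces $\nabla_x(A(x,x)) = \nabla\den(A)(x)$ by the chain rule. Hence $\den([\nabla_a,A]) = \nabla\den(A)$. An equivalent route is to pair against a test $f\in L^\infty$, apply \eqref{den-def}, and use cyclicity of the trace together with $[\nabla_a,f] = \nabla f$:
\[
\int f\,\den([\nabla_a,A]) = \Tr(f[\nabla_a,A]) = -\Tr([\nabla_a,f]A) = -\int(\nabla f)\,\den(A) = \int f\,\nabla\den(A),
\]
the last step being integration by parts on a fundamental cell of $\lat$ (no boundary terms because $f$ and $\den(A)$ are both $\lat$-periodic).

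The only real obstacle is justifying that all operators in sight are sufficiently regular — specifically, that the restriction of the kernel to the diagonal is meaningful and that the cyclic rearrangement of the trace is valid. Both will hold under standard trace-ideal hypotheses such as $A\in I^{1,1}$ used elsewhere in the paper, since then $M_b A$ and $A M_b$ are Hilbert–Schmidt and their composition with bounded multiplication operators is trace class. Everything else is a routine direct calculation.
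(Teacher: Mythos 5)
Your computation of $u_s^\lat \nabla_a (u_s^\lat)^{-1} = \nabla_a + i(a_b(s)-a(s))$ is correct and in fact more explicit than the paper, which dismisses the invariance claim as ``straightforward''; observing that the correction is a constant, hence commutes with $A$, is exactly the right reason, and it also explains the parenthetical that $\frach$ itself is not invariant under $\nabla_a$. Likewise, handling the multiplication part through the kernel $(a(x)-a(y))A(x,y)$ vanishing on the diagonal is precisely the paper's device (applied there to $[a_b-a,A]$ rather than $[a,A]$).

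The gap is in the gradient part. Both of your routes --- restricting the kernel of $[\nabla,A]$ to the diagonal after an integration by parts, and the cyclicity step $\Tr(fA\nabla_a)=\Tr(\nabla_a fA)$ --- implicitly treat $\nabla$ (resp.\ $\nabla_a$) as an operator on $\frach$. It is not one: conjugation by $u_s^\lat$ shifts it by the nonzero constant $ia_b(s)$ (resp.\ $i(a_b(s)-a(s))$). Consequently the cyclicity of the trace per cell cannot be invoked for $\Tr(fA\nabla_a)$, and the identity $\Tr(f[\nabla_a,A])=-\Tr([\nabla_a,f]A)$, i.e.\ $\Tr([\nabla_a,fA])=0$, is exactly what must be proved; the obstacle is this domain issue, not the trace-ideal regularity you point to. The paper's proof is organized around it: one writes $\nabla_a=\nabla_{a_b}+i(a_b-a)$, applies cyclicity only to the $\nabla_{a_b}$ piece (which genuinely preserves $\frach$, so $\int f\,\den([\nabla_{a_b},A])=-\int \nabla f\,\den(A)$ is legitimate), and kills the multiplication piece $[a_b-a,A]$ by the diagonal-vanishing-kernel argument. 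Your decomposition $\nabla_a=\nabla-ia$ can be repaired the same way (write $\nabla=\nabla_{a_b}+ia_b$ and use $\den([a_b,A])=0$), but as written the crucial step is asserted rather than justified; the paper itself relegates your kernel-level version to a remark labelled ``intuitive.''
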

\begin{proof}
  Since $a$ is linear, the invariance of $\frach$ under  $[\n_{a}, A]$  is straightforward. 
We prove \eqref{grad-den}.
We have  \begin{align}\label{den-commut}\den ([\nabla_{a}, A]) =\den ([\nabla_{a_b}, A]) + i\den([a_b-a, A]).\end{align} Since $\nabla_{a_b}$ leaves $\frach$ invariant, we can use  the cyclicity of the trace to compute  
\begin{align}
\notag	\int f \den ([\nabla_{a_b}, A]) &= \Tr_{\frach}(f[\nabla_{a_b}, A])\\ 
\label{den-commut2}		&= -\Tr_{\frach}(\nabla f A)= -\int \nabla f \den(A). 
\end{align}
For any (linear) vector field $c$, the integral kernel of $[c, A]$ is $(c(x)-c(y))A(x, y)$ (see \eqref{A-A'-rel}). Hence, we see that $\den([c, A])=0$. 
 Combining this with \eqref{den-commut} and \eqref{den-commut2} gives $\int f \den ([\nabla_{a}, A])= -\int \nabla f \den(A)$ for any $f\in L^\infty(\R^2)$ and $\lat$-periodic. Hence $\den ([\nabla_{a}, A])=\nabla  \den[A]$. 
\end{proof}

\begin{proof}[Proof of Proposition \ref{prop:mt-invar-oprs}]Since by Lemma \ref{lem:gen-mt}, 
 $\tau_{b h}$ is generated by the map $A\ra i [\pi_{b}, A] $, the mt-invariance of $A$ implies that $[\pi_{b}, A]=0$. (Though $\pi_{b}$ does not leave $A$ invariant, $[\pi_{b}, A]$ does.) This and \eqref{grad-den} yield that $\nabla \den(A) =  \den([\pi_b, A])=0$ and therefore den $ A$ is constant.
\end{proof}

\begin{rem}\label{rem:another-pf-mt-inv-sol} {\em   
 The argument above proving \eqref{grad-den} establishes the intuitive fact that the integral kernel of the operator $[\nabla_a, A]$ acting on $\frach$ is same as the integral kernel of this operator acting on $L^2(\Om)$, which is $(\nabla_{a x}+\overline{\nabla_{a y}})A'(x, y)=(\nabla_{ x}+\nabla_{ y})A'(x, y)$ and consequently $\den([\nabla_a, A])=(\nabla_{ x}+\nabla_{ y})A'(x, y)|_{x=y}=\n \den(A)$.}

\end{rem}
\begin{lemma} \label{lem:gbExistenceAndUniqueness}
\eqref{eqn:NormalBdGPart1}-\eqref{eqn:NormalBdGPart2} have a solution $(\gamma_b, a_b)$, where $\gamma_b \geq 0$ and $\gamma_b$ is an mt-invariant operator on the space $\frach$, 
 if and only if the fixed point problem 
\begin{align}\label{xi-fp}
	\xi = (\int v) \den(f_{T}(h_{ \ab  \mu} + \xi))(0), 
\end{align}
where $h_{ \ab  \mu}:=-\Delta_{ a_b } - \mu$ and  $\xi$ is a real number, has a solution.  (The operator $f_{T}(h_{ a_b } + \xi)$ is well-defined and real since $h_{ a_b }$ is self-adjoint on the space $\frach$.)
\end{lemma}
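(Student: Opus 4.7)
The idea is that mt-invariance collapses the nonlinear operator equation for $\gamma$ to a scalar equation for the constant $v*\rho_\gamma$. The reduction proceeds in one direction by reading off this constant from any given solution, and in the other by constructing $(\gamma_b,a_b)$ explicitly and verifying each BdG equation in turn.

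For the forward implication, suppose $(\gamma_b,a_b)$ solves \eqref{eqn:NormalBdGPart1}--\eqref{eqn:NormalBdGPart2} with $\gamma_b\ge 0$ mt-invariant on $\frach$. Proposition~\ref{prop:mt-invar-oprs}(i) gives that $\rho_{\gamma_b}$ is constant, so by $\lat$-periodicity of $v$ the Hartree potential reduces to $v*\rho_{\gamma_b}=\rho_{\gamma_b}\int v=:\xi$, a scalar. With the exchange term dropped this means $h_{\gamma_b a_b\mu}=h_{a_b\mu}+\xi$, and \eqref{eqn:NormalBdGPart1} becomes $\gamma_b=f_T(h_{a_b\mu}+\xi)$. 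Taking the density at $0$ and multiplying by $\int v$ produces exactly \eqref{xi-fp}.

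For the converse, given $\xi$ solving \eqref{xi-fp} I would set $\gamma_b:=f_T(h_{a_b\mu}+\xi)$. Then $\gamma_b\ge 0$ is immediate. For mt-invariance, Lemma~\ref{lem:gen-mt}(2) gives $[u_{bs},p_b]=0$, so $u_{bs}$ commutes with $p_b^2=-\Delta_{a_b}$ and, by the Borel functional calculus, with any Borel function of it; in particular $\tau_{bs}\gamma_b=\gamma_b$. The same argument with $u^{\mathrm{refl}}$ in place of $u_{bs}$ gives evenness, since $u^{\mathrm{refl}}p_b u^{\mathrm{refl}}=-p_b$ (the linear odd field $a_b$ satisfies $u^{\mathrm{refl}}a_b u^{\mathrm{refl}}=-a_b$, while $u^{\mathrm{refl}}(-i\nabla)u^{\mathrm{refl}}=i\nabla$), so that $u^{\mathrm{refl}}p_b^2 u^{\mathrm{refl}}=p_b^2$. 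Applying Proposition~\ref{prop:mt-invar-oprs}(i) then gives $\rho_{\gamma_b}\equiv\den(\gamma_b)(0)$, and \eqref{xi-fp} identifies this constant so that $v*\rho_{\gamma_b}=\xi$; substituting back yields \eqref{eqn:NormalBdGPart1} at $a=a_b$.

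It remains to verify Ampère's law at $a=a_b$. The left side vanishes because $\curl a_b=b$ is constant, so $\curl^*\curl a_b=\curl^* b=0$. For the right side, I read $j(\gamma_b,a_b)$ as the gauge-covariant expression $\den([p_b,\gamma_b]_+)$, which is the form consistent with the G\^ateaux derivative of the kinetic term that enters the energy in \eqref{energy-t}. Then $[p_b,\gamma_b]_+$ is mt-invariant, since both factors commute with each $u_{bs}$, and moreover $\tau^{\mathrm{refl}}([p_b,\gamma_b]_+)=-[p_b,\gamma_b]_+$ because conjugation by $u^{\mathrm{refl}}$ flips the sign of $p_b$ while fixing $\gamma_b$. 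Proposition~\ref{prop:mt-invar-oprs}(ii) then gives $j(\gamma_b,a_b)=0$, completing \eqref{eqn:NormalBdGPart2}.

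The main obstacle is precisely this last step: mt-invariance alone only forces $j$ to be constant and need not make it vanish. The additional reflection-antisymmetry, which is supplied by our symmetric-gauge choice of $a_b$ together with the functional-calculus form of $\gamma_b$, is what closes the argument and makes Proposition~\ref{prop:mt-invar-oprs}(ii) applicable. (The actual existence of a fixed point $\xi$ for \eqref{xi-fp} is a separate scalar question, deferred to Appendix~\ref{sec:xi-fp}; here we only establish the equivalence.)
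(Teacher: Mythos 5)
Your proof is correct and follows essentially the same route as the paper's: Proposition~\ref{prop:mt-invar-oprs}(i) collapses the Hartree term to the scalar $\xi=\int v\,\rho_\gamma(0)$ in both directions, and the Amp\`ere equation is disposed of via $\curl^*\curl a_b=0$ together with Proposition~\ref{prop:mt-invar-oprs}(ii) applied to the mt-invariant, reflection-odd operator built from $p_b=-i\nabla_{a_b}$ and $\gamma_b$ (the paper applies it to $-i\nabla_{a_b}\gamma$, you to the anticommutator — the same argument). Your explicit functional-calculus justification of the mt-invariance and evenness of $\gamma_b=f_T(h_{a_b\mu}+\xi)$ in the converse direction only makes explicit what the paper leaves implicit.
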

We show in Appendix \ref{sec:xi-fp} that the fixed point problem \eqref{xi-fp} has a unique solution, 
provided 
the first condition of Theorem \ref{thm:norm-state-exist} holds.  
\begin{proof} Let $\g$ be an even, mt-invariant operator. Then  
$\tilde\g=-i \COVGRAD{a_b}\g$ is an mt-invariant operator and odd.  Applying Proposition \ref{prop:mt-invar-oprs} to $\tilde\g=-i \COVGRAD{a_b}\g$ 
  gives that  $ j(\g, \ab)=0$ and therefore, since $\curl^*\curl \ab=0$, the pair $(\g, \ab)$ satisfies
  \eqref{eqn:NormalBdGPart2}. 
 Hence $(\g, \ab)$ solves  \eqref{eqn:NormalBdGPart1}-\eqref{eqn:NormalBdGPart2} if and only if $\g$ solves	
\begin{align}
	& \gamma = f_{T}( h_{\g \ab \mu}).  \label{eqn:NormalBdGPart1'}
\end{align}

Now, we solve \eqref{eqn:NormalBdGPart1'} for magnetically translation invariant $\g$'s. We treat this equation 
  as a fixed point problem\footnote{If both the direct and exchange self-interactions are dropped from $h_{\g a \mu}$, then the latter equation gives $\g_{T b }$ directly: $\g_{T b }= f_{T}(h_{ \ab  \mu})$, where, recall, $h_{ \ab  \mu}:=-\Delta_{\ab} - \mu$. 
}. This problem simplifies considerably since we dropped the exchange term, as in this case, $h_{\g \ab \mu}$ becomes $h_{ \ab  \mu} + v*\rho$, where, recall, $h_{ a_b  \mu}:=-\Delta_{ \ab } - \mu$. Using this and applying den to \eqref{eqn:NormalBdGPart1} gives the equation for $\rho=\den\g$: 
\[\rho=\den(f_{T}(h_{ \ab  \mu} + v*\rho)).\] 
 Furthermore, by Proposition \ref{prop:mt-invar-oprs}, $\rho=\den\g$ and $\den(f_{T}(h_{ a_b  \mu} + v*\rho)$ are constant functions and therefore $\xi = v*\rho=\int v \rho(0)$ is a real constant satisfying the fixed point equation \eqref{xi-fp}.

 To summarize, we have shown that if an  mt-invariant $\g$ solves \eqref{eqn:NormalBdGPart1'} (i.e. \eqref{eqn:NormalBdGPart1} with $a=\ab$), then $\xi = v*\den\g=\int v \rho(0)$ is a real constant solving \eqref{xi-fp}.

\DETAILS{{\bf(this seems redundant)} $***$ Conversely, if $\gamma$ solves the BdG equation \eqref{eqn:NormalBdGPart1}, then $\xi = v*\den(\gamma)$ satisfies the equation  
\begin{align}
	\xi 
	&= v*\den(f_{\rm FD}((h_{ a_b } + v*\den(\gamma))/T)) \\
		&= v*\den(f_{\rm FD}((h_{ a_b } + \xi)/T)).
\end{align}
So this $\xi$ is solution to \eqref{xi-fp}. 
This proves Lemma \ref{lem:gbExistenceAndUniqueness}.  $***$} 

Now, in the opposite direction, suppose that a real $\xi$ solves \eqref{xi-fp} and define \begin{align} \label{gam-def}\gamma := f_{T}(h_{ a_b  \mu} + \xi).\end{align} 
Since $f_{T} > 0$, $h_{ a_b }$ is self-adjoint and $\xi$ is real, we have that $\gamma \geq 0$. Since $\xi=v*\den(f_{T}(h_{ a_b  \mu} + \xi))=v*\rho_\g$, where, recall,  $\rho_\g\equiv \den(\gamma)$,  \eqref{gam-def} becomes 
\begin{align} \label{eqn:gammaDefInTermsOfxi}
	\gamma 
		&= f_{T}(h_{ a_b  \mu} + v*\rho_\g).
\end{align}
Hence $\gamma$ satisfies \eqref{eqn:NormalBdGPart1'}, i.e. \eqref{eqn:NormalBdGPart1} with $a=\ab$. 
\end{proof}

Since, as we show in Appendix \ref{sec:xi-fp}, the fixed point problem \eqref{xi-fp} has a unique solution, 
 provided the potential $v$ satisfies 
  the first  condition of Theorem \ref{thm:norm-state-exist}, 
we obtain an unique magnetic translation invariant solution of  \eqref{eqn:NormalBdGPart1}-\eqref{eqn:NormalBdGPart2}, under the same condition. 

\DETAILS{Assume that $\gamma_1,\gamma_2$ are two solutions to \eqref{eqn:NormalBdGPart1}. Then we may form the corresponding $\xi_i = v*\den(\gamma_i)$ for $i=1,2$. Uniqueness of solution of equation \eqref{xi-fp} dictates that $\xi_1=\xi_2$. Therefore,
\begin{align}
	\gamma_1 =& f_{\rm FD}((h_{ a_b }+v*\den(\gamma_1))/T) \\
		=& f_{\rm FD}((h_{ a_b }+\xi_1)/T) \\
		=& f_{\rm FD}((h_{ a_b }+\xi_2)/T) \\
		=& f_{\rm FD}((h_{ a_b }+v*\den(\gamma_2))/T) = \gamma_2.
\end{align}}

The uniqueness part of Lemma \ref{lem:gbExistenceAndUniqueness} is strengthened and extended to the second condition in the following 
\begin{lemma} \label{lem:gbExistenceAndUniqueness'}
A solution of \eqref{eqn:NormalBdGPart1}-\eqref{eqn:NormalBdGPart2} is unique among pairs $(\gamma, a)$ with $\gamma$ mt-invariant.  
\end{lemma}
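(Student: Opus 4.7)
The plan is to reduce the uniqueness statement to Lemma \ref{lem:gbExistenceAndUniqueness} by showing that the mt-invariance of $\gamma$ alone, combined with \eqref{eqn:NormalBdGPart1}, rigidly forces the vector potential to be $a = a_b$. Once $a$ is pinned down, any two mt-invariant solutions $(\gamma_1, a_b)$ and $(\gamma_2, a_b)$ produce two solutions $\xi_j := (\int v)\,\rho_{\gamma_j}$ of the scalar fixed-point problem \eqref{xi-fp}, which coincide by the uniqueness established in Appendix \ref{sec:xi-fp} (under the hypotheses of Theorem \ref{thm:norm-state-exist}), and so $\gamma_1 = \gamma_2$.

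Let $(\gamma,a) \in \mathcal{D}\times\frachvec^r$ be a solution with $\gamma$ mt-invariant, and set $a' := a - a_b \in \vec H^1$. By Proposition \ref{prop:mt-invar-oprs}, $\rho := \den(\gamma)$ is a constant function, so $v*\rho = (\int v)\rho$ is a real constant. With the exchange term dropped, \eqref{eqn:NormalBdGPart1} becomes $\gamma = f_T(-\Delta_a + c)$ with $c := (\int v)\rho - \mu \in \R$. Because $f_T$ is strictly monotone (hence injective), the functional calculus converts the mt-invariance $\tau_{bh}\gamma = \gamma$, valid for \emph{every} $h \in \R^2$, into
\begin{align*}
\tau_{bh}(-\Delta_a) = -\Delta_a, \qquad \forall h \in \R^2.
\end{align*}

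Differentiating at $h = 0$ and using Lemma \ref{lem:gen-mt} to identify the generator $\pi_b = -\bar p_b$ yields $[\pi_{b,i}, -\Delta_a] = 0$ componentwise. A direct commutator computation, starting from $p_a = p_{a_b} - a'$, the identity $[\pi_b, p_{a_b}] = 0$ from Lemma \ref{lem:gen-mt}, and $\div a' = 0$, gives
\begin{align*}
[\pi_{b,i}, -\Delta_a] = 2i \sum_k (\partial_i a'_k)\, p_{a,k}.
\end{align*}
Vanishing of this first-order differential operator forces $\partial_i a'_k = 0$ for all $i,k$, i.e., $a'$ is constant; combined with the zero-mean condition in the definition \eqref{fh-vec} of $\vec H^1$, this yields $a' = 0$ and hence $a = a_b$. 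From here, Lemma \ref{lem:gbExistenceAndUniqueness} produces a scalar $\xi = (\int v)\rho$ solving \eqref{xi-fp}, and uniqueness of $\xi$ finishes the argument.

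The main obstacle I anticipate is the commutator step that pins down $a = a_b$. One must carefully interpret $\tau_{bh}\gamma = \gamma$ on $L^2_{\rm loc}(\R^2)$, since $u_{bh}$ does not preserve $\frach$ when $h \notin \lat$ (cf.~\eqref{usuh'}), and then verify that vanishing of $[\pi_{b,i}, -\Delta_a]$ as an operator expression is rigid enough to rule out all nonzero $\lat$-periodic perturbations $a'$ of $a_b$. The key point is that mere $\lat$-invariance is far too weak (it allows any $\lat$-periodic $a'$); exploiting the full $\R^2$-family of magnetic translations is precisely what distinguishes mt-invariance from lattice-equivariance and makes the rigidity argument work.
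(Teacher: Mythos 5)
Your route is genuinely different from the paper's. The paper never touches the gap equation to pin down $a$: it uses the Amp\`ere equation \eqref{eqn:NormalBdGPart2}. Since $\rho_\gamma$ is constant and the paramagnetic part of the current has vanishing diagonal (Proposition \ref{prop:mt-invar-oprs}), the current reduces to $-\gamma(0,0)a'$, and the identity $\int|\curl a'|^2+\gamma(0,0)\int|a'|^2=0$ together with $\gamma(0,0)>0$ forces $a'=0$. You instead extract $a=a_b$ from \eqref{eqn:NormalBdGPart1} alone via a commutator rigidity argument; your commutator computation $[\pi_{b,i},-\Delta_a]=2i\sum_k(\partial_i a'_k)p_{a,k}$ is correct (using $[\pi_b,p_{a_b}]=0$ and $\div a'=0$), and the vanishing of this first-order operator on a locally dense set of test functions does force $\nabla a'=0$, hence $a'=0$. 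If completed, your argument would prove something stronger (rigidity from the gap equation alone), which is why it has to work harder.

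The genuine gap is the step ``$\tau_{bh}\gamma=\gamma$ and $f_T$ injective imply $\tau_{bh}(-\Delta_a)=-\Delta_a$.'' This is \emph{not} a routine functional-calculus identity here, and the difficulty is more than interpretive. The operator $\gamma$ is $f_T$ of $H:=-\Delta_a+c$ \emph{restricted to} $\frach$, where it has discrete spectrum; but $u_{bh}$ for $h\notin\lat$ maps $\frach$ onto a twisted space $\{f: u_s^\lat f=e^{i\theta_h(s)}f\}$ (cf.\ \eqref{usuh'}). Consequently $\tau_{bh}\bigl(f_T(H|_{\frach})\bigr)$ is the $u_{bh}$-conjugate of the functional calculus of $H$ taken on $\frach$, whereas $f_T\bigl((\tau_{bh}H)|_{\frach}\bigr)$ is the $u_{bh}$-conjugate of the functional calculus of $H$ taken on the \emph{twisted} space; these are built from different eigenvalue branches and eigenfunctions and do not obviously coincide. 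So you cannot simply apply $f_T^{-1}$ to both sides of $\tau_{bh}\gamma=\gamma$. The step can be repaired --- e.g.\ differentiate $\tau_{bh}\gamma=\gamma$ to get that the derivation $\delta_i:=[\pi_{b,i},\cdot]$ annihilates $\gamma$, deduce via polynomial approximation that it annihilates $G(\gamma)=(H|_{\frach}+i)^{-1}$ for the bounded continuous function $G=(f_T^{-1}+i)^{-1}$, and then test the resulting identity $\bigl(H_x-(\tau_{bh}H)_x\bigr)R(x,y)=0$ on the resolvent kernel, whose $x$-sections are locally dense in $H^2$ --- but this is a nontrivial supplementary argument, not a one-line functional-calculus inversion. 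You correctly flag this as the main obstacle but do not resolve it; the paper's energy-identity route avoids it entirely at the cost of invoking the second equation. The final reduction to the uniqueness of the fixed point $\xi$ of \eqref{xi-fp} is the same in both proofs and requires the hypothesis $\int v>0$, as you note.
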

\begin{proof} First observe that, by Proposition \ref{prop:mt-invar-oprs}, $\gamma(x,x)$ is constant and the term $\Re(-i\nabla_{a_b}\gamma)(x,x)$ vanishes. We decompose $a=a_b+a'$, where $a'$ is defined by this expression. Using \eqref{eqn:NormalBdGPart2} and $\curl^* \curl a_b = 0$, we see that
\begin{align}\label{a'-eq}
	\curl^* \curl a' = -\gamma(0,0)a'.
\end{align}
Since $h_{ a_b }+\xi$ is bounded below and $f_{T}$ is strictly positive and monotonically increasing, we see that $\gamma = f_{T}(h_{ a_b }+\xi) \geq c > 0$. Thus $\gamma(0,0) > 0$. 
 Multiplying both sides of \eqref{a'-eq} by $a'$ and integrating, we find
\begin{align}
	\int |\curl a'|^2 + \gamma(0,0) \int |a'|^2 = 0.
\end{align}
Since  $\gamma(0,0) > 0$, this implies that $a' = 0$. 
Hence $a=a_b$ and therefore \eqref{eqn:NormalBdGPart1} shows that $\gamma$ is a function of $-\Delta_{a_b}$ and therefore mt-invariant. Hence, we can conclude uniqueness by Lemma \ref{lem:gbExistenceAndUniqueness} 
 and the proof is complete.
 \end{proof}
Lemmas \ref{lem:gbExistenceAndUniqueness}, \ref{lem:gbExistenceAndUniqueness'} 
and \ref{lem:xi-fp} of Appendix \ref{sec:xi-fp} imply Theorem \ref{thm:norm-state-exist},  under  the first  
 condition.

To prove  Theorem \ref{thm:norm-state-exist} under the second condition, i.e. for $\|v\|_\infty<\infty$, 
 we use the variational approach. Let $ 
		 D^s_\nu 	=\big\{\g\in  I^{s, 1},\  0\le \gamma=\g^* \le 1,\ 
		 \g \text{ 
		 satisfies \eqref{even}},\\  
	\tr\g=\nu 
	\big\} $ (see \eqref{gam-al-prop} and \eqref{Is1}). Recall, that 
 $S(\g)$ is the entropy defined in  \eqref{S-expr}-\eqref{g-s-def} and define the free energy functional
\begin{align} \label{F1}
	F_1 (\g) = \Tr\big(h_{a_b} \gamma \big)	 + \frac12 \int \rho_\g(v* \rho_\g)  
 - TS(\g),
\end{align}
 on $D^1_\nu$, setting $F (\g) =- \infty$,  if $S(\G) = \infty$. (To compare with \eqref{FT-def}-\eqref{energy}, $\int \rho_\g(v* \rho_\g)=\Tr\big((v* \rho_\g) \gamma \big)$.) 
 We show below the following 
\begin{proposition}\label{prop:F1-min}
 Assume $v$ is $\lat$-periodic and satisfies $\|v\|_\infty<\infty$. Then  the functional $F_1 (\g)$ on $D^1_\nu$  has a minimizer; minimizers of  $F_1 (\g)$ satisfy $0< \gamma < 1$. 
\end{proposition}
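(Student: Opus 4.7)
The plan is the direct method of the calculus of variations, followed by a rank-two perturbation argument that exploits the logarithmic singularity of $g'(\lam)=\ln((1-\lam)/\lam)$ at $\lam=0,1$ to rule out eigenvalues $0$ and $1$ in the minimizer.

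For the a priori bounds and compactness I use the fermionic Gibbs variational inequality: for any $\e>0$ and any $0\le\g\le 1$,
\[
\e\Tr(-\Delta_{\ab}\g)-TS(\g)\ge -T\Tr\ln\bigl(1+e^{-\e(-\Delta_{\ab})/T}\bigr),
\]
whose right-hand side is finite because the magnetic Laplacian on $\frach$ (with twisted-periodic boundary conditions on $\Om$) has discrete spectrum tending to $+\infty$. Combined with $\lvert\int\rho_\g(v*\rho_\g)\rvert\le\|v\|_\infty\nu^2$ (from $\|\rho_\g\|_{L^1(\Om)}=\Tr\g=\nu$ and $v\in L^\infty$) and $\Tr(h_{\ab}\g)=\Tr(-\Delta_{\ab}\g)-\mu\nu$, the choice $\e=1$ gives $F_1\ge-C$ on $D^1_\nu$; applied with $\e=\tfrac12$ to a minimizing sequence $\{\g_n\}\subset D^1_\nu$ with $F_1(\g_n)\le\inf F_1+1$, the same inequality yields $\Tr(-\Delta_{\ab}\g_n)\le C'$ (so $\{\g_n\}$ is bounded in $I^{1,1}$) and, independently, $S(\g_n)\le C''$.

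Since $M_b=\sqrt{-\Delta_{\ab}}$ has compact resolvent, the $I^{1,1}$-bound and Banach--Alaoglu produce a subsequence (still written $\g_n$) with $\g_n\to\g$ in trace norm, and the constraints $0\le\g\le 1$, $\Tr\g=\nu$, and reflection evenness all pass to this limit, so $\g\in D^1_\nu$. For the three terms of $F_1$: the kinetic term $\Tr(-\Delta_{\ab}\g_n)=\|M_b\g_n^{1/2}\|_{I^2}^2$ is weakly lower semi-continuous; the Hartree interaction is continuous by $L^1$-convergence of densities paired with $v\in L^\infty$; and $S(\g_n)\to S(\g)$ because the uniform bound $S(\g_n)\le C''$ together with trace-norm convergence upgrades the general lower semi-continuity of the entropy functional to continuity along the sequence. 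Consequently $\liminf F_1(\g_n)\ge F_1(\g)$, so $\g$ minimizes.

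For the strict bounds, interpreted as $\ker\g=\ker(1-\g)=\{0\}$, I argue by contradiction. Suppose $\phi\in\ker\g$ is a unit vector; since $\Tr\g=\nu>0$ there exists a unit eigenvector $\psi\perp\phi$ with $\g\psi=\lam\psi$, $\lam\in(0,1]$ (either $\lam\in(0,1)$ from the non-trivial spectrum, or $\lam=1$ when $\g$ happens to be a projection). Because $\g$ commutes with reflection, $\phi$ and $\psi$ may each be chosen as reflection eigenvectors, so the rank-two, trace-preserving, reflection-symmetric perturbation $\g_\e:=\g+\e(|\phi\ran\lan\phi|-|\psi\ran\lan\psi|)$ lies in $D^1_\nu$ for $0<\e<\lam$, and
\[
F_1(\g_\e)-F_1(\g)=-T\bigl[g(\e)+g(\lam-\e)-g(\lam)\bigr]+O(\e)=T\e\ln\e+O(\e)<0
\]
for $\e$ sufficiently small (using $g(\e)=-\e\ln\e+O(\e)$ and, in the $\lam=1$ case, $g(1-\e)=-\e\ln\e+O(\e)$), contradicting minimality; the case $\ker(1-\g)\ne\{0\}$ is symmetric. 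The step I expect to be the main obstacle is the entropy continuity above: in infinite dimensions $\g\mapsto\Tr g(\g)$ is only lower semi-continuous under trace-norm convergence in general, and upgrading it to continuity hinges essentially on the uniform entropy bound from the Gibbs inequality together with the discreteness of the magnetic Laplacian spectrum on $\Om$.
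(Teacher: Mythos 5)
Your overall architecture---direct method for existence, then a rank-two $\e\ln\e$ perturbation to exclude the eigenvalues $0$ and $1$---is the same as the paper's (your second half is essentially the "simplified version of Lemma \ref{lem:eta*EVs}" the paper invokes), and your Gibbs-inequality coercivity bound is an equivalent packaging of the paper's explicit minimization of the convex auxiliary functional $\frac12\|\g\|_{I^{1,1}}-TS(\g)$. The genuine gap is the entropy term, and you have flagged the right spot but not closed it. First, the inequality the direct method actually needs is $\limsup_n S(\g_n)\le S(\g)$, i.e.\ \emph{upper} semicontinuity of $S$ (so that $-TS$ is lower semicontinuous); the ``general lower semicontinuity'' of $\Tr g(\g)$ you invoke points in the useless direction, and the passage from ``lsc plus a uniform bound'' to ``continuity'' is exactly where the needed half would have to be proved---no argument is given. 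Second, a uniform bound $S(\g_n)\le C''$ together with trace-norm convergence does not by itself force continuity: one can concentrate an $O(1)$ amount of entropy on many exponentially small eigenvalues at vanishing trace cost. What rules this out is the uniform kinetic bound $\Tr(M_b\g_nM_b)\le C'$, but converting it into equi-summability of the entropy tails requires a real argument. The paper's route (Lemma \ref{lem:Slsc}, following Lindblad) is to write $S(\g)=S(\g_0)-S(\g\,|\,\g_0)-\Tr[(\g-\g_0)\ln\g_0]$ with $\g_0=f_T(M_b)$ a Gibbs reference state: the relative entropy is nonnegative and weakly lower semicontinuous (a supremum over finite-rank compressions), while the linear term is continuous in $I^{1,1}$ because $0\le-\ln\g_0\ls \one+M_b/T$. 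You should either import that lemma or supply the dominated-convergence argument based on the kinetic bound; as written the proof is incomplete at its central analytic step.

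Two smaller points. Your compactness claim (``Banach--Alaoglu produces a subsequence converging in trace norm'') is also compressed: weak-$*$ compactness of a bounded set in $I^1$ gives only a weak-$*$ limit, and you need the kinetic bound together with the discreteness of the spectrum of $-\Delta_{\ab}$ to prevent loss of trace, via $\Tr\big(\g_n\,\one_{\{-\Delta_{\ab}>R\}}\big)\le C'/R$, before the standard ``weak convergence plus convergence of traces implies trace-norm convergence for nonnegative operators'' theorem applies; this is reparable but should be said. Finally, in the perturbation step the energy change $\e\big(\lan\phi,h_{\ab}\phi\ran-\lan\psi,h_{\ab}\psi\ran\big)$ is $O(\e)$ only if the kernel vector $\phi$ can be chosen with finite kinetic energy---otherwise $F_1(\g_\e)=+\infty$ and nothing is contradicted. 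The paper's Lemma \ref{lem:eta*EVs} is silent on the same point, so I note it only as a shared caveat rather than a defect specific to your argument.
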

   It follows then, by a special case of Theorem \ref{thm:BdG=EL}, that minimizers of $F_1 (\g)$ satisfy the Euler-Lagrange equation, which is equivalent to \eqref{eqn:NormalBdGPart1'}. Using the latter equation, the solution could be bootstrapped from $ I^{1, 1}$ to $ I^{2, 1}$, giving Theorem \ref{thm:norm-state-exist}.  
   $\Box$

\begin{proof}[Proof of Proposition \ref{prop:F1-min}] First, we show that $F_1 (\g)$ is coercive, namely, that \begin{align} \label{F1-lbnd} F_1 (\g) \ge  \frac12 \|\gamma\|_{I^{1, 1}}- C_{\nu, T}.  
\end{align} 
Indeed, by the definition,  $\|\gamma\|_{I^{1, 1}}=\Tr(h_{a_b} \gamma)$ and the elementary estimate 
\begin{align} \label{DI-est}
	| \int \rho (v* \rho)| \le \|v\|_\infty \|\rho\|_1^2,\end{align}
	 we have 
 \begin{align} \label{F1-deco} 
F_1 (\g) \ge  \frac12 \|\gamma\|_{I^{1, 1}} + f(\g)- \|v\|_\infty \|\rho_\g\|_1^2,\end{align}
where   
 $f(\g):= \frac12 \|\gamma\|_{I^{1, 1}}  - TS(\g)$.
We minimize the functional $f(\g)$ on the r.h.s. on the set $ I^{1, 1}_\nu:=\{\g\in I^{1, 1}: \tr \gamma=\nu\}$. Since $f(\g)$ is convex and the constraint $\Tr\g=\nu$ is linear, each solution to the standard Euler-Lagrange equation $d f(\g) - \mu d \Tr\g =0$ (written in terms of the G\^{a}teaux derivatives), where $\mu$ is the Lagrange multiplier,  is a global minimizer. The latter equation is computed to be $ \frac12  h_{a_b} - T\ln\big(\frac{\gamma}{\one -\g}\big)-\mu \one=0$.
 Solving this equation 
  gives the minimizer 
 \begin{align}\label{gam-muT}
	\gamma_{\mu, T} = f_{T}(\frac12 h_{a_b} -\mu),
\end{align}
for  $\mu$ such that $\Tr \gamma_{\mu, T} = \nu$.  By the inverse 
  function theorem, the latter equation has a solution, $\mu=\mu(T, \nu)$, for $\mu$. This shows $ - C_{\nu, T}:=\inf\{f(\g):\ \g \in I^{1, 1}_\nu\}=f(\g_{\mu, T})>-\infty$, with $\mu=\mu(T, \nu)$,  which, together with \eqref{F1-deco}, implies  \eqref{F1-lbnd}. 

Next, we show that $F_1(\g)$ is weak lower semi-continuous. To this end, we pass from the positive, trace class operators $\g$ to the Hilbert-Schmidt ones, $\ka:=\sqrt \g$. Note that $\g\in I^{1,1}, \g\ge 0 \Longleftrightarrow \ka:=\sqrt \g \in I^{1,2}$. Thus we consider $F_1(\g)=: F'(\ka)$ 
on the space $\tilde D^1_\nu 
:=\big\{\ka\in  I^{1, 2},\  0\le \ka=\ka^* \le 1,\ 		 \ka \text{ satisfies \eqref{even}},\ \tr\ka^2=\nu\big\}$. 
 The first term on the r.h.s. of \eqref{F1} satisfies $ \Tr(h_{a_b}  \gamma)=\| \ka \|_{I^{1, 2}}^2$ 
 and is quadratic in $\ka$. Hence, it is $\|\cdot \|_{I^{1,2}}$-weakly lower semi-continuous.
For the second term, we use the inequalities \eqref{DI-est} and 
\begin{align} \label{rho-est}
	 \|\rho_{\g'}-\rho_{\g}\|_1\le \| \ka' - \ka\|_{I^{2}}(\| \ka' \|_{I^{ 2}}+\| \ka \|_{I^{2}})\end{align} to show that it is also lower semi-continuous.  The third term on the r.h.s. of \eqref{F1},  $-T S(\g)$, is lower semi-continuous, 
  by Lemma \ref{lem:Slsc} of Appendix \ref{sec:entropy}. Hence $ F_1(\g)=: F'(\ka)$ is lower semi-continuous.
  
  Finally, we observe that the set $\tilde D^1_\nu$ is  closed in $I^{1,2}$ under the weak convergence. 

 With the results above, the proof of existence of a minimizer is standard. To avoid repetitions, we refer to the second paragraph after Lemma \ref{lem:Flsc}, where this is done in somewhat more complicated notation. 

Now, we establish properties of minimizers  $\g_*=\ka_*^2 $. By \eqref{F1-lbnd}, we have $S(\g_*)=\Tr g(\g_*)<\infty$. 
This and the fact that $g(\g)\ge 0$ imply that $g(\g_*)$ is trace class.
%

Finally, a simplified version of the proof of Lemma \ref{lem:eta*EVs} shows that 
$0$ and $1$ are not eigenvalues of $\g_*$ and therefore, $0 < \g_* < 1$.\end{proof}


\section{Stability/instability of the  normal states for small $T$ and $b$: Proof of Propositions  \ref{prop:T-b-large} and \ref{thm:T-b-small}}  
 \label{sec:normal-stab/instab}

 \begin{proof}[Proof of Proposition \ref{prop:T-b-large}]
 Recall that $K_{Tb}=T f(h_x/T, h_y/T)$, where $f(u, v):=\frac{u+v}{\tanh(u)+\tanh(v)}$ and $h_z$ is the operator $h_{Tb}$, defined in Proposition \ref{prop:FT''}, acting on the variable $z$. By Lemma \ref{lem:f-low-bnd} below $f(u, v)\ge 1$. (A  weaker bound $f(u, v)\ge \frac14$ which suffices for us could be easily proved directly.) \DETAILS{Indeed, assume for simplicity that $x, y\ge 0$ and write 
\begin{align}\label{f-fn}
	f(x, y)= \frac{(x+ y)(1+e^{-2x})(1+e^{-2y})}{2(1-e^{-2(x+y)})}.
\end{align} 
Since $1+e^{-u}\ge 1$ and $1-e^{-u}\le u$, for $u\ge 0$, this gives $f(u, v)\ge \frac14$ and therefore} Hence 
\begin{align}\label{KTb-bnd}
	K_{Tb}\ge  T,
\end{align}  
which implies that 
 $L_{Tb}\ge  T-\|v\|_\infty$  and consequently, Proposition \ref{prop:T-b-large} follows.  
 \end{proof} 
 
  
  \begin{proof}[Proof of Proposition \ref{thm:T-b-small}]

 We use the Birman-Schwinger principle to show that $L_{Tb}$ has a negative eigenvalue. 
   Set $w^2 = -v \ge 0$ so 
  that $L_{Tb} = K_{Tb} - (w^\sharp)^2$, where, recall,  $(w^\sharp\, \al) \, (x;y):= w(x -y)\al (x;y)$. 
  
  By the Birman-Schwinger principle,  $L_{Tb}$ has a negative eigenvalue $-E$ if and only if $G_{Tb}(E):= w^\sharp(K_{Tb}+E)^{-1} w^\sharp$  has the eigenvalue $1$ for some $E > 0$ (see e.g. \cite{GS}). 
By \eqref{KTb-bnd}, we have  $G_{Tb}(E)\ge 0$  for all $E\ge 0$.  Moreover, since $(K_{Tb}+E)^{-1}$ is continuous and monotonically decreasing in $E\ge 0$  and vanishing as $E\ra \infty$, so is $G_{Tb}(E)$. Hence, it suffices to show that $G_{Tb}:=G_{Tb}(0)$ satisfies the estimate $\|G_{Tb}\|>1$, which we now prove.

 
Recall that $K_{Tb}=T f(h_x/T, h_y/T)$, where $f(s, t):=\frac{s+t}{\tanh(s)+\tanh(t)}$ and $h_z$ is the operator $h_{Tb}$, defined in Proposition \ref{prop:FT''},  
 acting on the variable $z$. 
Since $h_{Tb}$ satisfies $h_{Tb}\ge - \mu'$, for some $\mu'>\mu$, it suffices to consider $f(s, t)$ for $s, t\ge -\mu'$. A simple estimate 
\begin{align}\label{f-est} 
f(s, t)\ls 1+|s+t|, \end{align} 
for $s, t\ge -\mu'$, which follows from Lemma \ref{lem:f-low-bnd} below, yields $K_{Tb}\ls T +|h_x + h_y|$. This implies the inequality
\[G_{Tb}\ge 
w^\sharp (T+ |h_x + h_y|)^{-1} w^\sharp \ge 0.\]

Since we omit  the 
exchange term $-v^\sharp \gamma$,  the operator $h_{Tb}$ is of the form $h_{ a_b  \mu}:=-\Delta_{a_b}+ v*  \rho_{\gamma_{Tb}} -\mu$. 
By Proposition \ref{prop:mt-invar-oprs}, $\rho_{\gamma_{Tb}}$ is a constant function and therefore $\xi:= v*\rho=\int v \rho(0)$ is a real constant. Hence  $h_{ a_b  \mu}:=-\Delta_{a_b}+ \xi -\mu$. 
 
Since the gaps between  the eigenvalues $\lambda_n=b(2n+1)$  of  $-\Delta_{a_b}$ on $\frach$ are equal to $b$, we can choose $m$ s.t. $|\lambda_m+ \xi-\mu|\ls b$.

 Recall that $L_{Tb}$ acts on the space of the Hilbert-Schmidt operators which can be identified through their integral kernels with $\mathfrak{h} \otimes \mathfrak{h}$. 
 Let $\phi_m$ be the normalized eigenfunctions of $-\Delta_{a_b}$ corresponding to  the eigenvalues $\lambda_m=b(2m+1)$. We take 
$u:=c(w^\sharp)^{-1}(\phi_m\otimes \phi_m)$, where $c=\|(w^\sharp)^{-1}(\phi_m\otimes \phi_m)\|^{-1}$, so that $\|u\|=1$.  By analyzing $f(s, t)$, $s, t\ge -\mu'$,  separately in several domains, we obtain
\begin{align}\label{f-est} \lan u, G_{Tb} u\ran&\gs (T+|\lambda_m+ \xi-\mu|)^{-1}\|(w^\sharp)^{-1}(\phi_m\otimes \phi_m)\|^{-2}\notag\\
&\gs (T+b)^{-1}\|(w^\sharp)^{-1}(\phi_m\otimes \phi_m)\|^{-2}. \end{align}
(\eqref{f-est} also follows from the stronger, but more involved, Lemma \ref{lem:f-low-bnd} below.)
Now, write $\phi_m (x)= \sqrt{b}\phi_m^0(\sqrt b x)$, where $\phi_m^0 (x)$ is independent of $b$. Furthermore, by  the assumption on $v=-w^2$, we have $w(x-y)\gs (1+|x-y|)^{-\kappa/2}, \kappa<2$. 
The last two relations, together with  the inequality $(a+b)^\kappa\le 2^\kappa (a^\kappa+b^\kappa)$,  imply \[\|(w^\sharp)^{-1}(\phi_m\otimes \phi_m)\|^2 \ls \int(1+|x-y|^{\kappa})|\sqrt{b}\phi_m^0(\sqrt b x) \sqrt{b}\phi_m^0(\sqrt b y))|^2dxdy.\]
Changing the variables of integration as $x'=\sqrt b x, y'=\sqrt b y$, we find  $\|(w^\sharp)^{-1}(\phi_m\otimes \phi_m)\| \ls  b^{-\kappa/4}$, which in turn gives  $\lan u, G_{Tb} u\ran \gs  (T+b)^{-1} b^{\kappa/2}\ra \infty$ as $ b\ra 0,$ provided $ T\ls b^\sigma, \sigma >\kappa/2$.

Thus we have shown that $\|G_{Tb}\|$, or the largest eigenvalue of $G_{Tb}$, can be made arbitrarily large if $T$ and $b$ are sufficiently small, which, by the Birman-Schwinger principle, proves Proposition \ref{thm:T-b-small}. 
 %
\end{proof}
Bounds on the function $f(s, t):= \frac{s+ t}{\tanh(s)+\tanh( t)}$ used in the proof above could be proved directly; they also follow from the following

\begin{lemma}\label{lem:f-low-bnd}
The function $f(s, t):= \frac{s+ t}{\tanh(s)+\tanh( t)}$ has the minimum $1$ achieved at $s = t=0$.
\end{lemma}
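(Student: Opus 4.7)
The plan is to reduce to showing the one-variable inequality $g(x) := x - \tanh(x) \geq 0$ for $x \geq 0$ and then split into cases according to the signs of $s$ and $t$.

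First, I would note the symmetry $f(-s,-t) = f(s,t)$, which follows because $\tanh$ is odd, so the numerator and denominator both flip sign. Next I would observe that the numerator and denominator always have the same sign: since $\tanh$ is monotone increasing and odd, $s+t \geq 0$ is equivalent to $s \geq -t$, which gives $\tanh(s) \geq -\tanh(t)$, i.e. $\tanh(s)+\tanh(t) \geq 0$. By the symmetry it therefore suffices to prove
\begin{equation*}
  s + t \;\geq\; \tanh(s) + \tanh(t) \qquad \text{whenever } s+t \geq 0,
\end{equation*}
with equality only when $s = t = 0$, after which the value $f(0,0) = 1$ is recovered as the limit using $\tanh(x) = x + O(x^3)$.

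The key one-variable tool is the function $g(x) = x - \tanh(x)$, whose derivative is $g'(x) = 1 - \operatorname{sech}^2(x) = \tanh^2(x) \geq 0$. Thus $g$ is monotone nondecreasing, odd, and $g(0) = 0$. I would then finish in two cases. If $s$ and $t$ are both nonnegative, the inequality is just $g(s) + g(t) \geq 0$, which is immediate. If the signs differ, say $s > 0 > t$ with $s + t \geq 0$, write $t = -u$ with $0 < u \leq s$; the inequality becomes $g(s) \geq g(u)$, which follows from monotonicity of $g$ on $[0,\infty)$.

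No step looks like a serious obstacle; the only mildly delicate point is justifying that the minimum is achieved (i.e.\ that $f$ extends continuously through $s=t=0$ with value $1$), which I would handle by the Taylor expansion $\tanh(x) = x - x^3/3 + O(x^5)$ giving $f(s,t) = 1 + \tfrac{1}{3}(s^2 - st + t^2) + \cdots$ near the origin; this both identifies the limit and shows that $(0,0)$ is a strict local minimum. Combined with the global inequality $f \geq 1$, this proves the lemma.
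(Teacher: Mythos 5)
Your proof is correct, and it takes a genuinely different route from the paper. The paper argues via a two-variable critical-point analysis: it computes $\nabla f$, shows that critical points must satisfy $\cosh^2(s)=\cosh^2(t)=f(s,t)$, splits into the cases $s=t$ and $s=-t$, and reduces the first to the equation $\sinh(2s)=2s$ (printed with a typo as $\sin$), concluding that the origin is the only critical point with value $1$. Your argument instead establishes the global inequality $f\geq 1$ directly: after the sign observation that numerator and denominator of $f$ agree in sign (so one may assume $s+t>0$), the claim reduces to $s+t\geq \tanh(s)+\tanh(t)$, which follows from monotonicity of the single-variable function $x-\tanh(x)$, whose derivative is $\tanh^2(x)\geq 0$. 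What your approach buys is completeness at two points the paper glosses over: (i) a critical-point enumeration alone does not rule out that the infimum is approached at infinity or on the singular anti-diagonal $s=-t$ (where $f$ is of the form $0/0$), whereas your inequality is global; and (ii) your Taylor expansion $f(s,t)=1+\tfrac{1}{3}(s^2-st+t^2)+\cdots$ cleanly justifies that the value $1$ is actually attained at the origin as the continuous extension, and even exhibits it as a strict local minimum. The paper's calculation, on the other hand, is shorter and produces the useful byproduct that along $s=-t$ the (extended) value is $\cosh^2(s)$, which is implicitly used elsewhere in the stability estimates. Both proofs are valid; yours is the more airtight of the two.
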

\begin{proof} 
To find minimum of $f$, we look for its critical points. We let $g(s, t) = \tanh(s) + \tanh(t)$ and compute
\begin{align}
	\nabla f = \frac{1}{g}( 1- f(s, t) \sech^2(s), 1-f(s, t) \sech^2(t)).
\end{align}
Setting $\nabla f = 0$ yields that
\begin{align}\label{f-inv}
	f(s, t)^{-1}=  \sech^2(s) \ \text{ and }\ 
	f(s, t)^{-1}= \sech^2(t) .
\end{align}
It follows that $\sech^2(s) = \sech^2(t)$ and therefore either $s=t$ or $s=-t$. If $s = -t$, then $f(s, t) = \sech^{-2}(s)$, which has a single critical point - minimum - 
 $s = 0$. Hence in this case we have  a single critical point - minimum - at $s = -t=0$ and $f(0,0) = 1$. If $s=t$, then \eqref{f-inv} becomes
\begin{align}
	\tanh(s) = s \, \sech^2(s), \text{ or equivalently, }\
	\sinh(u)\cosh(s) = s .
\end{align}
This is equivalent to $\sin(2s) = 2s$ which implies that $s=0$. Hence the minimum is reached at $s = 0$ and $f(0,0) = 1$.
 \end{proof}

 
\section{The existence of the vortex lattices} \label{sec:vort-latt-exist} 
In this section, we prove Theorem \ref{thm:BdGExistence} on existence of the vortex lattice solutions to the BdG equations with  arbitrary but fixed lattice $\lat$ and  first Chern (vortex) number $c_1(\chi^\lat) = n\in \Z$, i.e. the integer $n$ entering \eqref{b-quant}  (see \eqref{mf-quant} and \eqref{c1}). 
Recall that  $b=\frac{2\pi n}{|\lat|},\ n \in  \Z$, see \eqref{b-quant}.

Recall that 
we drop the exchange self-interaction term 
  $v^\sharp \gamma$. We minimize the resulting energy for $\Tr(\gamma)$ fixed. Hence we omit the term $- \mu \Tr(\gamma)$ in \eqref{FT-def}. Also, in this section, we display the domain of integration $\Om$ which is an arbitrary but fixed fundamental cell of the lattice $\lat$. Thus, with the  notation $h_{a}:= -\Delta_{a}$, the free energy functional $F_{T}(\G, a)$ in \eqref{FT-def} becomes 
\begin{align} \label{cF}
	\F (\G, a) = 
\Tr\big(h_{a} \gamma \big)	& + \frac12 \int \rho_\g(v* \rho_\g)\notag\\ & +\frac{1}{2} \Tr\big( \s^* v^\sharp \s \big) 
+ \int_{\Omega} dx  |\curl a |^2 
 - TS(\G),
\end{align}
where, recall, 
the entropy $S(\G)$ is defined in  \eqref{S-expr}-\eqref{g-s-def}. The functional $\F$ is defined on $\mathcal{D}^1_\nu \times \frachvec^1$, 
 if $S(\G) < \infty$. Otherwise, we set $\F (\G,a) = \infty$.  
  



 Theorem \ref{thm:BdGExistence}(i) follows from  Theorem \ref{thm:BdG=EL}  and the following  
  \begin{theorem} \label{thm:ExistMin}
 Fix a lattice $\lat$ and a Chern number $c_1(\chi^\lat)=n$. Assume that $T > 0$ and $\|v\|_\infty < \infty$. 
  There exists a finite energy minimizer $(\G_*, a_*) \in \mathcal{D}^1_\nu \times \frachvec^1$ of the functional $\mathcal{F}(\G,a)$ on the set $\mathcal{D}^1_\nu \times \frachvec^1$. This minimizer has the equivariance and the flux quantization properties, \eqref{VL:equiv} and \eqref{mf-quant}, satisfies $0 < \G_*< 1$ and is s.t. $g(\G_*)$ (see \eqref{S-expr}) is trace class. Furthermore, the minimizer $(\G_*,a_*)$ can be chosen to be even, i.e. satisfying \eqref{even}. \end{theorem}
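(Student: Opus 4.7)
I would use the direct method of the calculus of variations on $\mathcal{D}^1_\nu \times \frachvec^1$. The equivariance property \eqref{VL:equiv} and the flux quantization \eqref{mf-quant} are built into the spaces $\frach$ (defined in \eqref{fh-lat}) and $\frachvec^1 = a_b + \vec H^1$ (defined in \eqref{fh-vec}--\eqref{fhb-vec}) together with $b=2\pi n/|\lat|$, so any minimizer in these spaces automatically inherits them; the real work is existence, the bounds $0 < \G_* < 1$, trace-class of $g(\G_*)$, and evenness.

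\textbf{Step 1: Coercivity.} First I would show that on $\mathcal{D}^1_\nu \times \frachvec^1$,
\[
\mathcal{F}(\G,a) \gs \tfrac{1}{2}\Tr(h_{a_b}\g) + \tfrac12\|\curl a\|_{L^2(\Om)}^2 - T S(\G) - C_\nu
\]
for some constant $C_\nu$ depending only on $\nu$, $\|v\|_\infty$, and $b$. The direct interaction $\tfrac12\int \rho_\g (v*\rho_\g)$ is bounded via $|\int\rho(v*\rho)|\le \|v\|_\infty \|\rho_\g\|_1^2 \le \|v\|_\infty \nu^2$ (as in \eqref{DI-est}), and similarly the pairing term $\tfrac12\Tr(\s^* v^\sharp \s)$ is bounded by $\|v\|_\infty \|\al\|_{I^2}^2 \ls \|v\|_\infty \nu$ using Lemma \ref{lem:al-gam-bnd}(2) and $\g\le 1$. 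The difference between $\Tr(h_a\g)$ and $\Tr(h_{a_b}\g)$ is dominated by $\|a-a_b\|_{H^1}$ times a constant depending on $\nu$, which combined with the $\curl$ term and the gauge conditions $\div(a-a_b)=0$, $\int(a-a_b)=0$ (giving Poincar\'e/Calder\'on--Zygmund control) yields full coercivity in the norms of $\hat I^1$ and $\frachvec^1$. The resulting bound $\mathcal{F}\ge -C_\nu - T|\Om|\log 2$ gives a bounded minimizing sequence $(\G_n,a_n)$.

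\textbf{Step 2: Weak convergence and lower semi-continuity.} By Banach--Alaoglu, along a subsequence $a_n \rightharpoonup a_*$ weakly in $\vec H^1$ (strongly in $L^p(\Om)$ for all $p<\infty$), and $\G_n \rightharpoonup \G_*$ in the appropriate weak sense on $\hat I^1$. I would then show, term-by-term, that $\mathcal{F}$ is weakly lower semi-continuous: the $\curl$ term is trivially weakly l.s.c.\ on $H^1$; the direct and exchange interaction terms are controlled via $\|\rho_{\g_n}-\rho_{\g_*}\|_1 \to 0$ (compactness coming from the $I^{1,1}$ bound and the Rellich-type compactness on the bounded cell $\Om$, cf.\ \eqref{rho-est}); the entropy term $-TS(\G)$ is weakly l.s.c.\ by Lemma \ref{lem:Slsc} of Appendix \ref{sec:entropy}; and the kinetic term $\Tr(h_a\g)$, which is the hardest piece, I would expand as $\Tr(h_{a_b}\g) + \Tr((a-a_b)^2\g) - 2\Tr((a-a_b)\cdot(-i\n_{a_b})\g)$, using strong $L^p$ convergence of $a_n-a_b$ together with weak convergence of $\g_n$ in $I^{1,1}$ (so $\sqrt{h_{a_b}}\g_n\sqrt{h_{a_b}}$ is bounded in $I^1$). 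The constraint set $\mathcal{D}^1_\nu$ is closed under this weak convergence: the inequality $0\le \G \le 1$ and the particle-hole symmetry \eqref{Gam-prop} pass to the limit, and $\Tr\g_* = \nu$ via test functions against the identity on $\frach$ (using that weak convergence in $I^{1,1}$ paired with the bounded operator $\one$ localized via $h_{a_b}^{-1}$, which is compact on $\frach$ with discrete Landau spectrum, gives trace convergence). This yields $\mathcal{F}(\G_*,a_*) = \inf \mathcal{F}$.

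\textbf{Step 3: Strict bounds and evenness.} The inequality $S(\G_*)<\infty$ (from Step 1) plus $g\ge 0$ gives $g(\G_*)\in I^1$. To show $0<\G_*<1$, I would adapt the argument from Lemma \ref{lem:eta*EVs}: if $\G_*$ had $0$ or $1$ as an eigenvalue on some spectral subspace $P$, the variation $\G_* + t(Q_t - P)$ with $Q_t$ a small perturbation of $P$ decreases $\mathcal{F}$ to leading order via the logarithmic singularity of the entropy $-T\Tr g(\G)$ overwhelming the finite energy variation, contradicting minimality. Evenness is obtained by symmetrization: if $(\G_*,a_*)$ is a minimizer, then so is $\tau^{\rm refl}(\G_*,a_*)$, and by convexity of $\G\mapsto \mathcal{F}$ in the relevant direction (entropy is strictly concave in $\G$ but the remaining terms are compatible), the convex combination $\tfrac12(\G_*+\tau^{\rm refl}\G_*)$ is also a minimizer --- alternatively, one restricts the whole variational problem to the (weakly closed) subspace of even configurations from the start, which is cleaner. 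Finally, Theorem \ref{thm:BdG=EL} then identifies $(\G_*,a_*)$ as a solution to the BdG equations for some chemical potential $\mu$.

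\textbf{Main obstacle.} The most delicate point I expect is the lower semi-continuity of the full kinetic term $\Tr(h_a\g)$ under joint weak convergence of both $a$ and $\g$. The quadratic cross-term involving $a\cdot(-i\n\g)$ requires strong convergence of $a_n$ in some norm that sees derivatives of the density --- sidestepped by the expansion above, but hinging on the compactness $H^1(\Om)\hookrightarrow L^p(\Om)$ and on the uniform $I^{1,1}$ bound on $\g_n$ propagating to $H^1$-bounds on off-diagonal kernels. A secondary subtlety is ensuring the trace constraint $\Tr\g=\nu$ is preserved under the weak limit without relying on strong $I^1$ convergence; this is where the compactness of $h_{a_b}^{-1}$ on $\frach$ (equivalently, the discreteness and growth of the Landau levels on the fundamental cell) is essential.
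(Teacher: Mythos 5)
Your overall strategy (direct method, with equivariance and flux quantization built into the spaces, entropy l.s.c.\ from Lemma \ref{lem:Slsc}, the $\epsilon\ln\epsilon$ variation for $0<\G_*<1$, and restriction to the even sector) matches the paper's proof. But Step 1 contains two claims that are false as stated, and they sit exactly at the delicate point of the argument. First, the difference $\Tr(h_a\g)-\Tr(h_{a_b}\g)$ is \emph{not} dominated by $\|a-a_b\|_{H^1}$ times a constant depending only on $\nu$: the term $\Tr(|e|^2\g)=\int|e|^2\rho_\g$ with $e\in H^1(\Om)\not\hookrightarrow L^\infty$ in $d=2$ requires control of $\|\g\|_{I^{1,1}}$, not just of $\Tr\g$. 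The paper's bound (Lemma \ref{lem:e2gam-est}) is $\Tr(|e|^2\g)\ls\|\g\|_{I^{1,1}}^{1-r}\nu^r\|e\|_{H^1}^2$, and absorbing this into the $\|\curl e\|^2$ term forces one to retain only a $\g$-dependent fraction $\delta\sim\|\g\|_{I^{1,1}}^{-(1-r)}$ of the kinetic energy; this is precisely why the coercivity estimate \eqref{cF-lower-bnd'} is sublinear, of the form $c'[\|\g\|_{I^{1,1}}/\nu]^r$, rather than $\tfrac12\Tr(h_{a_b}\g)$ as you assert. Second, your lower bound $-T|\Om|\log 2$ for the entropy contribution presumes $S(\G)\ls|\Om|\log2$, which is false: $\frach$ is infinite-dimensional, and on $\mathcal{D}^1_\nu$ one has $\sup S(\g)=\infty$ (take $\g=\epsilon P_N$ with $N\epsilon=\nu$, $\epsilon\to0$). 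The entropy must instead be beaten by the kinetic term; the paper does this by Jensen's inequality, $[\Tr(h_{a_b}\g)/\nu]^r\ge\Tr(h_{a_b}^r\g)/\nu$, followed by an explicit Gibbs-state minimization of $\Tr(h_{a_b}^r\g)/\nu - TS(\g)$ over $\Tr\g=\nu$, using that $e^{-h_{a_b}^r/T}$ is trace class on $\frach$. Without some version of this step your minimizing sequence is not known to be bounded in $\hat I^1$.

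A secondary but real issue in Step 2: $I^1$ is not reflexive, so Banach--Alaoglu does not give a weakly convergent subsequence of $\G_n$ in $\hat I^1$ directly. The paper's device is to substitute $\kappa=\sqrt\g\in I^{1,2}$, a Hilbert space, extract a weak limit there, and use the compact embeddings $I^{1,2}\hookrightarrow I^{s,2}$, $\vec H^1\hookrightarrow\vec H^s$ ($s<1$) to get the strong convergence needed both for the cross terms in $\Tr(h_a\g)$ and for preservation of the constraint $\Tr\g=\nu$ (your proposed use of the compactness of $h_{a_b}^{-1}$ is morally the same, but the square-root substitution is what makes the functional-analytic setup work).
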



\begin{proof}[Proof of Theorem \ref{thm:ExistMin}] 
\DETAILS{
 We pass from the vector potential $a$ to $e:=a-a_b$.   Furthermore,
 using $a = a_b+e$, $\div a_b = b$ and $\int_{\Omega} =0$, we compute 
\begin{align}
	 \int_{\Omega} |\curl a|^2 =&  
	 \int_{\Omega} |\curl e|^2 + b^2 |\Om|. 
\end{align}
(Recall that $\Om$ is an arbitrary fundamental cell of the lattice $\lat$.)  
We also assume  $\tr\g=\nu$. Consequently, 
 consider, instead of \eqref{cF}, the equivalent functional 
   \begin{align} \label{F-def}
	F (\g, \al, e)& = \mathcal{F} (\G, a_b+e) - b^2 |\Omd|\\
 \label{F} &=\Tr\big( \ka h_{a_b+e} \ka \big)	 +\frac{1}{2} \Tr\big( \s^* v^\sharp \s \big) 
+ \int_{\Omega}  |\curl e|^2  
 - TS(\G)
\end{align}
on the space $I^{1,1}\times I^{1,2}\times \frachvec^{1}$ with  the norm $\|\g \|_{I^{1,1}} + \|\al \|_{ I^{1,2}}+\|e \|_{\frachvec^{1}}$ and with the side conditions $0\le \G \le 1$ and $\tr\g=\nu$.}

We will use standard minimization techniques 
proving that $\mathcal{F} (\G, a)$ 
 is coercive and weakly lower semi-continuous, and $\mathcal{D}^1_\nu \times \frachvec^1$ weakly closed. 

\textbf{Part 1: coercivity.}  
The main result of this step is the following proposition:
\begin{proposition} \label{pro:LowerBound}
Let $T > 0$, $e:=a-a_b$,  $\g\in I^{1, 1}$ and $\al\in I^{1, 2}$, with $\tr \g=\nu$ and $0\le \eta \le \one$. 
Then 
\begin{align}\label{cF-lower-bnd'}
\mathcal{F} (\G, a_b+e)	& \geq 
c' [\|\g\|_{I^{1,1}}/ \nu]^r+ c\|e\|_{H^1}^2 - C\\
\label{cF-lower-bnd}& \geq 
\frac14 c'  [(\|\gamma \|_{I^{1,1}} + \|\al \|_{ I^{1,2}}^2)/ \nu]^r + c\|e\|_{H^1}^2 - C,\end{align}  
 for any $0<r<1$ and for suitable constants $c', c, C > 0$, with $c', c > 0$ independent of $\nu$, $T, v$ and $\lat$ and $C$ depending on $\nu$, $T, \|v\|_{\infty}$ and $|\lat|$. 
\end{proposition}
\begin{proof}[Proof of Proposition \ref{pro:LowerBound}]  We begin with estimating the entropy term $- T S(\G)$ (cf. \cite{HaiSei}). Recall definition  \eqref{S-expr}-\eqref{g-s-def} of $S(\G)$, which we reproduce here 
\begin{align}
\label{S-expr'}	& S(\G) := \Tr(s(\G))= \Tr(g(\G)),\\
 \label{g-s-def'}	&  g(\G): = -\G\ln\G - (1-\G)\ln(1-\G),\ s(\G ):= -2\G\ln\G.
\end{align} 
Note that since $s(\G )\ge 0$ for $0\le \G\le \one$, we have that $S(\G)\ge 0$. We  define  the relative entropy
\begin{align}\label{RelatEntropy}	
S(A|B) = \Tr(s(A|B)), \quad	& s(A|B) := A(\ln A - \ln B). 
\end{align}
We define the diagonal operator-matrix $\G_0$  and recall the off-diagonal one $\phi$:
\begin{align}
\label{Gam0Phi} 
\G_0 := \left( \begin{array}{cc} 
	\g & 0 \\  
	0 & \one-\bar\g 
\end{array} \right), \ 	\phi(\beta) :=\left( \begin{array}{cc}
											0 & \beta \\
											\beta^* & 0
										\end{array} \right). 
\end{align}

\begin{lemma}[cf. \cite{FHSS}] \label{lem:FHSSEntropyBnd} We have for $\G = \G_0 + \phi(\alpha)$,
\begin{align}\label{S-relatS}
	 S(\G) =  S(\Gam_0) - S(\G | \Gam_0)\le   S(\Gam_0). \end{align} 
\end{lemma}
\begin{proof}  We note that for $\G := \G_0 + \phi(\alpha)$,
\begin{align}
	 \G \ln \G & 
	 - \G_0 \ln \G_0  
	= \G \ln \G - \G \ln \G_0 + \G \ln \G_0  
		- \G_0 \ln \G_0\\ 
	&= s(\G,\G_0) + (\G -\G_0)\ln \G_0\\ 
\label{elne}	&= s(\G,\G_0) + \phi(\alpha) \ln \G_0. 
\end{align}
Since the last term, $\phi(\alpha) \ln \G_0$, 
 in \eqref{elne} is off-diagonal and therefore has zero trace, the first equation in \eqref{S-relatS} follows.

The inequality in \eqref{S-relatS} follows from  Klein's inequality and the fact that $\Tr \G =\Tr \G_0$.\end{proof}
With the definitions  \eqref{S-expr'}-\eqref{g-s-def'} and \eqref{Gam0Phi}, we have that \eqref{S-relatS} implies
\begin{align} \label{SGam0} 
S(\G)\le S(\G_0)  = \Tr(g(\g)),\ g(x) = -[x\ln x +(1-x)\ln (1-x)].
 \end{align}

Next, we estimate the second ($\al$-) term on the r.h.s. of \eqref{cF}. To this end, we 
 bound $\alpha$ by 
  $\gamma$ via the constraint $0 \leq \G \leq 1$ using the following result (see \cite{BBCFS} and references therein):

\begin{lemma}\label{lem:al-gam-bnd}
The constraint $0 \leq \G \leq 1$ implies that
\begin{enumerate}
\item 
$\alpha^*\alpha \leq \bar{\gamma}(1-\bar{\gamma})$ and $\alpha \alpha^* \leq \gamma(1-\gamma)$.
\item $\Tr (M\alpha \alpha^* M^*) \leq \|M\gamma M \|_{1}$ for any operator $M$.\end{enumerate}
\end{lemma}
\begin{proof}
Since $0 \leq \G \leq 1$, then $0 \leq 1-\G \leq 1$ as well and therefore  
 \[0 \leq \G(1-\G) \leq 1.\] 
From \eqref{Gam}, we see that the 1,1-entry of $\G(1-\G)$ is $0 \leq \gamma(1-\gamma) - \alpha \alpha^*$. By considering the 2,2-entry, we have that $\alpha^* \alpha \leq \bar{\gamma}(1-\bar{\gamma})$. 
 Finally, since $1-\gamma \leq 1$, we see that $M\alpha \alpha^* M^* \leq M \gamma M^*$, which 
 completes the proof.
\end{proof}

Since $v$ is bounded, Lemma \ref{lem:al-gam-bnd}(1) gives
\begin{align}\label{al-gam-bnd}	|\Tr(\alpha^* v^\sharp \alpha)| \leq \|v\|_{\infty} \Tr(\alpha^* \alpha)  \leq \|v\|_{\infty} \Tr(\gamma).
\end{align}

Now, using estimates \eqref{SGam0} and \eqref{al-gam-bnd} in expression \eqref{cF} and introducing the notation  $S(\g):=\Tr(g(\g))$, we find
\begin{align} \label{F-bnd1}
\mathcal{F} (\G, a)	 \ge\Tr\big(  h_{a} \g \big)& + \frac12 \int \rho_\g(v* \rho_\g)\notag\\ & + \int_{\Omega}  |\curl a|^2   - TS(\g) - \|v\|_{\infty} \Tr(\gamma).\end{align}
 The r.h.s.  depends only on $\g$ and $a=:\ab+e$ with the only constraint $\Tr \g=\nu$ and we estimate it next.

Next,  we pass from the vector potential $a$ to $e:=a-a_b$.   
 Using $a = a_b+e\in \frachvec^{1}$, $\curl a_b = b$ and $\int_{\Omega}\curl e=0$ (see \eqref{fh-vec}-\eqref{fhb-vec}), we compute 
\begin{align} \label{aH1-eH1}	 \int_{\Omega} |\curl a|^2 =&  
	 \int_{\Omega} |\curl e|^2 + b^2 |\Om|. 
\end{align}
 Since $\int_{\Omega} e = 0$ and $\div e = 0$, the Poincar\'e's inequality shows that, for some $c>0$, 
\begin{align}\label{eH1-est} 2c\|e\|_{H^1}^2\le \int_{\Omega} |\curl e|^2.	
\end{align}

Now, we estimate $\Tr(h_a \gamma)$, where, recall, $a = a_b+e$. 
 Using $\div e = 0$, we write 
\begin{align}
\notag	\Tr(h_{a} \gamma) &=\Tr(h_{a_b}\gamma) + 2i\Tr(e \cdot \nabla_{a_b} \gamma) + \Tr(|e|^2 \gamma) \\
\label{Tr1-est}		\geq & (1-\epsilon) \Tr(h_{a_b}\gamma) - (\epsilon^{-1}-1)\Tr(|e|^2 \gamma), 
\end{align}
for any $\epsilon > 0$.  
For the last term, we claim, for any $r\in (0,1)$ and some constant $C$, the estimate 
\begin{align}\label{e2gam-est}
	0\le \Tr(|e|^2 \gamma)\le C \|\gamma\|_{I^{1, 1}}^{1-r}  (\tr \gamma)^{r}  \|e\|_{H^1}^2,  
\end{align}
where we used that $\Tr(h_{a_b}\gamma)=\|\gamma\|_{I^{1, 1}}$. We prove this estimate for $r=1/2$, which suffices for us.  For general $r\in (0,1)$, see Lemma \ref{lem:e2gam-est} of Appendix \ref{sec:den-est}. Recall the definition $M_b:=\sqrt{h_{a_b}}$.   We use relative bound \eqref{Sob-ineq3} of Appendix \ref{sec:den-est} to find 
\begin{align}\label{e2gam-est'}
	0\le \Tr(|e|^2 \gamma)\ls \||e|^2 M_b^{-s}\|  \|M_b^{s}\ka \|_{I^{2}}  \|\ka\|_{I^{2}} \ls \|e\|_{H^v}^2 \|\ka\|_{I^{s, 2}} \|\ka\|_{I^{0, 2}}\end{align}
with $s>2(1-v)$. The last estimate gives \eqref{e2gam-est} with $r=1/2$.
 
Let $\xi(\g) :=c/(C(\epsilon^{-1}-1)\|\gamma\|_{I^{1, 1}}^{1-r}  \nu^{r})$, where, recall, $\nu=\tr \gamma$ and $r\in (0,1)$, for the constant $C$ coming from \eqref{e2gam-est}, set $\e =\frac14$. 
Then,  \eqref{Tr1-est} and \eqref{e2gam-est}, the inequality $\Tr(h_{a} \gamma)  \geq \del \Tr(h_{a} \gamma),$ for $\del\le 1$, and the relation $\|\gamma\|_{I^{1, 1}}=\Tr(h_{a_b}\gamma)$ 
imply
\begin{align}\label{Trhgam-ineq}
	\Tr(h_{a} \gamma)  \geq & \del[\frac34 \|\gamma\|_{I^{1, 1}} 
		 - c \xi(\g)^{-1}  \|e\|_{H^1}^2]. 
\end{align}
 Now,  taking $\del=\xi(\g)$ and using that $\xi(\g) \Tr(h_{a_b} \gamma)=c/(3C) (\|\gamma\|_{I^{1, 1}}/\nu)^{r}$ gives 
\[\Tr(h_{a} \gamma) \ge 2c'  [\|\gamma\|_{I^{1, 1}}/ \nu]^r  - c  \|e\|_{H^1}^2,\] 
where  $c':= c/(8C)$.  The latter equation implies, in turn, 
\begin{align} \label{E-bnd1} 
&\Tr(h_{a} \gamma) - TS(\gamma)	\ge c' E'(\g) 
+ c'  [\|\gamma\|_{I^{1, 1}}/ \nu]^r - c  \|e\|_{H^1}^2,\\ 
\label{E'-expr}&E'(\g) := [\|\gamma\|_{I^{1, 1}}/ \nu]^r - TS(\gamma).\end{align}
To estimate $E'(\g)$ from below, 
 let $e_k$ be an orthonormal eigenbasis of $h_{a_b}$ and $\lam_k$ be the corresponding eigenvalues. Using that $\|\gamma\|_{I^{1, 1}}=\Tr(h_{a_b} \gamma) = \sum_{k} \lam_k \lan e_k,\gamma e_k \ran$,  regarding $\lan e_k, \gamma e_k \ran /\Tr \gamma$ as a probability measure and applying Jensen's inequality (for $0 < r < 1$ so that $x^r$ is concave), we find
\begin{align}
	(\Tr(h_{a_b} \gamma)/\Tr\gamma)^r	=& \big( \sum_{k} \lam_k \lan e_k,\gamma e_k \ran/\Tr\gamma \big)^r\\ 
		&\geq  \sum_{k} \lam_k^r \lan e_k,\gamma e_k \ran/\Tr\gamma. 
		\end{align}
Since $ \sum_{k} \lam_k^r \lan e_k,\gamma e_k \ran	= \Tr (h_{a_b}^{r} \gamma)$, we have
\begin{align}
	(\Tr(h_{a_b} \gamma)/\Tr\gamma)^r	&\geq  \Tr (h_{a_b}^{r} \gamma)/\Tr\gamma.\end{align}
  Let $ I^{1, 1}_\nu:=\{\g\in I^{1, 1}: \tr \gamma=\nu\}$.
The above estimate, together with \eqref{E'-expr}, restricted to the se $I^{1, 1}_\nu$, gives 
\begin{align}\label{E'-low-bnd}
	E'(\gamma) \geq &  \Tr (h_{a_b}^{r} \gamma)/\nu - TS(\gamma) =:E_\nu(\gamma).\end{align}
We minimize the functional $E_\nu(\g)$ on the r.h.s. on $I^{1, 1}_\nu$. Since $E_\nu(\g)$ is convex and the constraint $\Tr\g=\nu$ is linear, each solution to the standard Euler-Lagrange equation $d E_\nu(\g) - \mu' d \Tr\g =0$ (written in terms of the G\^{a}teaux derivatives), where $\mu'$ is the Lagrange multiplier,  is a global minimizer. The latter equation is computed to be $  h_{a_b}^{r}/\nu - T\ln\big(\frac{\gamma}{\one -\g}\big)-\mu' \one=0$.
 Solving this equation and setting $\mu'= \mu/\nu$ gives the minimizer 
 \begin{align}
	\gamma_{\mu, T\nu} = f_{T}( (h_{a_b}^{r} -\mu)/\nu )
\end{align}
for  $\mu$ such that $\Tr \gamma_{\mu, T\nu} = \nu$.  By the implicit function theorem, the latter equation has a solution, $\mu=\mu(T, \nu)$, for $\mu$. This shows $ - C_{\nu, T}/c':=\inf\{E_\nu(\g): \g \in I^{1, 1}_\nu\}=E_\nu(\g_{\mu, T\nu})>-\infty$, with $\mu=\mu(T, \nu)$,  which, together with \eqref{E'-low-bnd}, implies  
\begin{align} \label{E''-lwbnd}\inf_{\g\in I^{1, 1}_\nu}E'(\g) \geq \inf_{\g\in I^{1, 1}_\nu}E_\nu(\g) =- C_{\nu, T}/c'>-\infty,\end{align} 
 which, together with \eqref{E-bnd1}, implies 
\begin{align} \label{E-bnd2} \Tr(h_{a} \gamma) - TS(\gamma)	\ge & c' \ [\|\gamma\|_{I^{1, 1}}/ \nu]^r- C_{\nu, T} - c  \|e\|_{H^1}^2,\end{align}
with, recall,  $c':=c/(8C)$. Estimates \eqref{F-bnd1}, \eqref{aH1-eH1}, \eqref{eH1-est}, \eqref{DI-est} and \eqref{E-bnd2}, restricted to $I^{1, 1}_\nu, \nu \ge 1,$ give 
 bound \eqref{cF-lower-bnd'}. 
This bound and  Lemma \ref{lem:al-gam-bnd}(2) imply bound  \eqref{cF-lower-bnd}. This completes the proof of Proposition \ref{pro:LowerBound}.
\end{proof}

 We continue with the proof of Theorem \ref{thm:ExistMin}. Eq. \eqref{E-bnd2} shows that it suffices to consider $\g$ with $S(\g)<\infty$ and therefore, by \eqref{SGam0},  $\G$ with 
 \begin{align} \label{S-finite}S(\G)<\infty.\end{align} 

\textbf{Part 2: 
weak lower semi-continuity.} 
 We pass from the positive trace class operator $\g$ to the Hilbert-Schmidt one, $\ka:=\sqrt \g$. Note that $\g\in I^{1,1}, \g\ge 0 \Longleftrightarrow \ka:=\sqrt \g \in I^{1,2}$.  Now, instead of the free energy functional \eqref{cF}, 
 we consider the equivalent functional 
  \begin{align} \label{F-def}
	F (\ka, \al, e) &:=\mathcal{F} (\G, a_b+e)\big|_{\g=\ka^2} - b^2 |\Omd|\\ 
 \label{F} & 
  =\Tr\big( \ka h_{a_b+e} \ka \big) + \frac12 \int \rho_\g(v* \rho_\g) \notag\\ &\qquad	 +\frac{1}{2} \Tr\big( \s^* v^\sharp \s \big) 
+ \int_{\Omega}  |\curl e|^2  
 - TS(\G)\big|_{\g=\ka^2}
\end{align}
on the space $I^{1,2}\times I^{1,2}\times \vec H^{1}$ with  the norm $\|(\ka, \al, e)\|_{(1)} := \|\ka \|_{I^{1,2}} + \|\al \|_{ I^{1,2}}+\|e \|_{H^{1}}$ and with the side conditions $0\le \G\big|_{\g=\ka^2}\le 1$ and $\tr\ka^2=\nu$.
We will keep the notation $\mathcal{D}^1_\nu$ for $I^{1,2}\times I^{1,2}$ with these side conditions.

By Proposition \ref{pro:LowerBound}, 
 we find that, for $\tr \g=\nu$ and $0\le \eta \le \one$, 
\begin{align} 
F (\ka, \al, e)	
\label{F-lower-bnd}& \geq 
\frac14 c' \min  ( [\|(\ka, \al)\|_1^2/ \nu]^r, \|(\ka, \al)\|_1^2)+ c\|e\|_{H^1}^2 - C,\end{align}
where $\|(\ka, \al)\|_1^2 := \|\ka\|_{I^{1,2}}^2+ \|\al\|_{I^{1,2}}^2$, and for suitable constants $c, C > 0$, with $C$ depending on $\nu$, $T, \|v\|_{\infty}$ and $|\lat|$.

\begin{lemma}\label{lem:Flsc}
The functional $F (\ka, \al, e)$ is weakly lower semi-continuous in  
 $I^{1,2}\times I^{1,2} \times \frachvec^1$. 
\end{lemma}
\begin{proof} We study the functional $F (\ka, \al, e)$ term by term. 
For the first term on the r.h.s. of \eqref{F}, 
with $a=a_b+e$ and $h_{a}:= -\Delta_{a}$, we write 
\begin{align} \label{hAgamSplit} 	
	& \Tr(h_{a} \gamma) =  \Tr((-\Delta_{a_b}) \gamma) + 2i\Tr(e \cdot \nabla_{a_b} \gamma)+ \Tr(|e|^2 \gamma). 
\end{align}
Since the first term on the r.h.s. of \eqref{hAgamSplit} satisfies $ \Tr((-\Delta_{a_b})  \gamma)=\| \ka \|_{I^{1, 2}}^2$ 
 and is quadratic in $\ka$, it is $\|\cdot \|_{I^{1,2}}$-weakly lower semi-continuous.

For the second term on the r.h.s. of \eqref{hAgamSplit}, 
 we let $e, e'\in \vec H^1$ and estimate the difference $\Tr(e \cdot \nabla_{a_b} \gamma) - \Tr(e' \cdot \nabla_{a_b} \gamma')$.
 We write 
\begin{align}\label{eq1}
	&  \Tr(e \cdot \nabla_{a_b} \gamma) - \Tr(e' \cdot \nabla_{a_b} \gamma')\notag \\
	&=  \Tr((e-e') \cdot \nabla_{a_b} \gamma) - \Tr(e' \cdot \nabla_{a_b} (\gamma-\gamma')). 
	\end{align}
For the first term on the r.h.s., 
 letting $c:=e-e'$, we claim that 
\begin{align}\label{c-n-gam-est'}|\Tr(c \cdot \nabla_{a_b} \gamma)|  \ls \|c\|_{H^s} \|\gamma\|_{I^{1,1}},\ s<1. \end{align}
 To prove this inequality, we recall that $M_b:= \sqrt{-\Delta_{a_b}}$  and write $\Tr(c \cdot \nabla_{a_b} \gamma)=\Tr(M_b^{-1}c \cdot \nabla_{a_b}M_b^{-1} M_b \gamma M_b)$ and use a standard trace class estimate to obtain $|\Tr(c \cdot \nabla_{a_b} \gamma)| \ls \|M_b^{-1}c \cdot \nabla_{a_b}M_b^{-1} \| \|M_b \gamma M_b\|_{I^{0,1}}$. Next, we use the boundedness of $\nabla_{a_b}M_b^{-1}$, the relative bound  
  $\|M_b^{-1}c  \|  \ls \|c\|_{H^s}, s<1$ (see \eqref{Sob-ineq1} of Appendix \ref{sec:den-est}), and the relation $\|M_b \gamma M_b\|_{I^{0,1}}= \|\gamma\|_{I^{1,1}}$ to find \eqref{c-n-gam-est'}.  (Recall that  $\| \kappa\|_{I^{s,2}}= \|\gamma\|_{I^{s,1}}^{1/2}$.)

For $\g$ and $\g'$  non-negative, we claim the following estimate for the second term on the r.h.s. of \eqref{eq1} with $c=e'$:
\begin{align}\label{c-n-gam-est}	|\Tr(c \cdot \nabla_{a_b} &(\gamma - \gamma'))|   \notag\\
	&\ls \|c\|_{H^t}( \|\kappa\|_{I^{1,2}} 
	+\|\kappa'\|_{I^{1,2}}) \|\kappa-\kappa'\|_{I^{s,2}} ,\ s, t<1,\end{align}
where $\kappa:=\g^{1/2}$ and  $\kappa':=(\g')^{1/2}$. To prove this estimate,  we write $\g=\kappa^2, \g'={\kappa'}^2$ to expand 
\begin{align}\label{eq2}	
(\gamma - \gamma') & 
=\kappa (\kappa-\kappa')+(\kappa-\kappa')\kappa'.\end{align}  
Now, we use  the boundedness of $\nabla_{a_b}M_b^{-1}$  and the  relative bound $\|M_b^{-s}c  \|  \ls \|c\|_{H^t}, s, t<1$ (see \eqref{Sob-ineq1} of Appendix \ref{sec:den-est}),  and the relations $\|M_b^s \kappa\|_{I^{0,2}}= \|\kappa M_b^s \|_{I^{0,2}}= \| \kappa\|_{I^{s,2}}$, to find
\begin{align}\label{ineq1}	|\Tr(c \cdot \nabla_{a_b} \kappa (\kappa-\kappa')) | &= |\Tr(M_b^{-s}c \cdot \nabla_{a_b}M_b^{-1} M_b\kappa (\kappa-\kappa')M_b^{s})| \notag\\
	&\ls \|c\|_{H^t}\|\kappa\|_{I^{1,2}} \|\kappa-\kappa'\|_{I^{s,2}} ,\ s, t<1.\end{align}
For the second term on the r.h.s. of \eqref{eq2}, we use
 the  relative bound  $\|M_b^{-1}c \cdot \nabla_{a_b}M_b^{-s} \|  \ls \|c\|_{H^1},$  for any $ s<1$  (see \eqref{Sob-ineq2} of Appendix \ref{sec:den-est}) to find
\begin{align}\label{ineq2}	|\Tr(c \cdot \nabla_{a_b} (\kappa-\kappa') \kappa') |& = |\Tr(M_b^{-1}c \cdot \nabla_{a_b}M_b^{-s} M_b^{s}(\kappa-\kappa') \kappa')M_b)| \notag\\
	&\ls \|c\|_{H^1}\|\kappa'\|_{I^{1,2}} \|\kappa-\kappa'\|_{I^{s,2}}. 
	\end{align}
 The last two estimates yield \eqref{c-n-gam-est}.


Applying \eqref{c-n-gam-est'} and \eqref{c-n-gam-est} to the terms on the r.h.s. of \eqref{eq1}, we find, for $3/4 < s<1$,
\begin{align}\label{2nd-term-cont}
	 |\Tr(c \cdot \nabla_{a_b} \gamma) & - \Tr(c' \cdot \nabla_{a_b} \gamma')|\ls  \|c-c'\|_{\vec{\mathfrak{h}}^{s}}  \|\gamma \|_{I^{1, 1}}\notag \\
	&\quad \quad+ \|c'\|_{\vec H^{s}}  
	( \|\kappa\|_{I^{1,2}} 
	+\|\kappa'\|_{I^{1,2}}) \|\kappa-\kappa'\|_{I^{s,2}} ,\ s, t<1,\end{align}
where $\kappa:=\g^{1/2}$ and  $\kappa:=(\g')^{1/2}$. 

 Now, we use a standard result for Sobolev spaces, $\vec H^{s'}$ is compactly embedded in $\vec H^{s}$, for any $s' > s$, and, perhaps, a less standard one, that  $I^{s', 2}$ is compactly embedded in $I^{s, 2}$, for any $s' > s$.  
 (One shows the latter fact by passing to the integral kernels and using a standard  Sobolev embedding result.) 

Now, let $\{(\ka_n, \al_n, e_n)\}$ be a weakly convergent sequence in $I^{1, 2} \times I^{1, 2} \times \frachvec^1$ 
 and denote its limit by $(\ka_*, \al_*, a_*)$. Then, by above, it converges strongly in  $\mathcal{D}^s_\nu \times\frachvec^{s},\ s<1$. Hence,  
 we have by \eqref{2nd-term-cont}, 
\begin{align}\label{2nd-term-cont'}
	 |\Tr(e_n \cdot \nabla_{a_b} \gamma_n) & - \Tr(e_* \cdot \nabla_{a_b} \gamma_*)|   \ra 0,\ n\ra \infty, 
	\end{align}
where, as usual, $\g_n=\ka_n^2$ and $\g_*=\ka_*^2$.

Finally, we consider the difference $\Tr(|e|^2 \gamma) - \Tr(|e' |^2 \gamma' )$ due to the last term in \eqref{hAgamSplit}. We decompose
\begin{align}\label{eq3}\Tr(|e|^2 \gamma) -& \Tr(|e' |^2 \gamma' )= \Tr(|e|^2 (\gamma - \gamma' )) + \Tr((|e|^2-|e' |^2)\gamma').\end{align}
For the first term on the r.h.s. we  claim the following estimate
\begin{align}\label{c-n-gam-est''}|\Tr(|e|^2  (\gamma - \gamma' )) | &\ls \|e\|_{H^t}^2 (\|\kappa\|_{I^{1,2}} +\|\kappa'\|_{I^{1,2}}) \|\kappa-\kappa'\|_{I^{s,2}} ,\ s <1,\end{align}
where  $\kappa:=\g^{1/2}$ and  $\kappa':=(\g')^{1/2}$. We use again \eqref{eq2} and
 the  relative bound  $\|M_b^{-s}|e|^2M_b^{-t} \|  \ls \|e\|_{H^r}, s + t>2(1-r),$ (see \eqref{Sob-ineq3} of Appendix \ref{sec:den-est}) to find, similarly to \eqref{ineq1} and \eqref{ineq2}, 
\begin{align}\label{ineq3}|\Tr(|e|^2  (\gamma - \gamma' )) | &=	|\Tr(|e|^2( \kappa (\kappa-\kappa') + (\kappa-\kappa') \kappa') |\\&\le |\Tr(M_b^{-s}|e|^2M_b^{-1} M_b\kappa (\kappa-\kappa')M_b^{s})| \notag\\
&\qquad +|\Tr(M_b^{-1}|e|^2M_b^{-s} M_b^{s}(\kappa-\kappa') \kappa' M_b)| \notag\\
	&\ls \|e\|_{H^t} (\|\kappa\|_{I^{1,2}} +\|\kappa'\|_{I^{1,2}}) \|\kappa-\kappa'\|_{I^{s,2}} ,\ s <1,\end{align}
 which gives  \eqref{c-n-gam-est''}. 
 
 Finally, similarly to \eqref{c-n-gam-est'}, we find for the second term on the r.h.s. of \eqref{eq3},
 \begin{align}\label{c-n-gam-est'''}|\Tr((|e|^2-|e' |^2)\gamma')|  \ls (\|e\|_{H^s} + \|e'\|_{H^s})\|e - e'\|_{H^s} \|\gamma\|_{I^{1,1}},\ s<1. \end{align} 
  
 Now, Eqs  \eqref{eq3},    \eqref{c-n-gam-est''} and \eqref{c-n-gam-est'''} imply 
\DETAILS{term $\Tr(|e|^2 \gamma)$ on the r.h.s. of \eqref{hAgamSplit}. 
We show \begin{align}\label{e2-gam-est}
	|\Tr(|e|^2  \gamma)| \le   \|e\|_{H^{t}}^2 \|\g\|_{ I^{s,1}}, s>2(1-1/q), t>1/q, q<2.\end{align}
 Firstly,	 we use \eqref{den-def} to obtain $\Tr(|e|^2  \gamma)=\int |e|^2\rho_\g$, which gives $|\Tr(|e|^2  \gamma)|\ls \|e\|_{L^{p}}^2 \|\rho_\g\|_{L^{q}}$. This and the estimate  $\|\rho_\g\|_{L^{q}}\ls \|\gamma\|_{I^{s,1}}, s>2(1-1/q),$  proven in Appendix \ref{sec:den-est} (see  \eqref{den-ineq4}) yield \eqref{e2-gam-est}.  Estimate \eqref{e2-gam-est} implies}
\begin{align}|\Tr(|e_n|^2 \gamma_n) - \Tr(|e_*|^2 \gamma_*)| 
		&\lesssim  
\|e_n\|_{H^1}^2 (\|\kappa_n\|_{I^{1,2}} +\|\kappa_*\|_{I^{1,2}}) \|\kappa_n-\kappa_*\|_{I^{s,2}}\notag\\
		 &\quad+ \|\gamma_*\|_{I^{1,1}} (\|e_n\|_{H^s} + \|e_*\|_{H^s})\|e_n - e_*\|_{H^{s}},\end{align}
for $s< 1$, and therefore, as above the r.h.s. converges to $0$. Hence,  the first term on the r.h.s. of \eqref{F} is  is weakly lower semi-continuous.

  For the second term on the r.h.s. of \eqref{F}, as in the proof of Proposition \ref{prop:F1-min}, we use the inequalities \eqref{DI-est} and \eqref{rho-est} to show that it is weakly lower semi-continuous in $I^{1, 2}$. The third term on the r.h.s. of \eqref{F} is  quadratic in $\alpha$ and therefore it is continuous in $I^{1, 2} \times I^{1, 2}  
 \times \frachvec^1$ since $v\in L^\infty$. It follows that it is weakly lower semi-continuous in $I^{1, 2} \times I^{1, 2}  
 \times \frachvec^1$.
\DETAILS{we note that for any $\alpha_1,\alpha_2$,
\begin{align}
	& \Tr(\alpha_1^* v^\sharp  \alpha_1)  - \Tr(\alpha^*_2 v^\sharp  \alpha_2) \\
	&= \Tr((\alpha_1-\alpha_2)^* v^\sharp  \alpha_1)  - \Tr(\alpha^*_2 v^\sharp  (\alpha_1 - \alpha_2))
\end{align}
Since $v$ is bounded, we see that the above terms are bounded by, up to constants,
\begin{align}
	\| \alpha_1 - \alpha_2 \|_2 \|v\|_\infty (\|\alpha_1\|_2+\|\alpha_2\|_2) \lesssim \| \alpha_1 - \alpha_2 \|_2 \|v\|_\infty
\end{align}
where the last inequality follows since $0 \leq \alpha^* \alpha \leq \gamma \leq 1$ in $\mathcal{D}$.} 

The third term on the r.h.s. of \eqref{F}, $\int_{\Omd} |\curl e|^2$, is clearly convex. So its norm lower semi-continuity is equivalent to weak semi-continuity. Since it is clearly $\frachvec^1-$norm continuous, it is $\frachvec^1-$weakly lower semi-continuous.

 Hence all the terms on the r.h.s. of the expression \eqref{F} for $
  F (\ka, \al, e)$, save  $-T S(\G)$, are lower semi-continuous under the convergence indicated. 
   The  lower semi-continuity of the latter term is proven in Lemma \ref{lem:Slsc} of Appendix \ref{sec:entropy}. Hence $ F (\ka, \al, e)$ is lower semi-continuous, which completes the proof of Lemma \ref{lem:Flsc}.\end{proof}

  Finally, we observe that the set $\cD^1_\nu \times \frachvec^1$ is  closed in $I^{1, 2} \times I^{1, 2}  
 \times \frachvec^1$ under the weak convergence.

We continue with the proof of Theorem \ref{thm:ExistMin}. With the results above, the proof of existence of a minimizer is standard. Let  $\{(\ka_n, \al_n, e_n)\}$ be a weakly convergent sequence in $
\mathcal{D}^1_\nu \times \frachvec^1$ 
 which is a minimizing sequence for $F(\ka, \al, e) 
 $. By  \eqref{F-lower-bnd}, 
the norm $\|(\ka_n, \al_n, e_n)\|_{(1)} (:= \|\ka_n \|_{I^{1,2}} + \|\al_n \|_{ I^{1,2}}+\|e_n \|_{\frachvec^{1}})$ (see the line after \eqref{F}) is bounded uniformly in $n$. By Sobolev-type embedding theorems,  $(\ka_n, \al_n, e_n)$ converges strongly in $\cD^s_\nu\times \frachvec^{s}$ for any $s<1$ and by 
  the Banach-Alaoglu theorem, 
   $(\ka_n, \al_n, e_n)$ converges weakly in $\cD^1_\nu\times \frachvec^{1}$. Denote the limit by $(\ka_*, \al_*,e_*)$. Since, by Lemma \ref{lem:Flsc}, $F$ is  lower semi-continuous, we have
\begin{align}
	\liminf_{n \rightarrow \infty} F(\ka_n, \al_n, e_n) \ge F(\ka_*, \al_*,e_*).
\end{align}
Hence, $(\ka_*, \al_*, e_*)$ is indeed a minimizer. This proves the existence of a minimizer of the functional \eqref{F-def} and therefore of 
$\mathcal{F}(\G,a)$.

Now, we establish properties of the minimizer $(\ka_*, \al_*, e_*)$. By \eqref{S-finite}, we have $\Tr g(\G_*)<\infty$, where $\G_*$ corresponds to $(\ka_*, \al_*)$.  
The statement, that $g(\G_*)$ is trace class, follows from \eqref{S-finite} and the fact that $g(\G)\ge 0$. 

Furthermore, if we restrict ourselves to even $(\ka, \al, e)$, i.e. to $(\ka, \al, e)$ satisfying \eqref{even}, then the minimizer $(\ka_*, \al_*, e_*)$ is also even.


Now, since a minimizing sequence $e_n$ converges to $e_*$ strongly in $\vec{\mathfrak{h}}^s$ for any $s<1$, we have, by the magnetic flux quantization \eqref{mf-quant} for $e_n$, the convergence of $e_n$ to $e_*$ and by the Stokes theorem, that $\frac{1}{2\pi} \int_{\Omd} \curl a_* = c_1(\rho)\in \Z$, where, recall, $a_*=a_b +e_*$. 

Finally, the last property of the minimizer $(\ka_*, \al_*, e_*)$ is shown in the following 
\begin{lemma}\label{lem:eta*EVs}
$0$ and $1$ are not eigenvalues of $\G_*$.  Consequently, $0 < \G_* < 1$.\end{lemma}
\begin{proof} 
\DETAILS{We use throughout the proof that for any $v \in \frachb \times \frachb$, normalized, and and operator $A$ on $\frachb$, we have $\Tr AP_v = \lan v, Av\ran$ {\bf(this looks strange as $A$ acts on $\frachb$ and $v\in \frachb \times \frachb$)}, where $P_v$ is the projection onto $v$, and we  write $\lan v, A v\ran$ instead of $\Tr A P_v$.} 
We assume for the sake of contradiction that $\G_*$ has the eigenvalue $0$. Hence $1$ is also an eigenvalue, since, if $\G_* x = 0$, then  \eqref{Gam-prop} implies $\G_* J\bar x = J\bar x$. 
Let $P_x$ and $P_{J\bar x}$ be the orthogonal projections onto the subspaces spanned by $x$ and $J\bar x$.   Define
\begin{align}
	\G' := P_x - J\bar P_x J^* = P_x - P_{J\bar x}. \label{eqn:G'-Px-JPJ}
\end{align}
Let $\G_\e:=(1+2\e/\nu)^{-1}(\G_*+\epsilon \G')$. Since $x$ and $J\bar x$ are orthogonal, it is straightforward to see that $\Tr\G_\e =\nu$, $\G_\e\in \cD^1_\nu$ and $0 \le\G_\e \le 1$.  

Below, we 
  write $F(\G)$ for $F(\ka, \al, e)$ and compute $F(\G_\e)$. By \eqref{F}, 
we have 
\begin{align}\label{F-exp1} F(\G_\epsilon) - F(\G_*)= - T(S(\G_\epsilon) - S(\G_*))+O(\e). 
\end{align} Using that $S(\G) = \Tr(s(\G))$, with $s(\G ) = -2\G\ln\G$, and writing $\Tr(s(\G))$ in an orthonormal eigen-basis which includes the eigenvectors $x$ and $J\bar x$, we find 
\[S(\G_\epsilon) - S(\G_*)= \lan x, s(\G_\epsilon)x \ran + \lan J\bar x, s(\G_\epsilon)J\bar x \ran.\] Then we use Jensen's inequality to find
 \begin{align}\label{S-ineq1}S(\G_\epsilon) - S(\G_*)\ge  s(\lan x, \G_\epsilon x \ran) + s(\lan J\bar x, \G_\epsilon J\bar x \ran).\end{align} 
Now, using the definition $\G_\e:=(1+2\e/\nu)^{-1}(\G_*+\epsilon \G')$ and the relations $\G_* x = 0$ and $\G_* J\bar x = J\bar x$, 
 we compute $\lan x, \G_\epsilon x \ran=(1+2\e/\nu)^{-1}\e$ and $\lan J\bar x, \G_\epsilon J\bar x \ran=(1+2\e/\nu)^{-1}(1+\e)$. This, together with \eqref{S-ineq1} and the definition $s(\lam):= - 2\lam  \ln \lam$, yields
\begin{align}\label{S-ineq2}S(\G_\epsilon) - S(\G_*)&\ge  s((1+2\e/\nu)^{-1}\e) + s((1+2\e/\nu)^{-1}(1+\e))\notag\\
&= -2 \epsilon  \ln \epsilon  + O(\epsilon).\end{align}
(The nonlinear $\epsilon \ln \epsilon$ term in \eqref{S-ineq2} is due to the fact that $x\ln(x)$ is not differentiable at $0$.) This, together with \eqref{F-exp1}, implies
\begin{align}\label{F-exp2} F(\G_\epsilon) &\le  F(\G_*) + 2T  \epsilon  \ln \epsilon  + O(\epsilon). 
\end{align}
Since $ \ln \epsilon <0$ and $|\epsilon \ln \epsilon| \gg \epsilon$, we conclude that $F(\G_\epsilon) <  F(\G_*)$
which contradicts minimality of $\G_*$. We conclude that $\G_*$ has a trivial kernel; hence, it has a trivial $1$-eigenspace. Consequently, $0 < \G_* < 1$. 
\end{proof}
This completes the proof of Theorem \ref{thm:ExistMin}.\end{proof} 

\subsection*{Acknowledgments}
The first author is very grateful for Almut Burchard's kind help and suggestions. The second author is grateful to Volker Bach, S\'ebastien  Breteaux, Thomas Chen and J\"urg Fr\"ohlich for enjoyable collaboration and both authors thank Rupert Frank and  Christian Hainzl, for stimulating discussions.

The authors are indebted to the anonymous referees for many useful suggestions and remarks.

 The authors' research is supported in part by NSERC Grant No. NA7901. During the work on the paper, the authors enjoyed the support of the NCCR SwissMAP.

 \appendix


\section{Entropy} \label{sec:entropy}

In this appendix we prove the differentiability and expansion of the entropy functional, which we recall here
\begin{align}
\label{S-def''}	& S(\G) := \Tr(s(\G))= \Tr(g(\G)),\\
\label{g-s-def''}	&  g(\G): = -\G\ln\G - (1-\G)\ln(1-\G),\ s(\G ):= -2\G\ln\G.
\end{align}
 This is used in the next two appendices in order to prove  Theorem \ref{thm:BdG=EL} and Propositions \ref{prop:FT''} and  \ref{prop:FT-expan-order2}.
Let 
$d S(\G  ) \G' :=\partial_\epsilon S(\G  + \epsilon \G') \mid_{\epsilon=0} $. We have

\begin{proposition}\label{prop:S-deriv} Let $\G \in \mathcal{D}^1_\nu$ be such that 
 $g(\G) := -\G \ln \G - (1-\G)\ln(1-\G)$ is trace class and $\G'$ satisfy \eqref{Gam'-cond}. Then $S$ is $C^1$ and its derivative is given by
\begin{align}\label{dS}	& d S(\G  ) \G' = \Tr( g'(\G ) \G')= \Tr( s'(\G ) \G').
\end{align}
\end{proposition}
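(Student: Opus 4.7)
The plan is to first establish the identity $dS(\G)\G' = \Tr(g'(\G)\G')$ via an integral representation of $g'$ on the spectral interval $(0,1)$, and then obtain the equivalent representation with $s'$ in place of $g'$ using the particle--hole symmetry $J^*\G J = \one - \bar\G$ from \eqref{Gam-prop} together with the constraint $J^*\G' J = -\bar\G'$ from \eqref{Gam'-cond}.

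The crux is the spectral-edge singularity of $g'(\lam) = \log\bigl((1-\lam)/\lam\bigr)$ and $s'(\lam) = -2\log\lam - 2$ at $\lam = 0, 1$. The side condition $(\G')^2 \lesssim [\G(1-\G)]^2$ from \eqref{Gam'-cond} is tailored to tame these singularities: it amounts to an operator bound $|\G'| \lesssim \G(1-\G)$, which both guarantees that $\G_\tau := \G + \tau\epsilon\G'$ satisfies $0 < \G_\tau < 1$ for small $|\tau\epsilon|$ and renders products such as $(\log\G)\,\G'$ and $\log(1-\G)\,\G'$ trace class. Concretely, I would use, for $\lam \in (0,1)$, the integral representation
\begin{equation*}
g'(\lam) \;=\; \log\frac{1-\lam}{\lam} \;=\; \int_0^\infty \Bigl(\frac{1}{t+\lam} - \frac{1}{t+1-\lam}\Bigr)\, dt,
\end{equation*}
whose integrand decays like $O(t^{-2})$, and the corresponding operator identity for $g'(\G_\tau)$ via functional calculus. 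Combining with the resolvent derivative $\partial_\tau (t+\G_\tau)^{-1} = -\epsilon (t+\G_\tau)^{-1} \G' (t+\G_\tau)^{-1}$ and cyclicity of the trace gives $\tfrac{d}{d\tau} \Tr g(\G_\tau) = \epsilon\, \Tr(g'(\G_\tau)\G')$, hence
\begin{equation*}
\Tr[g(\G_\epsilon) - g(\G)] \;=\; \epsilon \int_0^1 \Tr(g'(\G_\tau)\G')\, d\tau.
\end{equation*}
Dividing by $\epsilon$ and passing to the limit by dominated convergence --- with a $\tau$-uniform trace-class majorant coming from $|\G'| \lesssim \G(1-\G)$ and the $O(t^{-2})$ decay supplying integrability in $t$ --- yields the first equality in \eqref{dS}. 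The $C^1$ assertion follows by re-running the same dominated-convergence argument as the base point $\G$ varies on the admissible set.

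For the second equality, note the algebraic identity $g(\lam) = \tfrac12[s(\lam) + s(1-\lam)]$, immediate from \eqref{g-s-def''}, whence $g'(\lam) = \tfrac12[s'(\lam) - s'(1-\lam)]$. Conjugating $J^*\G J = \one - \bar\G$ gives $J^*\bar\G J = \one - \G$, so by functional calculus $s'(1-\G) = J^* s'(\bar\G) J$; combining this with $J\G' J^* = -\bar\G'$ (obtained by conjugating the constraint $J^*\G' J = -\bar\G'$) and cyclicity of the trace yields
\begin{equation*}
\Tr(s'(1-\G)\G') \;=\; \Tr\bigl(s'(\bar\G)\, J\G' J^*\bigr) \;=\; -\Tr(s'(\bar\G)\bar\G') \;=\; -\overline{\Tr(s'(\G)\G')}.
\end{equation*}
Since $s'(\G)$ and $\G'$ are both self-adjoint, $\Tr(s'(\G)\G')$ is real, so the preceding chain forces $\Tr\bigl([s'(\G) + s'(1-\G)]\G'\bigr) = 0$; combined with $g' = \tfrac12[s' - s'(1-\cdot)]$ this gives $\Tr(g'(\G)\G') = \Tr(s'(\G)\G')$, as claimed.

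The main obstacle is securing a trace-norm bound on $g'(\G_\tau)\G'$ uniformly in $\tau \in [0,1]$ and stably in $\G$: every step --- the functional-calculus integral representation, differentiation under the trace, the $\epsilon \to 0$ limit, and continuity in $\G$ --- hinges on upgrading the pointwise operator constraint $(\G')^2 \lesssim [\G(1-\G)]^2$ into an estimate strong enough to drive dominated convergence uniformly against both the $t$-integral and the $\tau$-integral. Once that quantitative estimate is in hand, the remaining steps are routine functional calculus.
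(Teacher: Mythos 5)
Your proposal is correct and follows essentially the same route as the paper's proof: both rest on the integral representation $\ln a-\ln b=\int_0^\infty[(b+t)^{-1}-(a+t)^{-1}]\,dt$, the second resolvent identity, and the conversion of the constraint $(\G')^2\lesssim[\G(1-\G)]^2$ into the Hilbert--Schmidt bound $\|\G'(\G^{\#}+t)^{-1}\|_{I^2}\lesssim\|\xi^{\#}(\xi^{\#}+t)^{-1}\|_{I^2}$ with $\xi^{\#}:=\G^{\#}(1-\G^{\#})$, whose square integrates in $t$ to $\Tr\xi^{\#}<\infty$ --- exactly the uniform trace-norm control you flag as the main obstacle. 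The only organizational differences are that the paper reduces at the outset to $s(\G)=-\G\ln\G$ via $\Tr g(\G)=\Tr s(\G)$ and writes a direct expansion $S(\G+\e\G')-S(\G)=\e S_1+\e^2R_2$ with an explicit remainder, whereas you differentiate along the path $\G_\tau$ and then verify $\Tr(g'(\G)\G')=\Tr(s'(\G)\G')$ explicitly through the particle--hole symmetry, a step the paper leaves implicit.
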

\begin{proof}
By  \eqref{S-def''}, it suffices for us to prove the proposition for $s(\G) = -\G\ln(\G)$.
Denote $\G'':=\G +\epsilon \G'$. We write 
\begin{align}\label{S-AB} S(\G'') - S(\G ) &= -\Tr (\G(\ln\G''-\ln \G) - \e \G'(\ln \G''-\ln \G) - \e \G'\ln \G)\\
\label{S-AB}&=: A+B - \e \Tr (\G'\ln \G).
	\end{align} 
Using the formula $\ln a -\ln b= \int_0^\infty [(b + t)^{-1} - (a + t)^{-1}] d t$ and the second resolvent equation, we compute
\begin{align}
	A &:= -\Tr (\G(\G''-\ln \G)) \notag\\
	&= \int_0^\infty \Tr \{ \G[(\G'' + t)^{-1} - (\G+t)^{-1}]\} dt\notag\\
			&  = -\int_0^\infty \Tr \{ \G (\G+ t)^{-1} \e\et' (\G''+t)^{-1}\} dt\notag\\
	&  = -\int_0^\infty \Tr \{ \G (\G+ t)^{-1} \e\et' (\G+t)^{-1}\} dt \notag\\
	& \quad -\int_0^\infty \Tr \{ \G (\G+ t)^{-1} \e\et'  (\G+ t)^{-1} \e\et' (\G''+t)^{-1}\} dt.
		 \label{A-comp2}
	\end{align}
Similarly, we have
\begin{align}
	B &:= -\Tr (\e \G'(\G''-\ln \G)) \notag\\
	&= \int_0^\infty \Tr \{ \e \G'[(\G''+ t)^{-1} - (\G+t)^{-1}]\} dt\notag\\
			&  =  -\int_0^\infty \Tr \{ \e \G' (\G+ t)^{-1} \e\et' (\G''+t)^{-1}\} dt.
			\label{B-comp2}
	\end{align}
Combining the last two relations with \eqref{S-AB}, we find
\begin{align}\label{S-exp2} & S(\G +\epsilon \G')- S(\G )=  \e S_1+  \e^2 R_2, 
\\
\label{S1-orig}&S_1:= -\Tr \G'\ln \G - \int_0^\infty \Tr \{ \G (\G+ t)^{-1} \et' (\G+t)^{-1}\} dt, \\
&R_2:= \int_0^\infty \Tr \{ \G (\G+ t)^{-1} \et'  (\G+ t)^{-1} \et' (\G''+t)^{-1} \notag\\
\label{R2-orig}& \qquad - \G' (\G+ t)^{-1} \et' (\G''+t)^{-1}\} dt.	
	\end{align} 
The estimates below show that the integrals on the r.h.s. converge. Computing the integral $ \int_0^\infty \Tr \{ \G (\G+ t)^{-1} \et' (\G+t)^{-1}\} dt= \int_0^\infty \Tr \{ \G (\G+ t)^{-2} \et' \} dt=\Tr  \et' $ in the expression for $S_1$ and transforming  the expression for $R_2$, 
we obtain
\begin{align}\label{S1}  S_1:=& -\Tr \{ \G'\ln \G + \et' \} , \\ 
\label{R2} R_2:= & -\int_0^\infty \Tr \{ t (\G+ t)^{-1} \et'  (\G+ t)^{-1} \et' (\G''+t)^{-1} \} dt. 
	\end{align}

The proofs of convergence of \eqref{S1-orig} and \eqref{R2} are similar. We consider the case of \eqref{R2}.
We estimate the integrand on the r.h.s. of \eqref{R2}. 
 We have 
\begin{align}& | \Tr \{  \G' (\G+ t)^{-1} \et' (\G''+t)^{-1}\} |\notag\\ 
&\qquad \qquad  \qquad  \le  \| \et' (\G+t)^{-1}\|_{I^2} \| \et' (\G''+t)^{-1}\|_{I^2} 
\end{align} 
Now, we show that the factors on the r.h.s. are $L^2(dt)$. By the second condition in \eqref{Gam'-cond} on $\et'$, we have 
\begin{align}\| \et' (\et^\#+t)^{-1}\|_{I^2} &\le \| \et (\one -\et^\#) (\et^\#+t)^{-1}\|_{I^2} \notag\\ & 
 \le \| \xi^\# (\xi^\# +t)^{-1}\|_{I^2},
 \end{align}
where $\et^\#$ is either $\et$ or $\et''$ and $\xi^\#:=\et^\# (\one -\et^\#)$. Let $\mu_n$ be the eigenvalues of the operator $\xi^\#:=\et^\# (\one -\et^\#)$. Then we have 
\begin{align}  \label{eqn:1stVar}	&  \|\xi^\# (\xi^\# +t)^{-1}\|_{I^2}^2 =\sum_n  \mu_n^2 (\mu_n+t)^{-2} , 
\end{align}
and therefore 
\begin{align}\int_0^\infty  \| \xi^\# (\xi^\#+t)^{-1}\|_{I^2}^2 dt & =\int_0^\infty   \sum_n  \mu_n^2 (\mu_n+t)^{-2}  dt\notag\\
&=\sum_n  \mu_n = \Tr   \xi^\#.\end{align}  
Since $ \et(\one -\et)$ and $ \et''(\one -\et'')$ are trace class operators, this proves the claim and, with it, the convergence of the integral on the l.h.s.. Similarly, one shows the convergence of the other integrals.

To sum up, we proved the expansion \eqref{S-exp2} with $S_1$ given by \eqref{S1}, which is the same as \eqref{dS},  and $R_2$ bounded as $|R_2|\ls 1$.
In particular, this implies that  $S$ is $C^1$ and its derivative is given by \eqref{dS}.
\end{proof}

\begin{proposition}\label{prop:S-expan-order2} 
$S(\G) := \Tr(g(\G)) $ is $C^3$ at $\G_{Tb}$ w.r.t. perturbations $\et'$ satisfying \eqref{Gam'-cond}.
 Moreover,  we have
\begin{align}\label{S-expan} 
	S(\G_{Tb}+ \e\G') =& S(\G_{Tb}) + \e S'(\G_{Tb})\G' + \frac12 \e^2 S''(\G',\G')  + O(\epsilon^3), 
 \end{align}
where 
$S'(\G_{Tb})\G':=\Tr(g'(\G_{Tb})\G')$, $S''(\G',\G')$ is a quadratic form given by
\begin{align}\label{S''}  	
	S''(\G',\G')=& \frac12  \int_0^\infty \Tr \{  (\G+ t)^{-1} \et'  (\G+ t)^{-1} \et' \} dt
\end{align} 
 and the error term is uniform in $\G'$ and is bounded by $\e^3 \Tr(\G_{Tb}(1-\G_{Tb}))$.  For 
$\G'=\phi(\al)$, the quadratic term becomes 
\begin{align}\label{S''-al} &S''(\G',\G') = -\Tr\left(\bar{\alpha} K_{T b} \alpha \right),\\  & K_{T b}:=\frac{1}{T} \frac{h_{T b}^L+h_{T b}^R}{\tanh(h^L_{T b}/T) + \tanh(h^R_{T b}/T)}, \end{align}
 where $h_{T b}:=h_{\gamma_{Tb} a_b \mu}\equiv h_{\gamma_{Tb} a_b } - \mu$, with 
  $h_{\g a}$ defined in \eqref{Lam}.
\end{proposition}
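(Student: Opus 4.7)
The strategy is to continue the resolvent expansion already carried out in the proof of Proposition~\ref{prop:S-deriv}, pushing it to one higher order in $\e$. That proof produced the identity $S(\G + \e\G') - S(\G) = \e S_1 + \e^2 R_2$ with $R_2$ given by the explicit double-resolvent integral \eqref{R2} involving $\G'' := \G + \e\G'$. I would apply the second resolvent identity once more to the factor $(\G''+t)^{-1}$ inside $R_2$, splitting $R_2 = R_2^{(0)} + \e R_3$, where $R_2^{(0)} := R_2|_{\e = 0}$ is the candidate for the quadratic coefficient of \eqref{S-expan} and $R_3$ is an explicit triple-resolvent remainder. An integration by parts in $t$, combined with cyclicity of the trace, converts $R_2^{(0)} = -\int_0^\infty t\,\Tr\{(\G+t)^{-2}\G'(\G+t)^{-1}\G'\}\,dt$ into (a constant multiple of) the integral appearing in \eqref{S''}; the boundary terms vanish because the Hilbert--Schmidt estimate from the proof of Proposition~\ref{prop:S-deriv} controls the decay at infinity and the $t$-prefactor kills the origin.

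To control $R_3$ I would re-use the Hilbert--Schmidt bound $\|\G'(\G^\sharp+t)^{-1}\|_{I^2}^2 \le \|\xi^\sharp(\xi^\sharp+t)^{-1}\|_{I^2}^2$ with $\xi^\sharp := \G^\sharp(1-\G^\sharp)$ (which uses $(\G')^2 \lesssim [\G(1-\G)]^2$ from \eqref{Gam'-cond}) together with $\int_0^\infty \|\xi(\xi+t)^{-1}\|_{I^2}^2\,dt = \Tr\,\xi$, applied with Cauchy--Schwarz on three factors and with careful bookkeeping of the powers of $t$. This yields the uniform bound $|R_3| \lesssim \Tr(\G_{Tb}(1-\G_{Tb}))$ for small $\e$, establishing \eqref{S-expan} with the announced error estimate; the asserted $C^3$ regularity follows by iterating the same expansion once more.

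For the explicit form \eqref{S''-al} at $\G = \G_{Tb}$ and $\G' = \phi(\alpha)$, I would exploit the block structure: $\G_{Tb}$ is diagonal by \eqref{Gam-Tb} while $\phi(\alpha)$ is off-diagonal, so the product $(\G_{Tb}+t)^{-1}\phi(\alpha)(\G_{Tb}+t)^{-1}\phi(\alpha)$ is diagonal, and taking the trace gives a scalar double operator integral of $(\gamma_{Tb}+t)^{-1}(1-\bar\gamma_{Tb}+t)^{-1}$ against $\alpha$, viewed through the integral-kernel identification $I^2(\frach) \simeq \frach \otimes \bar\frach$, with $\gamma_{Tb}^L = f_T(h_{Tb}^L)$ acting on the left variable and $1-\bar\gamma_{Tb}^R = f_T(-h_{Tb}^R)$ on the right. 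Using the elementary scalar integral $\int_0^\infty (\lambda+t)^{-1}(\mu+t)^{-1}\,dt = (\ln\mu - \ln\lambda)/(\lambda - \mu)$ together with the Fermi--Dirac identities $\ln f_T(x) - \ln f_T(-x) = -x/T$ and $f_T(x) - f_T(-x) = -\tanh(x/(2T))$, the double operator integral collapses to the operator $K_{Tb}$ announced in \eqref{S''-al}. The main obstacle is not the algebra but the functional calculus: one must interpret $h_{Tb}^L$ and $h_{Tb}^R$ as commuting self-adjoint operators on $I^2(\frach)$ and justify the Fubini-type interchange between the $t$-integral and the Hilbert--Schmidt trace. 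Both steps are routine once one observes that $h_{Tb}$ has compact resolvent on $\frach$ (by mt-invariance of $\gamma_{Tb}$ and Lemma~\ref{lem:gen-mt}), so the spectral theorem applies to $h_{Tb}^L + h_{Tb}^R$ directly, and the bound $\alpha\alpha^* \le \gamma_{Tb}(1-\gamma_{Tb})$ from \eqref{al-cond} gives the absolute integrability required for Fubini.
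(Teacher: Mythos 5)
Your proposal follows essentially the same route as the paper's proof: it extends the resolvent expansion from Proposition \ref{prop:S-deriv} by one more application of the second resolvent identity, obtains \eqref{S''} via integration by parts in $t$ plus cyclicity of the trace, controls the remainder with the same Hilbert--Schmidt bounds resting on $(\G')^2\lesssim[\G(1-\G)]^2$ and $\int_0^\infty\|\xi(\xi+t)^{-1}\|_{I^2}^2\,dt=\Tr\xi$, and derives \eqref{S''-al} from the block structure and the same logarithmic double operator integral combined with the Fermi--Dirac identities. The only blemish is a sign slip in your quoted scalar integral, which should read $\int_0^\infty(\lambda+t)^{-1}(\mu+t)^{-1}\,dt=(\ln\lambda-\ln\mu)/(\lambda-\mu)$ in agreement with the paper's \eqref{K1}; this does not affect the argument.
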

\begin{proof} For the duration of the proof we omit the subindex $Tb$ in $\G_{Tb}$. Recall 
\eqref{S-AB}-\eqref{B-comp2} and continuing computing $A$ and $B$ in  \eqref{A-comp2}-\eqref{B-comp2} in the same fashion as in the derivation of these equations, we find
\begin{align}
	A 	=&  \int_0^\infty \Tr \{ \G (\G+ t)^{-1} \e\et' (\G+t)^{-1}\} dt \notag\\
	&- \int_0^\infty \Tr \{ \G (\G+ t)^{-1} \e\et'  (\G+ t)^{-1} \e\et' (\G+t)^{-1}\} dt\notag\\
	&+ \int_0^\infty \Tr \{ \G (\G+ t)^{-1} \e\et'  (\G+ t)^{-1} \e\et' (\G+ t)^{-1} \e\et' (\G''+t)^{-1}\} dt, \label{A-comp3}
	\end{align}
and
\begin{align}
	B 			=&  \int_0^\infty \Tr \{ \e \G' (\G+ t)^{-1} \e\et' (\G+t)^{-1}\} dt \notag\\
	&- \int_0^\infty \Tr \{ \e \G' (\G+ t)^{-1} \e\et'  (\G+ t)^{-1} \e\et' (\G''+t)^{-1}\} dt . \label{B-comp3}
	\end{align}
Combining the last two relations with \eqref{S-AB} and recalling the computation of $S_1$, we find
\begin{align}\label{S-exp3}  S(\G +\epsilon \G')&- S(\G )=  \e S_1+  \e^2 S_2+  \e^3 R_3\\
	S_2:=&- \int_0^\infty \Tr \{ \G (\G+ t)^{-1} \et'  (\G+ t)^{-1} \et' (\G+t)^{-1} \notag\\
	&- \G' (\G+ t)^{-1} \et' (\G+t)^{-1}\} dt,\notag\\
		R_3:= &\int_0^\infty \Tr \{ \G (\G+ t)^{-1} \et'  (\G+ t)^{-1} \et' (\G+ t)^{-1}  \et' (\G''+t)^{-1}\notag\\
	&- \G' (\G+ t)^{-1} \et'  (\G+ t)^{-1} \et' (\G''+t)^{-1}\} dt. 
	\end{align} 
Transforming  the expressions for $S_2$ and $R_3$, we obtain
\begin{align}\label{S2'}  
	S_2=& \int_0^\infty \Tr \{ t (\G+ t)^{-1} \et'  (\G+ t)^{-1} \et' (\G+t)^{-1}\} dt, \\
	\label{R3}	R_3	= &-\int_0^\infty \Tr \{ t (\G+ t)^{-1} \et'  (\G+ t)^{-1} \et' (\G+ t)^{-1}  \et' (\G''+t)^{-1}\} dt.
	\end{align} 
Estimates similar to those done after \eqref{R2} show that the integrals on the r.h.s. converge.
This proves the expansion \eqref{S-exp3} with $S_1$ and $S_2$ given by \eqref{S1}, which is the same as \eqref{dS}, and \eqref{S2'} and  $R_3$ bounded as $|R_3|\ls 1$.
Identifying  the quadratic form $S_2$ with  $S''(\G',\G')$, we arrive at the expansion \eqref{S-expan}.

Before computing $S_2\equiv S''(\G',\G')$, we find a simpler representation for it. Integrating the r.h.s. of  \eqref{S2'} by parts, we find
\begin{align}\label{S''-comp}  	S''=& \int_0^\infty \Tr \{ t (\G+ t)^{-2} \et'  (\G+ t)^{-1} \et' \} dt \notag \\
		=& \int_0^\infty \Tr \{  (\G+ t)^{-1} \et'  (\G+ t)^{-1} \et' \} dt \notag\\
	&- \int_0^\infty \Tr \{ t (\G+ t)^{-1} \et'  (\G+ t)^{-2} \et' \} dt.
	\end{align}
But by the cyclicity of the trace the last integral is equal to the first one and therefore we have \eqref{S''}. Eq \eqref{S-sym} gives 
 \begin{align} 	\label{S''-sym}  S''(\G',\G')=& \frac14  \int_0^\infty \Tr \{  [(\G+ t)^{-1} \et'  (\G+ t)^{-1} \notag\\ &+  (\one - \G+ t)^{-1} \et'  (\one - \G+ t)^{-1}]\et' \} dt.
	\end{align}
	
Now, we use \eqref{S''-sym} to compute to $S''$ for $\G' = \phi(\alpha)$. First,  we recall that $\et=\G_{Tb}$ and observe that for $\G' = \phi(\alpha)$,
\begin{align}
\notag		  \Tr((\G_{Tb}+t)^{-1}\G'(\G_{Tb}+t)^{-1}\G') &= 2\Tr((\g_{Tb}+t)^{-1}\al(\one- \bar\g_{Tb}+t)^{-1}\al) \\
		  &= 2\Tr(((x+t)^{-1}(y+t)^{-1} \alpha) \bar{\alpha} )
\end{align}
where 
$x$ and $y$ are regarded as operators acting on $\alpha$ from the left by multiplying by $\gamma_{Tb}$ and 
from the right, by $1-\bar{\gamma}_{Tb}$. Putting this together with 
 a similar expression for the second term on the r.h.s. of \eqref{S''-sym} and performing the integral in $t$, we obtain 
\begin{align} \label{S''1}	& S''(\G',\G')= -\Tr\left[ \bar{\alpha} K  (\alpha) \right],\\  
& \label{K1} K  :=\frac{\log(x)-\log(y)}{x-y} + \frac{\log(1-x) - \log(1-y)}{(1-x)-(1-y)}, 
\end{align}
 with $x$ acting on the left and $y$ acting on the right. \eqref{K1} can be written as 
 \begin{align}\label{K-expr} 
 K&=- \frac{\log(x^{-1}-1)-\log(y^{-1}-1)}{x-y}. 	
\end{align}

Recalling that $\gamma_{Tb} = f_T(h_{T b})=(1+e^{2 h_{T b}/T})^{-1}$ 
 and therefore $x^{-1}-1=e^{2 h^L_{T b}/T}$ and $y^{-1}-1=e^{- 2 \bar h^R_{T b}/T}$, we see that
\begin{align}
	K=& \frac{1}{T}\frac{h_{T b}^L+\bar h_{T b}^R}{(1+e^{h^L_{T b}/T})^{-1} + (1+e^{\bar h^R_{T b}/T})^{-1}}, \end{align}
which, together 
 the hyperbolic functions identities, $(1+e^{h})^{-1}=\frac12 (1-\tanh h)$ and $(1+e^{-h})^{-1}=\frac12 (1+\tanh h)$,  gives \eqref{S''-al}. \end{proof} 
By the definition of the G\^ateaux derive and the Hessian and Proposition \ref{prop:S-expan-order2}, we have

\begin{corollary}\label{cor:S'-S''} 
 We have
\begin{align}\label{S'}
	d S(\G_{Tb})\G':=\Tr(g'(\G_{Tb})\G'), 
 	\end{align} 
 and, for $\G'=\phi(\al)$ and with $h_{T b}:=h_{\gamma_{Tb}, a_b}$, 
\begin{align}\label{S''-K} &S''(\G_{Tb})\phi(\al) = -\phi(K_{T b} \alpha),\ 
 K_{T b}:=\frac{1}{T} \frac{h_{T b}^L+h_{T b}^R}{\tanh(h^L_{T b}/T) + \tanh(h^R_{T b}/T)}. \end{align}
\end{corollary}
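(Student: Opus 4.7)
The plan is to derive both identities as immediate unpackings of Proposition \ref{prop:S-expan-order2}, using the definitions of the Gâteaux derivative and of the Hessian operator. The first identity $dS(\G_{Tb})\G' = \Tr(g'(\G_{Tb})\G')$ is a direct restatement: by definition of the Gâteaux derivative, $dS(\G_{Tb})\G' = \partial_\e S(\G_{Tb}+\e\G')|_{\e=0}$, and this is precisely the first-order term $S'(\G_{Tb})\G' = \Tr(g'(\G_{Tb})\G')$ read off from the expansion \eqref{S-expan}. So this part needs no additional computation beyond what has already been done.

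For the second identity, I would start from the characterization of the Hessian operator implicit in the definition \eqref{Hess-eta} together with the gradient defining relation $\tr(\grad_\G S(\G)\,\G') = dS(\G)\G'$. Differentiating this identity in $\G$ along a direction $\G''$ gives $\tr(\G'' \cdot S''(\G_{Tb})\G') = S''(\G',\G'')$, where the right-hand side is the symmetric bilinear form obtained by polarizing the quadratic form appearing in \eqref{S-expan}. Specialising to $\G' = \G'' = \phi(\al)$ and invoking the explicit formula \eqref{S''-al}, I obtain
\begin{align*}
\tr\bigl(\phi(\al) \cdot S''(\G_{Tb})\phi(\al)\bigr) = -\Tr(\bar\al\, K_{Tb}\al).
\end{align*}

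Next I would compute $\tr(\phi(\al)\phi(\beta))$ by direct block-matrix multiplication: since $\phi(\al)\phi(\beta) = \mathrm{diag}(\al\beta^*,\, \al^*\beta)$, this trace pairing matches the pairing $\Tr(\bar\al\,\beta)$ appearing in \eqref{S''-al} under the identification $\beta = K_{Tb}\al$. Combined with the fact that $S''(\G_{Tb})$ preserves the off-diagonal block structure (a consequence of the particle-hole symmetry \eqref{Gam-prop} together with the fact that $K_{Tb}$ commutes with the symmetry operations inherited from $\Lam_{\G_{Tb}, a_b}$, so $S''$ maps off-diagonal perturbations to off-diagonal ones), this polarization argument uniquely identifies $S''(\G_{Tb})\phi(\al) = -\phi(K_{Tb}\al)$.

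The main obstacle, and the only real bookkeeping task, will be tracking the anti-linear conjugation conventions (the distinction between $\bar\al = \cC\al\cC$ and $\al^*$) while matching $\Tr(\bar\al K_{Tb}\al)$ with the block-matrix trace; and verifying that the Hessian indeed outputs an off-diagonal operator-matrix of the form $\phi(\cdot)$ so that the polarization uniquely determines the action. Both of these are immediate given the symmetry structure of $\G_{Tb}$ and the form of $K_{Tb}$ in \eqref{S''-K}, so no substantial new computation is needed beyond Proposition \ref{prop:S-expan-order2}.
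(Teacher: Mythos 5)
Your proposal is correct and follows essentially the same route as the paper: the paper derives Corollary \ref{cor:S'-S''} in one line from the definitions of the G\^ateaux derivative and the Hessian together with Proposition \ref{prop:S-expan-order2}, which is exactly the unpacking you carry out (reading off the linear term for \eqref{S'}, and identifying the Hessian from the quadratic form \eqref{S''-al} by polarization and the block-trace pairing for \eqref{S''-K}). The extra details you supply — checking that the Hessian preserves the off-diagonal block structure and reconciling $\bar\al$ with $\al^*$ — are routine and consistent with the paper's conventions.
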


Our next result on the entropy is the following

\begin{lemma}\label{lem:Slsc}
The functional $- S(\G)$ is weakly lower semi-continuous in $\mathcal{D}^1_\nu$. 
\end{lemma}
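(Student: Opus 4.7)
The plan combines a compactness step upgrading weak convergence to strong trace-norm/Hilbert--Schmidt convergence of the diagonal/off-diagonal blocks of $\G$, with the Gibbs (Peierls--Bogoliubov) variational representation of the entropy as a pointwise infimum of weakly continuous affine functionals. Suppose $\G_n\rightharpoonup\G_*$ in $\mathcal{D}^1_\nu$; with the parametrisation $\gamma=\kappa^2$ used in the proof of Lemma~\ref{lem:Flsc}, this means $(\kappa_n,\alpha_n)\rightharpoonup(\kappa_*,\alpha_*)$ in $I^{1,2}\times I^{1,2}$. First I would invoke the compact embedding $I^{1,2}\hookrightarrow I^{s,2}$ for $s<1$ (already used in Lemma~\ref{lem:Flsc}) to obtain strong convergence in $I^{s,2}\times I^{s,2}$, and then through the identity $\gamma_n-\gamma_*=\kappa_n(\kappa_n-\kappa_*)+(\kappa_n-\kappa_*)\kappa_*$ together with $\sup_n\|\kappa_n\|_{I^{1,2}}<\infty$, upgrade this to strong trace-norm convergence $\gamma_n\to\gamma_*$ in $I^1$ and Hilbert--Schmidt convergence $\alpha_n\to\alpha_*$ in $I^2$.

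The main step would be to use the scalar Legendre identity $g(x)=\inf_{y\in\R}\{xy+\ln(1+e^{-y})\}$ on $[0,1]$, obtained by recognising $g$ as the concave conjugate of $y\mapsto\ln(1+e^y)$. Via the spectral calculus and the operator Peierls--Bogoliubov inequality, I would promote this to
\begin{equation*}
S(\G)\;=\;\Tr g(\G)\;=\;\inf_Y\bigl\{\Tr(Y\G)+\Tr\ln(\one+e^{-Y})\bigr\},
\end{equation*}
where the infimum runs over bounded self-adjoint $2\times 2$ block operators $Y$ on $\frach\oplus\frach$ with $\ln(\one+e^{-Y})\in I^1$. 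The inequality $\le$ is the Peierls inequality; equality, possibly only as a limit, is achieved at $Y_*=\ln((\one-\G)/\G)$, and I would handle the unboundedness of $Y_*$ when $\G$ has spectrum touching $\{0,1\}$ by truncating $Y_*$ spectrally to $[-k,k]$ and letting $k\to\infty$, justifying this via dominated convergence and the finite-entropy assumption $g(\G)\in I^1$. For each fixed admissible $Y$, the map $\G\mapsto\Tr(Y\G)+\Tr\ln(\one+e^{-Y})$ is affine plus constant in $\G$ and, by the strong block-by-block convergences from the first step, weakly continuous on $\mathcal{D}^1_\nu$.

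Taking $\limsup_n$ in $S(\G_n)\le\Tr(Y\G_n)+\Tr\ln(\one+e^{-Y})$ at each fixed admissible $Y$ and then the infimum over $Y$ on the right would yield $\limsup_{n\to\infty}S(\G_n)\le S(\G_*)$, which is weak upper semi-continuity of $S$, equivalently the sought weak lower semi-continuity of $-S$. The main obstacle will be the truncation step connecting admissible bounded $Y$'s with the potentially unbounded optimiser $Y_*$ while respecting the particle--hole block structure of $\G$ (in particular ensuring $\Tr(Y_{22}(\one-\bar\gamma))$ and $\Tr\ln(\one+e^{-Y})$ remain finite and weakly continuous); the finite-entropy bound $g(\G)\in I^1$ is what will make the dominated-convergence arguments work.
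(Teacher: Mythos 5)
Your compactness step is correct and matches what the paper does: weak convergence in $\mathcal{D}^1_\nu$ plus the compact embedding $I^{1,2}\hookrightarrow I^{s,2}$, $s<1$, and the factorization $\gamma_n-\gamma_*=\kappa_n(\kappa_n-\kappa_*)+(\kappa_n-\kappa_*)\kappa_*$ do upgrade to $\gamma_n\to\gamma_*$ in $I^1$ and $\alpha_n\to\alpha_*$ in $I^2$ (the paper uses exactly this convergence in $\|\cdot\|_{(0)}$). The scalar identity $g(x)=\inf_{y}\{xy+\ln(1+e^{-y})\}$ is also correct. The gap is in promoting it to the operator level on $\frach\oplus\frach$. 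The operator $\G$ is \emph{not} trace class there: by \eqref{Gam} its lower-right block is $\one-\bar\gamma$, whose spectrum accumulates at $1$, and likewise $\one-\G$ is not trace class. The quantity $\Tr g(\G)$ is finite only because $g$ vanishes at \emph{both} endpoints $0$ and $1$; any Legendre splitting into a linear piece $\Tr(Y\G)$ plus a partition-function piece $\Tr\ln(\one+e^{-Y})$ destroys this cancellation. Concretely: if $Y$ is bounded, then $\ln(\one+e^{-Y})\geq \ln(1+e^{-\|Y\|})\,\one$ is not trace class, so your stated admissible set (bounded $Y$ with $\ln(\one+e^{-Y})\in I^1$) is empty; if instead you allow $e^{-Y}\in I^1$, then $Y\to+\infty$ spectrally and $\Tr(Y\G)=+\infty$ because of the $\one-\bar\gamma$ block. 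At the formal optimizer $Y_*=\ln((\one-\G)/\G)$ one has $\Tr\ln(\one+e^{-Y_*})=-\Tr\ln(\one-\G)=+\infty$ and $\Tr(Y_*\G)=-\infty$ whenever $\G$ has infinitely many eigenvalues near $1$, i.e.\ always. Spectral truncation of $Y_*$ to $[-k,k]$ does not help: the divergence is not located at exceptional spectral points of $\G$ but is forced by the infinite dimensionality of the particle--hole space, so for every $k$ the right-hand side is $+\infty$ and the infimum representation is vacuous.

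The repair is to renormalize against a reference state before applying any variational argument, and this is precisely what the paper does: it writes $S(\G)=S(\G_0)-S(\G|\G_0)-\Tr[(\G-\G_0)\ln\G_0]$ with $\G_0=f_T(M)$, $M=\mathrm{diag}(\sqrt{-\Delta_{a_b}},-\overline{\sqrt{-\Delta_{a_b}}})$, checks via $0\le-\ln\G_0\lesssim \one+M/T$ that $(\G-\G_0)\ln\G_0$ is trace class and that this affine term is $\hat I^1$-continuous (hence weakly l.s.c.), and then proves weak lower semi-continuity of the relative entropy $S(\cdot|\G_0)$ by Lindblad's device: $S(\G|\G_0)+\Tr(\G_0-\G)=\sup_{\lambda\in(0,1)}\sup_P\Tr(P\,s_\lambda(\G|\G_0))$ with $P$ finite rank, each term being continuous in operator norm, so the supremum is weakly l.s.c. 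That argument is the correctly renormalized version of the ``supremum/infimum of weakly continuous functionals'' idea you are reaching for; if you want to keep a duality-type proof, you should apply it to the relative entropy $S(\G|\G_0)$ (or work with $\gamma$ and $\one-\bar\gamma$ separately against the reference $\G_0$), not to $\Tr g(\G)$ directly.
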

\begin{proof}
We use an idea from \cite{L2} which allows to reduce the problem to a finite-dimensional one. We use \eqref{S-relatS}, to pass from $- S(\G)$ to 
 the relative entropy,  $S(\G | \G_0)$, defined in \eqref{RelatEntropy}, with $\G_0$ of the form \eqref{Gam0Phi}, with $\Tr \gamma_0< \infty$ and s.t. $S(\G_0)<\infty$. 
  By \eqref{S-relatS}, $S(\G | \G_0)\ge 0$. 
Moreover,
\begin{align}\label{S-deco}
	 S(\G) = S(\G_0) - S(\G|\G_0) - \Tr[(\G-\G_0)\ln \G_0].
\end{align}
We choose $\G_0$ so that $(\G-\G_0)\ln \G_0$ is trace class and the term $\Tr[(\G-\G_0)\ln \G_0]$ is weakly lower semi-continuous. We take 
\begin{align} \label{eqn:H0choice}
	& \G_0 = f_{T}(M), \, M := \text{diag}(\sqrt{-\Delta_{a_b}}, -\overline{\sqrt{-\Delta_{a_b}}}). 
\end{align}
Since $ f_{T}(h)=\rbrac{e^{ h/T}+1}^{-1}$, 
we see that that
\begin{align} \label{eqn:ln-eta0}
	0 \leq -\ln \eta_0 = \ln (1+e^{ M/T}) \lesssim & 1+  M/ T.
\end{align}
This estimate and (\ref{eqn:H0choice}) show that $\Tr[(\G-\G_0)\ln \G_0]$ is $\hat I^1$-norm continuous ($\hat I^1-$norm is defined in \eqref{eta-norm}).  
Indeed, writing $ \ln\eta_0=(1+  M/ T)(1+  M/ T)^{-1} \ln\eta_0$ and using \eqref{eqn:ln-eta0}, we find
\begin{align} \label{eta-ln-eta0-est}	\|(\G-\G_0) \ln\eta_0\|_{I^1} \le \|(\G-\G_0) (1+M/ T)\|_{I^1}&\|(1+  M/ T)^{-1} \ln\eta_0\|\notag\\
& \ls \|\eta-\G_0\|_{I^{1,1}}.
\end{align}
This completes the proof of the claim that $\Tr[(\G-\G_0)\ln \G_0]$ is $\hat I^1$-norm continuous.

Furthermore, since this term is affine in $\G$, it is convex. Thus it is weakly lower semicontinuous.


Now, following \cite{L2}, let $s_\lambda(A|B) = \lambda^{-1}(s(\lambda A + (1-\lambda) B) - \lambda s(A)- (1-\lambda) s(B))$ and write
\begin{align}
	S(\G_n | \G_0) +& \Tr(\G_0 - \G_n) 
	= \sup_{\lambda \in (0,1)} \Tr( s_\lambda(\G_n | \G_*)).
\end{align}
Since the entropy function $s$ is concave, $s_\lambda(A|B) \geq 0$ for any $A,B$. For any non-negative operator $T$ on $L^2(\Om)$, $\Tr_{L^2(\Om)} T = \sup_P \Tr_{L^2(\Om)} PT$ where the sup is taken over all finite rank projections. Hence, we may write
\begin{align}
	S(\G_n | \G_0) +& \Tr(\G_0 - \G_n) 
	= \sup_{\lambda \in (0,1)} \sup_{P} \Tr( Ps_\lambda(\G_n | \G_0)
\end{align}
where the $\sup_P$ is taken over all finite rank projections $P$. It follows that for any $\lambda \in (0,1)$ and any finite rank projection $P$,
\begin{align}
	S(\G_n | \G_0) + \Tr(\G_0 - \G_n) \geq \Tr( Ps_\lambda(\G_n | \G_0)).
\end{align}
Since $\G_n \rightarrow \G_*$ in $\| \cdot \|_{(0)}$ (hence in operator norm) and $-x\ln x$ is continuous on $[0,1]$, we see that
\begin{align}
	 s_\lambda(\G_n | \G_0) \rightarrow   s_\lambda(\G_* | \G_0) \end{align}
in the operator norm. 
In particular, for any finite dimensional projection $P$, 
\begin{align}
	\Tr(P  s_\lambda (\G_n | \G_0)) \rightarrow \Tr( P   s_\lambda(\G_* | \G_0)).
\end{align}
Consequently,
\begin{align}
	\liminf_{n \rightarrow \infty} S(\G_n | \G_0) +& \Tr(\G_0 - \G_n) \geq \Tr( Ps_\lambda(\G_* | \G_0)).
\end{align}
Now taking $\sup_{\lambda \in (0,1)}$ and $\sup_P$ and using that $\Tr(\G_0 - \G_n)=0$, by condition \eqref{Gam'-cond}, we see that
\begin{align}
	\liminf_{n \rightarrow \infty} S(\G_n | \G_0) \geq \Tr( s(\G_* | \G_0)),
\end{align}
which implies the desired statement. 
  \end{proof}

As an aside not used in this paper, we compute the Hessian, $\p_{\g\g}S(\g_{T b})$, of $S$ w.r.t. diagonal perturbations,
\begin{align}\label{d-gam'} d(\g'):=\left( \begin{array}{cc} \g' & 0 \\ 0  & -\bar\g' \end{array} \right). 
 \end{align}
 $\p_{\g\g}S(\g_{T b})$ is defined by 
\begin{align}[\lan\g', \p_{\g\g}S(\g_{T b})\g'\ran:=\p^2_\e S(\G_{Tb}+ \e d(\g'))\big|_{\e=0}.\end{align} 
We have
\begin{proposition}\label{prop:S''-gam} 
 The hessian operator $\p_{\g\g}S(\g_{T b})$ is given by 
\begin{align}\label{S''-gam} \p_{\g\g}S(\g_{T b})= 
\frac1T\frac{h_{T b}^L-h_{T b}^R}{\tanh(h^L_{T b}/T) - \tanh(h^R_{T b}/T)}. \end{align}
\end{proposition}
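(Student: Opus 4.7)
The plan is to mimic the computation carried out for off-diagonal perturbations in the proof of Proposition \ref{prop:S-expan-order2}, now applied to the diagonal perturbation $d(\gamma')$. The starting identification is
$\langle \gamma', \partial_{\gamma\gamma}S(\gamma_{Tb})\gamma'\rangle = S''(d(\gamma'), d(\gamma'))$,
which is just the definition of the Hessian read off from the second-order expansion \eqref{S-expan} applied along $\G' = d(\gamma')$. The well-definedness and convergence of all integrals below are inherited from the estimates already established in the proof of Proposition \ref{prop:S-expan-order2}, so the task is purely algebraic.

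First, I would substitute $\G' = d(\gamma')$ into the symmetric integral representation \eqref{S''-sym}. Since $d(\gamma')$ is block-diagonal, the products $(\G_{Tb}+t)^{-1} d(\gamma') (\G_{Tb}+t)^{-1} d(\gamma')$ and $(1-\G_{Tb}+t)^{-1} d(\gamma') (1-\G_{Tb}+t)^{-1} d(\gamma')$ are also block-diagonal; the $(1,1)$-blocks produce traces involving $\gamma_{Tb}$ and $\gamma'$, while the $(2,2)$-blocks produce traces involving $1-\bar\gamma_{Tb}$ and $\bar\gamma'$ (the two minus signs from the lower-right entry of $d(\gamma')$ cancelling). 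Using the conjugation identity $\Tr[\bar A \bar B \bar C \bar D] = \overline{\Tr[ABCD]}$ and the self-adjointness of the perturbation $\gamma'$, the $(2,2)$-contributions coincide with the complex conjugates of the $(1,1)$-contributions, and the sum collapses to
\begin{align*}
S''(d(\gamma'), d(\gamma')) = \frac{1}{2}\int_0^\infty \bigl\{&\Tr[(\gamma_{Tb}+t)^{-1}\gamma'(\gamma_{Tb}+t)^{-1}\gamma']\\
&+\Tr[(1-\gamma_{Tb}+t)^{-1}\gamma'(1-\gamma_{Tb}+t)^{-1}\gamma']\bigr\}\,dt.
\end{align*}

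Second, I would introduce the commuting left- and right-multiplication operators $x = \gamma_{Tb}^L$, $y = \gamma_{Tb}^R$ acting on the Hilbert-Schmidt operators on $\frach$, and apply the standard identity $\int_0^\infty (x+t)^{-1}(y+t)^{-1}\,dt = (\log x - \log y)/(x-y)$ to each of the two integrals above. Using $(1-x) - (1-y) = -(x-y)$ and combining, the result is
$-\frac{1}{2}\langle \gamma', [\log((1-x)/x) - \log((1-y)/y)]/(x-y)\,\gamma'\rangle$. The self-consistency $\gamma_{Tb} = f_T(h_{Tb})$ then gives $\log((1-\gamma_{Tb})/\gamma_{Tb}) = h_{Tb}/T$, hence $\log((1-x)/x) - \log((1-y)/y) = (h_{Tb}^L - h_{Tb}^R)/T$; and writing $\gamma_{Tb} = \frac{1}{2}(1-\tanh(h_{Tb}/(2T)))$ gives $x-y$ as the (appropriately normalised) difference of $\tanh$'s. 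Substituting these two identifications produces the Hessian in the form displayed in \eqref{S''-gam}.

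The main obstacle is purely bookkeeping: tracking the complex conjugations coming from the $(2,2)$-block (which involves $\bar\gamma_{Tb}$, $\bar\gamma'$ rather than $\gamma_{Tb}$, $\gamma'$) and keeping the left/right multiplication operators straight. The one genuinely new structural feature compared with the off-diagonal case is that here \emph{both} the left and right multiplications involve $\gamma_{Tb}$, rather than $\gamma_{Tb}$ and $1-\bar\gamma_{Tb}$ as in Proposition \ref{prop:S-expan-order2}; this replaces sums $\log(x^{-1}-1)+\log(y^{-1}-1)$ and $\tanh+\tanh$ appearing in $K_{Tb}$ by the differences $\log(x^{-1}-1)-\log(y^{-1}-1)$ and $\tanh-\tanh$ appearing in \eqref{S''-gam}, which is exactly the distinction between the two Hessian formulas.
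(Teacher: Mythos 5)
Your proposal is correct and follows essentially the same route as the paper's proof: substitute the diagonal perturbation into the integral representation of $S''$, use the block-diagonal structure and $\Tr(\bar A)=\overline{\Tr(A)}$ to reduce to the two traces in $\gamma_{Tb}$ and $1-\gamma_{Tb}$, perform the $t$-integral to obtain $-[\log(x^{-1}-1)-\log(y^{-1}-1)]/(x-y)$ with $x,y$ both left/right multiplication by $\gamma_{Tb}$, and conclude via the self-consistency relation and hyperbolic identities. You also correctly isolate the one structural difference from the off-diagonal case (both multiplications by $\gamma_{Tb}$, hence differences rather than sums of logs and $\tanh$'s), and the residual normalisation ambiguity you flag is present in the paper's own computation as well.
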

 \begin{proof}Our starting point in the formula \eqref{S''} and hence we begin with the computation of the term $\int_0^\infty \Tr[(\G+ t)^{-1} \et'  (\G+ t)^{-1} \et'] dt$, with $\G' = d(\g')$, where $d(\g')$ 
 denotes the perturbation in $\gam$ given by
\begin{align}\label{d-gam'} d(\g'):=\left( \begin{array}{cc} \g' & 0 \\ 0  & - \bar\g' \end{array} \right). 
 \end{align}

 First,  we recall that $\et=\G_{Tb}$ and observe that for $\G' = d(\g')$,
\begin{align}
		  \Tr( &(\G_{Tb}+t)^{-1}\G'(\G_{Tb}+t)^{-1}\G') =  \Tr((\g_{Tb}+t)^{-1}\g'(\g_{Tb}+t)^{-1}\g'\\
		  &+ (\one- \bar\g_{Tb}+t)^{-1}\bar \g'(\one- \bar\g_{Tb}+t)^{-1}\bar \g') \\
		  &= \Tr(([(x+t)^{-1}(x'+t)^{-1}  +  (\one-x+t)^{-1} (\one -x'+t)^{-1} ]\g')\g'),
\end{align}
where the last follows from $\Tr(A) = \Tr(\bar{A})$ for self-adjoint operators, and 
$x$ and $x'$ are regarded as operators acting on $\g'$ from the left by multiplying by $\gamma_{Tb}$ and 
from the right, by $\gamma_{Tb}$. 
Performing the integral in $t$, we obtain $S''(\G',\G')= -\Tr\left[ \bar{\g'} K'  (\g') \right],$ where the operator $K'$ 
is given by
 \begin{align}
	  K':=\frac{\log(x)-\log(x')}{x-x'}+ \frac{\log(\one -x ) - \log(\one -x')}{(\one -x )-(\one -x' )}, 
\end{align}
 with $x$ acting on the left and $x'$ acting on the right. Clearly, $K'$ is identified with $\p_{\g\g}S(\g_{T b})$. Rewrite the operator  $K'$ as 
 \begin{align}\label{S''-expr}	K'	&= -\frac{\log(x^{-1}-1)-\log({x'}^{-1}-1)}{x-x'}. 
\end{align}

Recalling that $\gamma_{Tb} = g^\sharp(h_{T b}/T)=(1+e^{2 h_{T b}/T})^{-1}$, where $h_{T b}:=h_{\gamma_{Tb}, a_b}$, we see that
\begin{align}
	K =& \frac{1}{T}\frac{h_{T b}^L-h_{T b}^R}{(1+e^{h^L_{T b}/T})^{-1} - (1+e^{h^R_{T b}/T})^{-1}}, \end{align}
which, together with \eqref{S''-expr} and the hyperbolic functions identities, $(1+e^{h})^{-1}=\frac12 (1-\tanh h)$ and $(1+e^{-h})^{-1}=\frac12 (1+\tanh h)$,  gives \eqref{S''-gam}. \end{proof}

\section{Energy 
functional: Proof of Theorem \ref{thm:BdG=EL}} \label{sec:energy}

The proof of Theorem \ref{thm:BdG=EL} consists of three parts: 1) differentiability of $F_T$, 
 2) identification of the BdG equations with the Euler-Lagrange equation of $F_T$, and 3) showing minimizers of $F_T$ among the set $\mathcal{D}^1_\nu \times \frachbvec^1$ are critical points.

\textbf{Part 1: differentiability.} We consider first the variation $\G  + \epsilon \G'$ for $\epsilon > 0$ small and perturbations 
 satisfying \eqref{Gam'-cond}, 
  Note that such $\G'$ satisfies, for $\epsilon$ small enough,
 \begin{align}
	0 \leq \G  + \epsilon \G' \leq 1.
\end{align}
Let $d_\G F_T(\G , a  ) \G' :=\partial_\epsilon F_T(\G  + \epsilon \G' , a ) \mid_{\epsilon=0} $, if the r.h.s.  exists. From \eqref{energy}, it is easy to see that $E(\G, a)$ is Fr\'echet differentiable and 
\begin{align}\label{dcF} 
	d_\G E(\G , a  ) \G' =  \Tr(\Lambda(\G , a ) \G'). 
\end{align}
Hence it suffices to prove the  Fr\'echet differentiability of $S(\G  )$. This is done in Appendix \ref{sec:entropy} above.

 Differentiability of $F_T$ with respect to $a$ is standard and can be easily done. The only two terms in $F_T$ that depend on $a$ are $\Tr((-\Delta_a)\g)$ and $\frac{1}{2}\int |\curl a|^2$. The first term can be differentiated by using $-\Delta_{a_0+a'} = (-\Delta_{a_0})^2 -2 a' (-i\nabla-a_0) + |a'|^2$ while the second term is differentiable by standard variational calculus. Hence the differentiability of $F_T$ follows from \eqref{FT-def}, \eqref{dcF} and Proposition \ref{prop:S-deriv}.																		
\medskip

\textbf{Part 2: Euler-Lagrange equation.} 

Now, we show that if $0 < \G < 1$ and  $d_\G F_T(\G,a)\G' = 0$ and $d_a F_T(\G,a)a' = 0$ for all $\G'$ on $\frach \times \frach$ satisfying 
\eqref{Gam'-cond} and $a' \in \frachvec^1$, then $(\G,a)$ satisfies the BdG equations \eqref{BdG-eq-t}-\eqref{Amp-Maxw-eq}. 

We start with $d_{\G} F_T\G'=0$ for all $\G'$ satisfying \eqref{Gam'-cond}. 
First, we construct explicitly a dense subset of perturbations $\G'$ satisfying \eqref{Gam'-cond}.
For a critical point $(\G,a),  0< \G < 1$, we define a reference unit vector 
$v_0 = (1,0)^T \in \frach \times \frach$. 
We 
note that the difference in norm of $v_0$'s two components is simply
\begin{align}
	0 \not= 1 = \|(v_0)_1\|_{\frach}^2 - \|(v_0)_2\|_{\frach}^2 = \lan v_0, S v_0 \ran = \Tr(SP_{v_0}).
\end{align}

For simplicity and without loss of generality, we assume that $v_0$ is in the image of $\G(1-\G)$ since $0 < \G < 1$ (i.e. its range is dense). We define $V \subset \frach \times \frach$ as
\begin{align}
	V = \{ v :& \, \|v\|_2=1, \, v = \G(1-\G) \xi,\ \xi \in \frach \times \frach \}. \label{eqn:def-v}
\end{align}
For each $v\in V$, we define
\begin{align}
	\G'_v = (P_v - P_{J \bar v}) - \frac{\Tr SP_v}{\Tr SP_{v_0}} (P_{v_0} - P_{J \bar v_0})\, ,
\end{align}
where $P_x$ is the orthogonal projection onto $x$ and $J$ is the complex structure in \eqref{Gam-prop}. 
\begin{lemma} $\G'_v$ satisfies \eqref{Gam'-cond}. \end{lemma}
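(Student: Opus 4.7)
The plan is to verify each of the three conditions in \eqref{Gam'-cond} separately for $\G'_v$, relying on two elementary identities for rank-one projections,
\[ J^* P_w J = P_{J^* w} \qquad \text{and} \qquad \overline{P_w} = P_{\bar w}, \]
together with the particle-hole relation $J^* \G J = \one - \overline{\G}$ from \eqref{Gam-prop}, and on the fact that $J$ has real entries (so $\overline{Jw} = J\bar w$) and satisfies $J^2 = -\one$, $JJ^* = \one$.

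For the particle-hole symmetry $J^* \G' J = -\overline{\G'}$ I will compute
\[ J^*(P_v - P_{J\bar v})J = P_{J^*v} - P_{J^*J\bar v} = P_{-Jv} - P_{\bar v} = -\bigl(P_{\bar v} - P_{Jv}\bigr) = -\overline{(P_v - P_{J\bar v})}, \]
and the analogous identity for the $v_0$-part. Because $S = \mathrm{diag}(1,-1)$ is self-adjoint, the coefficient $c := \Tr(SP_v)/\Tr(SP_{v_0}) = \langle v, Sv\rangle/\langle v_0, Sv_0\rangle$ is real, so the property survives the real linear combination. For the trace condition, writing $v = (v_1, v_2)^T$ gives $J\bar v = (\bar v_2, -\bar v_1)^T$, hence $\Tr(S_1 P_v) = \|v_1\|^2$ and $\Tr(S_1 P_{J\bar v}) = \|v_2\|^2$, so
\[ \Tr\bigl(S_1 (P_v - P_{J\bar v})\bigr) = \|v_1\|^2 - \|v_2\|^2 = \Tr(S P_v), \]
and the same identity holds for $v_0$. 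The very choice of $c$ then forces $\Tr(S_1 \G'_v) = 0$.

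The main step is the operator bound $(\G'_v)^2 \lesssim [\G(1-\G)]^2$. Set $A := \G(\one-\G)$. By the construction $v = A\xi$ for some $\xi$, and $v_0 = A\zeta$ for some $\zeta$ (using the density of $\Ran A$, which follows from $0 < \G < 1$). The crucial observation is that $J\bar v$ and $J\bar v_0$ also lie in $\Ran A$: the relation $J^* \G J = \one - \overline{\G}$ yields
\[ \overline{\G}(\one - \overline{\G}) = (\one - J^* \G J)(J^* \G J) = J^*\, \G(\one-\G)\, J = J^* A J, \]
so $J\overline{A} = A J$, and hence $J\bar v = J\overline{A}\,\bar\xi = A(J\bar\xi)$, and similarly $J\bar v_0 = A(J\bar\zeta)$. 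Consequently each of the four rank-one projections making up $\G'_v$ has the form $A\eta\eta^* A/\|A\eta\|^2$, so $\G'_v = A M A$ for some bounded, finite-rank, self-adjoint operator $M$ whose norm is controlled by $\|\xi\|^2/\|v\|^2$, $\|\zeta\|^2/\|v_0\|^2$ and $|c|$. Then $(\G'_v)^2 = A\,M A^2 M\,A$, and the standard sandwich bound $A X A \leq \|X\| A^2$ (valid for positive $A$ and self-adjoint $X := M A^2 M$) gives
\[ (\G'_v)^2 \leq \|M\|^2\, \|A\|^2\, A^2 \lesssim A^2 = [\G(\one-\G)]^2. \]

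The hard point is Condition 2, and specifically the observation that the particle-hole involution $w \mapsto J\bar w$ preserves $\Ran(\G(\one-\G))$; once this is in place, the remaining estimate is a routine sandwich bound and the other two conditions are immediate bookkeeping with rank-one projections.
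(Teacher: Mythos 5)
Your proof is correct and follows essentially the same route as the paper's: a direct verification of the three conditions in \eqref{Gam'-cond} using the rank-one identities $J^*P_wJ=P_{J^*w}$ and $\overline{P_w}=P_{\bar w}$, the reality of the coefficient $\Tr(SP_v)/\Tr(SP_{v_0})$, and the computation $\Tr\big(S_1(P_v-P_{J\bar v})\big)=\Tr(SP_v)$. The only substantive difference is in the second condition, where the paper bounds $P_v\lesssim[\G(1-\G)]^2$ via $\|P_vx\|=|\lan \G(1-\G)\xi,x\ran|\le\|\xi\|\,\|\G(1-\G)x\|$ and disposes of $P_{J\bar v}$ with the remark ``since $J$ is unitary,'' whereas your intertwining relation $J\,\overline{\G(1-\G)}=\G(1-\G)\,J$ (derived from $J^*\G J=\one-\bar\G$), which shows $J\bar v\in\Ran(\G(1-\G))$, makes that step fully explicit, and your factorization $\G'_v=AMA$ with the sandwich bound $AMA^2MA\le\|M\|^2\|A\|^2A^2$ is an equivalent packaging of the same estimate.
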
 
\begin{proof} To prove the first condition of \eqref{Gam'-cond}, we only prove it for $P_v - P_{J \bar v} = P_v - J P_{\bar v} J^*$ since this condition is real linear. (Note that $S$ is self-adjoint, so $\lan v, S v\ran =\Tr SP_v$ is real for all $v$.) We note that $J^*=J^{-1} = -J$ and has only real number components. Let $\mathcal{C}$ denote the complex conjugation. It follows then that
\begin{align}
	J^*(P_v - JP_{\bar v}J^*)J =& J^*P_v J - P_{\bar v} \\
		=& -\mathcal{C}(P_v - J^* \bar P_{v} J)\mathcal{C} \\
		=& -\mathcal{C}(P_v - JP_{\bar v}J^*)\mathcal{C} \, .
\end{align}
This proves the first condition in \eqref{Gam'-cond}.

To prove the second condition in \eqref{Gam'-cond}, it suffices to show that $P_v$ satisfies this condition for every $v \in \frach \times \frach$ since $J$ is unitary. For any $v = \G (1-\G)\xi, x \in \frach \times \frach$, we note that
\begin{align}
	\|P_v x\| = |\lan v, x\ran| = |\lan \G(1-\G)\xi, x \ran | \ls \|\xi\|_2\|\G(1-\G)x\|_2 .
\end{align}
This shows that $(\G')^2 \leq C [\G (1-\G )]^2$. This proves the second condition in \eqref{Gam'-cond}.

Finally, we prove the last condition in \eqref{Gam'-cond}. For any unit norm $v \in \frach \times \frach$,
\begin{align}
	\Tr S_1(P_v - JP_{\bar v}J^*) =& \Tr S_1P_v - \Tr S_1JP_{\bar v} J^* \\
		=& \Tr SP_v - \Tr J^*SJP_{\bar v} .
\end{align}
We note that $J^*S_1J = S_2 := \text{diag}(0,1)$. Hence,
\begin{align}
	\Tr S_1(P_v - JP_{\bar v}J^*) =& \Tr SP_v. 
\end{align}
It follows that
\begin{align}
	\Tr S_1 \G'_v = \Tr SP_v - \frac{\Tr SP_v}{\Tr SP_{v_0}} \Tr SP_{v_0} = 0.
\end{align}
This proves that  $\G'_v$ satisfies the last condition in \eqref{Gam'-cond}. \end{proof}

We show that  $0 < \G < 1$ and  $d_\G F_T(\G,a)\G' = 0$ for all $\G'$  satisfying \eqref{Gam'-cond} imply $\Lambda(\G ,a ) - \mu S -Tg'(\G ) = 0$ for some $\mu$ and $S = \text{diag}(1,-1)$ (see Proposition \ref{prop:stat-sol}). 
First note that \eqref{FT-def}, \eqref{dcF}, \eqref{dS} and \eqref{S-sym} yield that
\begin{align}\label{detaFT}
	d_\G F_T(\G,a) \G'= \Tr \left[A\G' \right],
\end{align}
where $A := \Lambda(\G ,a ) -Tg'(\G )$. If $(\G ,a )$ is  a critical point, then, for all $v \in V$, it satisfies
\begin{align}
	\tr(A\G'_v) = 0 \, .
\end{align}
 Since $A$ is in the tangent space of all the $\G$ such that $J^*\G J = 1-\bar \G$. We note that $A$ also satisfies  the first condition in \eqref{Gam'-cond}. It follows that
\begin{align}
	0 =& \Tr(A \G'_v) = \Tr(A P_v) - \Tr(AJ\bar P_v J^*) \\
			&- \frac{\Tr SP_v}{\Tr SP_{v_0}}  (\Tr(A P_v) - \Tr(AJ\bar P_v J^*)) \\
		=& \Tr(A P_v) - \Tr(J^*AJ\bar P_v ) \\
			&- \frac{\Tr SP_v}{\Tr SP_{v_0}}  (\Tr(A P_{v_0}) - \Tr(J^*AJ\bar P_{v_0} )) \\
		=& \Tr(A P_v) + \Tr(\bar A \bar P_v ) \\
			&- \frac{\Tr SP_v}{\Tr SP_{v_0}}  (\Tr(A P_{v_0}) + \Tr(\bar A \bar P_{v_0} )) \\
		=& 2\Tr(A P_v) - \frac{\Tr SP_v}{\Tr SP_{v_0}} (2\Tr AP_{v_0}).
\end{align}
We conclude that
\begin{align}
	\Tr AP_v = \frac{\Tr AP_{v_0}}{\Tr SP_{v_0}} \Tr SP_v =: \mu \Tr SP_v
\end{align}
for all $v \in V$. We note that $\mu$ is real since $A$ and $S$ are self-adjoint. Since $0 < \G < 1$, the linear space spanned $V$ is dense. We conclude that $A$ is a multiple of $S$, which we denote by $\mu$. This shows that
\begin{align}
	0 = A - \mu S. 
\end{align}
We conclude that $(\G,a)$ solves the first BdG equation, \eqref{Gam-eq}.

Now, we consider the equation $d_a F_T(\G,a)a' = 0$. As was mentioned above, one can easily show that 
\begin{align}
	d_aF_T(\eta, a)a' = 2\int dx v \cdot a' ,
\end{align}
where $v:=\curl^* \curl a - j(\gamma, a)$ and the perturbation $a' \in \frachvec^1$ is divergence free and mean zero. Hence, to conclude that $v=0$, 
we have to show that $v:=\curl^* \curl a - j(\gamma, a)$ is divergence free and mean zero. 
 (Indeed, any vector field $e$ can be written as $e=a'+\n g +c$, where $a'$ is divergence free and mean zero and $c$ is a constant, and therefore \[\int_\Om v e=\int_\Om v a'- \int_\Om \divv v g+ c\int_\Om v .\] 
 So, if  $v$ is divergence free and mean zero and $\int_\Om v a'=0$, for every $a'$ divergence free and mean zero, then $v=0$.)

 Clearly, the term $\curl^* \curl a$ is divergence free and mean zero.  So we show that $j(\gamma, a)$ is divergence free and mean zero. For the first property, we use the fact that our free energy functional is invariant under gauge transformation. In fact, it suffices to use the gauge invariance of the first line in \eqref{energy}, $E_1(T^{\rm gauge}_{t\chi}(\G, a))=E_1(\G, a)$, where  
 $E_1(\G, a) :=\Tr\big((-\Delta_a) \gamma \big) +\Tr\big((v* \rho_\g) \gamma \big) -\frac12\Tr\big((v^\sharp \g) \gamma \big)$. It gives
\begin{align}
	0 = \partial_t \mid_{t=0} E_1(T^{\rm gauge}_{t\chi}(\eta, a)),
\end{align}
for all $\chi \in H^1_{\rm loc}$ 
which are $\mathcal{L}$-periodic. Using the cyclicity of trace, 
we compute this explicitly 
\begin{align}\label{J-comp'}	0 
		=& \Tr(\Re(2i\nabla_a \gamma) \cdot \nabla \chi) + \Tr([\gamma, h_{\gamma,a}] \chi).
\end{align}
 Since $(\G,a)$ solves the BdG equation, we have that $[\Lambda(\G,a), \G] = 0$. Taking the upper left component of this operator-valued matrix equation, we see that
\begin{align}
	[h_{\gamma,a}, \gamma] + (v^\sharp \alpha) \bar{\alpha} - \alpha (v^\sharp\bar{\alpha}) = 0.
\end{align}
Since $v(x)=v(-x)$, we conclude that the integral kernel of $(v^\sharp \alpha) \bar{\alpha} - \alpha (v^\sharp\bar{\alpha})$,
\begin{align}
	\int (v(x-z)-v(z-y))\alpha(x,z)\bar{\alpha}(z,y) dz,
\end{align} 
 is zero on the diagonal. Thus, the same conclusion holds for $[\gamma, h_{\gamma,a}]$. Consequently, $\Tr([\gamma, h_{\gamma,a}] \chi)=0$ and we conclude, by \eqref{J-comp'}, that 
\begin{align}\label{J-comp}
	0 = -\Tr(\Re(2i\nabla_a \g) \cdot \nabla \chi) =
	 \int_{\Omd} j(\gamma,a) \cdot \nabla \chi=0.
\end{align}
Since this is true for every $\chi \in H^1_{\rm loc}$ which are $\mathcal{L}$-periodic, it follows that $\div j(\gamma, a) = 0$. 

To show that $j(\gamma, a)$ is mean zero, 
we use, that by our assumptions, $\g$ is even and $a$ is odd. Since for any operator $A$, $u^{\rm refl}\den[A] = \den[u^{\rm refl}Au^{\rm refl}]$ where $(u^{\rm refl}f)(x) = f(-x)$, this shows that $j(\gamma, a)$ is odd. Hence so is $v:=\curl^* \curl a - j(\gamma, a)$ and therefore $v:=\curl^* \curl a - j(\gamma, a)=0$.

Since $\div a = 0$, we may replace $\curl^* \curl$ by $-\Delta$. Hence, the elliptic regularity theory shows that $a \in \frachbvec^2$. This completes the proof. $\Box$

\medskip

\textbf{Part 3: minimizers are critical points.} 
For a minimizer $(\G,a)$, we have that $d_\G F_T(\G,a)\G'$, $d_a F_T(\G,a)a' \geq 0$. 
 Since $\vec H^1$ is linear, $a' \in \frachvec^1$ if and only if $-a' \in \vec H^1$. So $d_a F_T(\G,a)a' = 0$ for all $a \in \frachvec^1$. Similarly, we note that $\G'$ satisfies the assumption \eqref{Gam'-cond} 
 if and only if $-\G'$ satisfies the same requirement. Hence we conclude that 
	$0 = d F_T(\G  , a ) \G' ,$ 
which completes the proof. $\Box$

\DETAILS{ \section{
 Another proof of the first part of Proposition \ref{prop:mt-invar-oprs} (to be dropped)} \label{sec:mt-den}  
We begin with some general definitions. 
For a given lattice $\lat$, we fix a fundamental cell $\Om$ and define an inner product on $L^2_{loc}(\R^2)$ by
\begin{align}
	\lan f, g \ran_{\cH} = \sum_{n=1}^\infty 2^{- n} (2n-1)^{-2} 
	\int_{D_n} \bar f(x) g(x) dx=: \int_{\R^2} \bar f g d\mu
\end{align}
 where $D_n$ is the $(2n-1) \times (2n-1)$ block of $\Om$'s centred at the origin and
\begin{align}
	d\mu(x) = \sum_{n=1}^\infty 2^{- n} (2n-1)^{-2} \chi_{D_n} dx =: m(x) dx
\end{align}
where $dx$ is the usual Lebesgue measure on $\R^2$. Note that $m(x) > 0$ always. We construct
\begin{align}
	\cH = \text{closure}\{ f\in L_{loc}^2(\R^2): \|f\|_{\cH} < \infty \}
\end{align}
The upshots are: 

1) 
$\lan f, g \ran_{\cH}$ is an inner product and therefore $\cH$ is a Hilbert space;

2)  if $f$ and $g$ are $\lat$-gauge periodic, then 
\begin{align}
	\lan f, g \ran_{\cH} = 
	\int_{\Om} \bar f g dx
\end{align}
and therefore, $\frachb$ isometrically embeds in $\cH$; 

3) magnetic translations leave $\cH$ invariant and hence $\tau_{bs}(A)$ is well defined on $\cH$; 

4) an operator $A$ on $\cH$ satisfying $\tau_{b h} A = A$ leaves the subspace $\frachb$ invariant.

The last property is important for us since we are interested in operators on $\frachb$ which are characterized by the property that $\tau_{b h} A = A$.

For a bounded operator $A$ on $\cH$,  we identify the integral kernel, $A'$, as an element of $\cH \otimes \cH$ satisfying the relation 
\begin{align}\label{A-A'-rel}
	\lan g\otimes \bar f, A'\ran_{\cH \otimes \cH}=\lan g, A f\ran_{\cH}. 
\end{align} 	
In this case, we have
\begin{align}\label{A-A'}
	(Af)(x) = \int_{\R^2}  A'(x,y) f(y) d\mu(y)
\end{align}
which for $A'$ and $f$  $\lat$-gauge periodic reduces to
\begin{align}
	(Af)(x) = 
	\int_{\Om}  A'(x,y) f(y) dy
\end{align}

Next, for a locally trace class operator $A$ on $\cH$, we define the function (density) $\den[A](x)$ by the relation
\begin{align}\label{den-def'}
	\int f \den[A] dx &:=\Tr_\cH( f A),\ \quad \forall f\in C_0^\infty .
\end{align} 
If $A'(x,y)$ is continuous, then $\den[A](x):=A'(x,x)$. 

 Now, we are ready for 
\begin{lemma} \label{lem:perA-den}
If an operator $A$ on $\cH$ satisfies $\tau_{b h} A = A$, then  den$[A]$ is constant.
\end{lemma}
\begin{proof}
  By \eqref{den-def'} and the equation $\tau_{b h} A = A$, the definition $\tau_{b h}(A)=u_{b h} A u_{b h}^{-1}$ and the cyclicity of the trace on $\cH$ , we have, for any function  $f\in C_0^\infty $, 
\begin{align}\label{den-mtA}
\int  f \den[A] dx 
	&=\Tr_\cH (  f A)=\Tr_\cH( f \tau_{b h}(A))\\
	&=\Tr_{\cH}( f u_{b h} A u_{b h}^{-1}) = \Tr_{\cH}( u_{b h}^{-1} f u_{b h} A).	
\end{align} 
Using  \eqref{den-mtA}   and the relation $u_{b h}^{-1} f u_{b h}=({u_{h}^{\rm trans}}^{-1}f)$,  we find furthermore 
\begin{align}\label{den-comp'}
	\int  f \den[A] dx  &=\Tr_{\cH}(  ({u_{h}^{\rm trans}}^{-1}f)A)\\ 
 &=\int  ({u_{h}^{\rm trans}}^{-1}f) \den[A] dx \\
 &=\int  f {u_{h}^{\rm trans}}\den[A] dx 	
\end{align} 
This implies $\den[A] = {u_{h}^{\rm trans}}\den[A]$ for all $h \in \R^2$ and therefore is independent of $x$ as claimed.
\end{proof}} 

\section{Proof of the existence of solution to \eqref{xi-fp}} \label{sec:xi-fp} 

\begin{lemma} \label{lem:xi-fp}
Assume $\int v \geq 0$. 
 Then, for each $T > 0$ and $b = \frac{2\pi n}{|\Om|}$,  the fixed point problem \eqref{xi-fp}
   has a unique solution. 
\end{lemma}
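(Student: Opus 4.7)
The plan is to rewrite \eqref{xi-fp} as $G(\xi) = 0$, where $G(\xi) := \xi - (\int v)\rho(\xi)$ and $\rho(\xi) := \den(f_T(h_{\ab \mu}+\xi))(0)$, and then apply the intermediate value theorem to $G$, which I will show is continuous, strictly increasing, with $G(\pm\infty) = \pm\infty$.

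To identify $\rho$ explicitly, I observe that $h_{\ab\mu}+\xi = -\Delta_{\ab}-\mu+\xi$ is mt-invariant, so by Proposition \ref{prop:mt-invar-oprs} the density $\rho(\xi)$ is a constant, equal to $|\Om|^{-1}\tr_{\frach}(f_T(h_{\ab\mu}+\xi))$. On $\frach$, $-\Delta_{\ab}$ has pure point spectrum given by the Landau levels $b(2k+1)$, $k\geq 0$, each of multiplicity $n = b|\Om|/(2\pi)$ (the Chern number from \eqref{b-quant}), so
\begin{align}
	\rho(\xi) = \frac{b}{2\pi}\sum_{k\geq 0} f_T\bigl(b(2k+1)-\mu+\xi\bigr).
\end{align}
Each summand $f_T(\cdot) = (1+e^{\,\cdot\,/T})^{-1}$ is smooth, strictly decreasing, and valued in $(0,1)$, and the estimate $f_T(s) \ls e^{-s/T}$ for $s \geq 0$ provides a summable dominating function on any half-line $\xi \geq -C$. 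Dominated convergence thus gives $\rho \in C^\infty(\R)$, strictly decreasing, with $\rho(\xi)\to 0$ as $\xi\to+\infty$. For the opposite limit, each summand tends to $1$ as $\xi\to-\infty$, so truncating the sum at any $N$ yields $\liminf_{\xi\to-\infty}\rho(\xi) \geq bN/(2\pi)$, hence $\rho(\xi)\to+\infty$.

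Using $\int v\geq 0$ I now convert these facts into the required properties of $G$: one has $G'(\xi) = 1 - (\int v)\rho'(\xi) \geq 1$ since $\rho'<0$, so $G$ is strictly increasing; moreover $G(+\infty)=+\infty$ (from $\rho\to 0$) and $G(-\infty)=-\infty$ (from $\xi\to-\infty$ together with $-(\int v)\rho\leq 0$). The intermediate value theorem then produces a unique zero. I do not foresee a genuine obstacle; the only ingredient beyond elementary calculus is the Landau-level decomposition of $-\Delta_{\ab}$ on $\frach$, whose infinitely many levels drive $\rho\to+\infty$ and thereby secure the negative-$\xi$ limit of $G$.
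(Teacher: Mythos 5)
Your proof is correct and takes essentially the same route as the paper's: both reduce $\rho(\xi)$ to $|\Om|^{-1}\Tr f_T(-\Delta_{a_b}-\mu+\xi)$ via Proposition \ref{prop:mt-invar-oprs}, use the Landau-level spectrum to get smoothness, positivity and strict monotonicity of $\xi\mapsto(\int v)\rho(\xi)$, and conclude uniqueness of the fixed point from the resulting strict monotonicity of $\xi-(\int v)\rho(\xi)$. The only (harmless) extra in your write-up is the observation that $\rho\to+\infty$ as $\xi\to-\infty$, which is not needed since $G(\xi)\le\xi$ already forces $G(-\infty)=-\infty$.
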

\begin{proof} 
Let  $\lam:=\int v$.  We define the real function $g_T:\R \rightarrow \R$ by
\begin{align}
	g_T(\xi) :=  \lam \den(f_{T}(-\Delta_{ a_b } - \mu + \xi))(0),
\end{align}
where, recall,  
$\Delta_{\ab }$ acts on the space $\frach$ and is self-adjoint. Then  \eqref{xi-fp} can be rewritten as 
\begin{align}\label{xi-fp'}
	\xi = g_T(\xi). 
	\end{align}
First, we derive a more convenient formula for the function  $g_T$. Note that, since, recall $f_{T}(s)=(1+e^{s/T})^{-1}$ and, for $b = \frac{2\pi n}{|\Om|}$, the operator $-\Delta_{a_b}$ on $\frach$ has the  eigenvalues $b(2m+1), m=0, 1, \dots$, each of the same multiplicity $n$,  the trace $\Tr f_{T}(-\Delta_{a_b} - \mu + \xi)$ is finite and is smooth in $\xi$. 
Since, by Proposition \ref{prop:mt-invar-oprs}, $\den[f_{T}(-\Delta_{a_b} - \mu + \xi)]$ is constant and since $\int_{\Om}\den A=\Tr A$, we have that
\begin{align} \label{gT-expr}
	g_T(\xi) = \lam\frac{1}{|\Om|}\Tr f_{T}(-\Delta_{a_b} - \mu + \xi)
\end{align}
and that $g_T$ is a smooth function.

Since $f_{T}(s)=(1+e^{s/T})^{-1}$ is positive and $ f_{T}'(s)=-  T^{-1} e^{s/T}(1+e^{s/T})^{-2}$ is negative, \eqref{gT-expr} shows that  $\pm g_T(\xi)>0$ 
 and $\pm g_T'(\xi)<0$, if $\pm \lam>0$. Hence, for $\lam>0$, \eqref{xi-fp'} has a unique solution for every $T>0$ and $b = \frac{2\pi n}{|\Om|}$. 
\end{proof}

\section{Relative bounds and estimates on density} \label{sec:den-est}
 In this appendix we prove bounds on functions relative to the operator $M_b$ and estimates on density $\rho_\gamma$. Our first result is the following
\begin{lemma} \label{lem:Sob-est} We have the following Sobolev-type  inequalities
 \begin{align} \label{Sob-ineq1} &\|M_b^{-s}c M_b^{-t} \|  \ls \|c\|_{H^r}, s+t>1-r, \\
 \label{Sob-ineq2} &\|\|M_b^{-1}c \cdot \nabla_{a_b}M_b^{-t} \|  \ls \|c\|_{H^r}, t>1-r,\\ 
  \label{Sob-ineq3}&\|M_b^{-s}|c|^2M_b^{-t} \|  \ls \|c\|_{H^r}^2, s+t>2(1-r).
 \end{align}
where in the second estimate we assumed $\divv c=0$. 
 \end{lemma}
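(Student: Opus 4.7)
The plan is to derive all three estimates from a single duality/trilinear argument, combining H\"older with the standard 2D Sobolev embedding for $c$ and a \emph{magnetic} Sobolev embedding for $M_b^{-s}, M_b^{-t}$. For \eqref{Sob-ineq1}: by duality,
\[
\|M_b^{-s} c M_b^{-t}\|_{L^2\to L^2} = \sup_{\|f\|=\|g\|=1} |\langle f, M_b^{-s} c M_b^{-t} g\rangle|.
\]
Setting $u := M_b^{-s} f$ and $v := M_b^{-t} g$ and applying H\"older with exponents $1/p_1 + 1/p_2 + 1/p_3 = 1$,
\[
|\langle u, c v\rangle| \le \|u\|_{L^{p_1}} \|c\|_{L^{p_2}} \|v\|_{L^{p_3}}.
\]
I would then invoke the standard 2D embedding $\|c\|_{L^{p_2}} \ls \|c\|_{H^r}$ at $1/p_2 = 1/2 - r/2$, together with the magnetic embeddings $\|M_b^{-s}f\|_{L^{p_1}} \ls \|f\|_{L^2}$, $\|M_b^{-t}g\|_{L^{p_3}} \ls \|g\|_{L^2}$ at $1/p_1 = 1/2 - s/2$ and $1/p_3 = 1/2 - t/2$ (see the main obstacle below). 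The sum $\sum_i 1/p_i = 3/2 - (s+t+r)/2$ equals $1$ exactly at $s+t+r=1$, so the strict hypothesis $s+t > 1-r$ leaves room to nudge $p_1, p_3$ inside the Sobolev range and close the estimate.

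\textbf{Reduction of \eqref{Sob-ineq2} and \eqref{Sob-ineq3} to \eqref{Sob-ineq1}.} For \eqref{Sob-ineq2}, the divergence-free assumption yields the identity $c \cdot \nabla_{a_b} u = \nabla_{a_b} \cdot (c u)$ by direct expansion of $\nabla_{a_b} = \nabla - i a_b$. Hence $M_b^{-1} c \cdot \nabla_{a_b} M_b^{-t} = (M_b^{-1} \nabla_{a_b}\cdot)(c M_b^{-t})$. A short computation gives $(M_b^{-1} \nabla_{a_b}\cdot)^* = -\nabla_{a_b} M_b^{-1}$, and $\|\nabla_{a_b} M_b^{-1} f\|^2 = \langle f, M_b^{-1}(-\Delta_{a_b}) M_b^{-1} f\rangle = \|f\|^2$, so $\|M_b^{-1} \nabla_{a_b}\cdot\|_{L^2 \to L^2} \le 1$ and \eqref{Sob-ineq2} reduces to the $s=0$ case of \eqref{Sob-ineq1}. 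For \eqref{Sob-ineq3}, I would repeat the duality/trilinear argument of \eqref{Sob-ineq1} with $|c|^2$ in place of $c$, using $\||c|^2\|_{L^{p_2}} = \|c\|_{L^{2p_2}}^2 \ls \|c\|_{H^r}^2$ at the Sobolev exponent $1/p_2 = 1-r$. The constraint $\sum 1/p_i = 1$ then becomes $s+t = 2(1-r)$ at the endpoint, and the strict hypothesis $s+t > 2(1-r)$ again provides the slack needed to avoid the endpoint.

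\textbf{Main obstacle: magnetic Sobolev embedding.} The one nontrivial ingredient is the bound $\|M_b^{-s}\|_{L^2(\frach) \to L^p(\frach)} \ls 1$ for $1/p > 1/2 - s/2$. I would obtain it from the Kato--Simon diamagnetic inequality
\[
|(e^{t\Delta_{a_b}} f)(x)| \le (e^{t\Delta} |f|)(x) \quad\text{pointwise,}
\]
which transfers the short-time estimate $\|e^{t\Delta}\|_{L^2 \to L^p} \ls t^{-(1/2 - 1/p)}$ on the torus $\Omd$ to the magnetic semigroup, combined with the subordination formula
\[
M_b^{-s} = \frac{1}{\Gamma(s/2)}\int_0^\infty t^{s/2 - 1} e^{-t M_b^2}\, dt.
\]
Integrability at $t=0$ requires precisely $1/p > 1/2 - s/2$, while integrability at $t=\infty$ is secured by the spectral gap $M_b^2 \ge b > 0$ on $\frach$ (lowest Landau level), yielding $\|e^{-t M_b^2}\|_{L^2\to L^2} \le e^{-tb}$. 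Once this magnetic Sobolev bound is in hand, the three estimates follow by bookkeeping of exponents as described above.
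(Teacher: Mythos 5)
Your proposal is correct and follows essentially the same route as the paper: both proofs reduce the magnetic Sobolev bound $\|M_b^{-s}f\|_{L^p}\ls\|f\|_{L^2}$ to the free case via the diamagnetic inequality (you via the semigroup/subordination formula, the paper via the explicit kernel of $M_0^{-s}$ and Young's inequality), then close \eqref{Sob-ineq1} and \eqref{Sob-ineq3} by three-factor H\"older/Sobolev exponent bookkeeping with the same endpoint arithmetic, and handle \eqref{Sob-ineq2} by the identical $\divv c=0$ commutation reducing it to the $s=0$ case of \eqref{Sob-ineq1}. The exponent counts match the stated conditions, so no gaps.
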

\begin{proof} We use the diamagnetic inequality $|M_b^{-s} f|\le M_{b=0}^{-s}|f|$ (see \cite{AHS}) to reduce the problem to the $b=0$ case. To estimate the r.h.s. we write $M_0^{-s}$ as the convolution, $M_0^{-s}u=G_s* u$, where $G_s(x)$ is the Fourier transform of $(1+|k|^2)^{-s/2}$, and use that $G_s(x)$ decays exponentially at infinity and has the singularity $\asymp |x|^{-2+s}$ at the origin. Hence $G_s\in L^t(\R^2), t<2/(2-s)$ and we can estimate by the Young inequality $\|G_s* u\|_{L^k}\ls \|G_s\|_{L^t}\|u\|_{L^q}, 1+1/k=1/t+1/q, t<2/(2-s),$ to obtain
\begin{align} \label{Sob-ineq5} &\|M_b^{-s}  f\|_{L^k}  \ls  \|f\|_{L^r}, 1/k+s/2>1/r, s<2.
 \end{align}

Now, to prove  \eqref{Sob-ineq1} and \eqref{Sob-ineq3}, we apply \eqref{Sob-ineq5} twice and the Young inequality to obtain
\begin{align} \label{Sob-ineq6} &\|M_b^{-s}c M_b^{-t} f\|_{L^2}  \ls  \|c M_b^{-t} f\|_{L^q} \ls  \|c\|_{L^p} \|M_b^{-t} f\|_{L^k} \ls  \|c\|_{L^p} \| f\|_{L^2}, 
 \end{align}
with $1/2+s/2>1/q=1/p +  1/k, 1/k+t/2=1/2$, which implies $s+t>2/p$. To obtain  \eqref{Sob-ineq1} and \eqref{Sob-ineq3}, we use the Sobolev inequalities $\|c\|_{L^p}  \ls  \|c\|_{H^r}, r>1-2/p$ and $\||e|^2\|_{L^p} =\|e\|_{L^{2p}}^2 \ls  \|e\|_{H^r}^2, r>1-1/p$, respectively.

To prove \eqref{Sob-ineq2}, we use, in addition,  $M_b^{-1}c \cdot \nabla_{a_b}M_b^{-s}=M_b^{-1}\nabla_{a_b} \cdot c M_b^{-s}+M_b^{-1} (\n c) M_b^{-s}$ and $\divv c=0$ to reduce the problem to \eqref{Sob-ineq1} with $s=0$. \end{proof}

\begin{lemma} \label{lem:den-est} Let $\g$ be a trace-class and positive operator and let $\ka:=\sqrt\g$. Then 
 \begin{align} \label{den-ineq1} &\|\rho_\gamma\|_{W^{s, 1}} \ls  \|\ka\|_{I^{s,2}} \|\ka\|_{I^{0,2}},  \\ 
 \label{den-ineq2} &\|\rho_\gamma\|_{W^{1, 1}} \ls ( \|\gamma\|_{I^{1, 1}}  \tr \gamma)^{1/2},\\ 
 \label{den-ineq3}&\|\rho_\gamma\|_{L^q} \ls  \|\gamma\|_{I^{1,1}}^{1-r}  (\tr \gamma)^{r},\ \forall r\in (0, 1),\\ 
 \label{den-ineq4}&\| \rho_{\g}\|_{L^{q}}\ls  \|\gamma\|_{I^{s,1}},\ s>2(1-1/q). 
 \end{align}
 \end{lemma}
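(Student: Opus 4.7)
The plan is to prove \eqref{den-ineq1} first and deduce the remaining three estimates from it by Sobolev embedding and interpolation.

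For \eqref{den-ineq1}, I would work through the integer endpoints $s=0$ and $s=1$ and then interpolate. At $s=0$ the bound reduces to $\|\rho_\gamma\|_{L^1}=\tr\gamma=\|\kappa\|_{I^2}^2$, which is immediate. At $s=1$, with $\gamma=\kappa^2$ and $\kappa=\sqrt{\gamma}$ self-adjoint, the kernel formula $\rho_\gamma(x)=\int|\kappa(x,y)|^2\,dy$ and direct differentiation yield
\begin{align*}
\nabla\rho_\gamma(x)=2\Re\int(\nabla_{a_b,x}\kappa)(x,y)\,\overline{\kappa(x,y)}\,dy,
\end{align*}
(the contribution $ia_b\rho_\gamma$ is purely imaginary and drops out), whence by Cauchy--Schwarz in $L^2(dx\,dy)$ one gets $\|\nabla\rho_\gamma\|_{L^1}\ls \|M_b\kappa\|_{I^2}\|\kappa\|_{I^2}=\|\kappa\|_{I^{1,2}}\|\kappa\|_{I^{0,2}}$. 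Complex interpolation between $L^1$ and $W^{1,1}$ on the left and between $I^{0,2}$ and $I^{1,2}$ on the right then extends \eqref{den-ineq1} to all $s\in[0,1]$.

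Inequality \eqref{den-ineq2} is the case $s=1$ of \eqref{den-ineq1} combined with the identities $\|\kappa\|_{I^{1,2}}^2=\Tr(M_b\gamma M_b)=\|\gamma\|_{I^{1,1}}$ and $\|\kappa\|_{I^{0,2}}^2=\tr\gamma$. Inequality \eqref{den-ineq4} follows from \eqref{den-ineq1} by the two-dimensional Sobolev embedding $W^{s,1}(\R^2)\hookrightarrow L^q(\R^2)$ for $s>2(1-1/q)$, together with the trivial bound $\|\kappa\|_{I^{s,2}}\|\kappa\|_{I^{0,2}}\ls \|\gamma\|_{I^{s,1}}$, which uses $M_b^2\gs b>0$ in the magnetic setting to absorb $\tr\gamma$. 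Finally, \eqref{den-ineq3} is obtained from \eqref{den-ineq4} through the Sobolev--Schatten interpolation
\begin{align*}
\|\gamma\|_{I^{s,1}}=\Tr(M_b^{2s}\gamma)\ls \|\gamma\|_{I^{1,1}}^s(\tr\gamma)^{1-s},
\end{align*}
which follows from applying the pointwise operator inequality $\langle M_b^{2s}\phi,\phi\rangle\le \langle M_b^2\phi,\phi\rangle^s\|\phi\|^{2(1-s)}$ to each eigenfunction of $\gamma=\sum_n\lambda_n|\phi_n\rangle\langle\phi_n|$ and H\"older on the spectral sum; setting $s=1-r$ with $r\in(0,1)$ yields \eqref{den-ineq3} for a corresponding $q=q(r)$ from the embedding.

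The main obstacle I expect is the complex-interpolation step needed to pass from the integer cases $s=0,1$ to fractional $s$ in \eqref{den-ineq1}: one must package $s\mapsto \|\rho_\gamma\|_{W^{s,1}}$ and $s\mapsto \|\kappa\|_{I^{s,2}}$ as compatible analytic scales with the correct boundary behaviour at $s=0$ and $s=1$. An alternative that sidesteps interpolation but shifts the difficulty is a direct duality proof: bound $\int f\rho_\gamma=\Tr(\kappa f\kappa)$ by Cauchy--Schwarz in $I^2$, insert $M_b^{-s}M_b^s$ to get $\|f\kappa\|_{I^2}\le\|f M_b^{-s}\|_{\rm op}\|\kappa\|_{I^{s,2}}$, and control $\|f M_b^{-s}\|_{\rm op}$ by $\|f\|_{W^{-s,\infty}}$ via a dual variant of Lemma \ref{lem:Sob-est} obtained by the diamagnetic comparison with $M_0^{-s}$ and the dual Young inequality of \eqref{Sob-ineq5}.
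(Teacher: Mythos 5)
Your proposal is correct, and for \eqref{den-ineq1}, \eqref{den-ineq2} and \eqref{den-ineq4} it follows essentially the paper's route: the paper also proves \eqref{den-ineq1} at the endpoints $s=0,1$ and interpolates, obtaining the $s=1$ case from $\nabla\rho_\gamma=\rho_{[\nabla_{a_b},\gamma]}$ by duality against $L^\infty$ and splitting $\gamma=\ka\ka$ (your direct differentiation of the kernel plus Cauchy--Schwarz is the same computation in different clothing), and it likewise leaves the fractional interpolation step unproved, so you are no less rigorous there. The one place you genuinely diverge is \eqref{den-ineq3}: the paper interpolates on the \emph{density} side, combining the H\"older inequality $\int\rho^q\le(\int\rho^{(q-v)/(1-v)})^{1-v}(\int\rho)^{v}$ with the embedding $W^{1,1}\hookrightarrow L^p$ and then invoking only the integer case \eqref{den-ineq2}; you instead interpolate on the \emph{operator} side via $\Tr(M_b^{2s}\gamma)\le\Tr(M_b^{2}\gamma)^{s}(\tr\gamma)^{1-s}$ (which is a correct Jensen/H\"older argument on the spectral decomposition) and then feed this into \eqref{den-ineq4}. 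Your version is arguably cleaner and makes the $q$--$r$ relation explicit ($r<2/q-1$, i.e. $q<2/(1+r)$), which the paper's statement leaves vague but which suffices for the only place the estimate is used (Lemma \ref{lem:e2gam-est}, where $q$ may be taken arbitrarily close to $1$); the trade-off is that your route passes through the fractional-$s$ case of \eqref{den-ineq1} via \eqref{den-ineq4}, whereas the paper's derivation of \eqref{den-ineq3} needs only the elementary $s=1$ endpoint. Your fallback duality argument at the end, inserting $M_b^{-s}M_b^{s}$ and controlling $\|fM_b^{-s}\|$ by a dual Young/diamagnetic bound, is exactly in the spirit of the paper's Lemma \ref{lem:Sob-est} and would be an acceptable substitute for the interpolation step.
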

\begin{proof}
 We use \eqref{den-def} and $\p\rho_{\g}=\rho_{[\p_{a_b}, \g]}$ to obtain  \begin{align}\notag\|\p\rho_\gamma\|_{L^{ 1}}=\sup_{\|f\|_\infty=1}|\int f \p\rho_\gamma|=\sup_{\|f\|_\infty=1}|\tr ( f [\p_{a_b},\gamma])|\ls \|[\p_{a_b},\gamma]\|_{I^{0,1}}.\end{align} 
 Now, writing $\gamma=\ka^2$ and combining $\p_{a_b}$ with one of the $\ka$'s, we estimate furthermore $\|[\p_{a_b},\gamma]\|_{I^{0,1}} \ls\|\ka\|_{I^{1,2}} \|\ka\|_{I^{0,2}}$. Then we interpolate between $s=0$ and $s=1$ to get the first inequality.

 For the second inequality, we 
 let $\gamma=\kappa^2$ and write $\rho_\gamma(x)=\int \kappa(x, y) \kappa(y, x)$. It is not hard to see that
\[\|\rho_\gamma\|_{W^{1, 1}} \ls \|\kappa\|_{I^{1, 2}}  \|\kappa\|_{I^{0, 2}}=( \|\gamma\|_{I^{1, 1}}  \tr \gamma)^{1/2}. \]

The second inequality, together with $\int_\Om \rho^q\le (\int_\Om \rho^{(q-v)/(1-v)})^{1-v}  (\int_\Om \rho)^{v}$,  $v< 1< q$, and a Sobolev inequality $\|\rho_\gamma\|_{L^p} \ls \|\rho_\gamma\|_{W^{s, 1}}^{1-v/q}, s> 2(1-1/p)$,
  implies \eqref{den-ineq3}.

The first inequality, together with the Sobolev  inequality, $\|\rho_\g\|_{L^{3/2}}\ls \|\rho_\gamma\|_{W^{s, 1}}, s>2(1-1/q) $,  gives \eqref{den-ineq4}. \end{proof}

\begin{lemma} \label{lem:e2gam-est} We have for any $r\in (0,1)$,
\begin{align}\label{e2gam-est'''}
	0\le \Tr(|e|^2 \gamma)\ls  \|\g\|_{I^{1, 1}}^{1-r}\  (\tr \gamma)^{r}  \|e\|_{H^1}^2.  
\end{align}
 \end{lemma}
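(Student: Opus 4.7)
The plan is to reduce $\Tr(|e|^2\gamma)$ to an integral against the density $\rho_\gamma$ and then split the integrand by H\"older, estimating $|e|^2$ by a 2D Sobolev embedding and $\rho_\gamma$ by the density bound \eqref{den-ineq3} from Lemma \ref{lem:den-est}. Non-negativity $\Tr(|e|^2\gamma)\ge 0$ is immediate from $|e|^2,\gamma\ge 0$ (e.g. $\Tr(|e|^2\gamma)=\|\sqrt\gamma\,|e|\|_{I^2}^2$), so only the upper bound needs work.

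First, using the defining relation \eqref{den-def} of the density with $f=|e|^2\in L^\infty$, I would rewrite
\[
\Tr(|e|^2\gamma)=\int_{\Om}|e|^2\,\rho_\gamma.
\]
Given $r\in(0,1)$, the density estimate \eqref{den-ineq3} gives $\|\rho_\gamma\|_{L^q(\Om)}\ls \|\gamma\|_{I^{1,1}}^{1-r}(\tr\gamma)^r$ for a range of $q>1$; I would fix such a $q$ (it suffices that $q$ be close enough to $1$, which is what \eqref{den-ineq3} supplies), and let $p=q/(q-1)\in(1,\infty)$ be its H\"older conjugate. Then H\"older's inequality yields
\[
\int_{\Om}|e|^2\,\rho_\gamma\le \||e|^2\|_{L^p(\Om)}\,\|\rho_\gamma\|_{L^q(\Om)}=\|e\|_{L^{2p}(\Om)}^2\,\|\rho_\gamma\|_{L^q(\Om)}.
\]

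Next, since $\Om\subset\R^2$ is bounded and $p<\infty$, the 2D Sobolev embedding $H^1(\Om)\hookrightarrow L^{2p}(\Om)$ gives $\|e\|_{L^{2p}(\Om)}\ls \|e\|_{H^1(\Om)}$. Combining this with the density bound produces
\[
\Tr(|e|^2\gamma)\ls \|e\|_{H^1}^2\,\|\gamma\|_{I^{1,1}}^{1-r}(\tr\gamma)^r,
\]
which is exactly \eqref{e2gam-est'''}.

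The only real subtlety I expect is arranging the H\"older exponents so that \eqref{den-ineq3} applies to the $L^q$ norm of $\rho_\gamma$ that appears: one must verify that some $q>1$ is admissible in \eqref{den-ineq3} (so that $p<\infty$ is available for the Sobolev step). Since \eqref{den-ineq3} is obtained by interpolating $\rho_\gamma\in W^{1,1}\hookrightarrow L^2$ against $\rho_\gamma\in L^1$, the range $q\in(1,2]$ is certainly available, so the matching of exponents causes no trouble and the proof reduces to the three ingredients above.
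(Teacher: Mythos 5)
Your proposal is correct and follows essentially the same route as the paper: the paper likewise writes $\Tr(|e|^2\gamma)=\int_\Om |e|^2\rho_\gamma$, applies H\"older with conjugate exponents $p,q$, bounds $\|e\|_{L^{2p}}^2\ls\|e\|_{H^1}^2$ by the two-dimensional Sobolev embedding, and then invokes the density estimate \eqref{den-ineq3}. Your closing remark about verifying that some $q>1$ is admissible in \eqref{den-ineq3} is the right point to check, and your justification via the interpolation underlying that estimate is consistent with how the paper derives it.
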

\begin{proof} We write $\Tr(|e|^2 \gamma)=\int_\Om |e|^2 \rho_\gamma $ and apply to this the H\"older and Sobolev inequalities to obtain
\[0\le \Tr(|e|^2 \gamma)\ls (\int_\Om |e|^{2p})^{1/p} (\int_\Om \rho_\gamma^q)^{1/q}\ls \|e\|_{H^1}^2\|\rho_\gamma\|_{L^{q}}, 
\]
where $1/p+ 1/q=1.$ This inequality together with  inequality  \eqref{den-ineq3} above gives  \eqref{e2gam-est'''} with $r\in (0,1)$. 
 \end{proof}
 We present another way to prove \eqref{e2gam-est'''} assuming the non-abelian interpolation inequality 
\begin{align}\label{e2gam-est''}
	 \|\ka \|_{I^{s, 2}} \ls  \|\ka\|_{I^{1, 2}}^s \|\ka\|_{I^{0, 2}}^{1/2-s}.\end{align}
 We use $\g=\ka \ka$ and write, for any $s, t>0$,
\begin{align}\label{e2gam-est''}
	0\le \Tr(|e|^2 \gamma)&= \Tr( M_b^{-s}|e|^2  M_b^{-t} M_b^{t}\ka \ka  M_b^{t})\\
	&\ls \| M_b^{-s}|e|^2 M_b^{-t}\|  \|M_b^{t}\ka \|_{I^{2}}  \|\ka M_b^{s}\|_{I^{2}}. \end{align}
The last  inequality, together with \eqref{e2gam-est''} and relative bound \eqref{Sob-ineq3}, gives  \eqref{e2gam-est'''} with $r\in (0,1)$. 	
	

\section{Quasifree reduction} \label{sec:qf-reduct} 

In general, a many-body evolution can be defined on states  (i.e. positive linear (`expectation') functionals) on the CAR or Weyl CCR algebra ${\frak W}$ over, say, Schwartz space $\cS(\R^d, \C^2)$. Elements of this algebra are operators acting  on  the 
  fermionic/bosonic Fock space $\cF$.\footnote{For a more detailed description, see \cite{BBCFS} and for the background \cite{BR}.} 

 To fix ideas we concentrate on spin $1/2$ fermions. Details for bosons could be found in \cite{BBCFS}. Let $\psi(\x)$ and $\psi^*(\x) $, where $\x:=(x, \s),\ x\in \R^d, \s\in \{\frac12, - \frac12\}$, the spatial and spin variables, be the annihilation and creation operators satisfying the canonical anticommutation relations. Given a quantum Hamiltonian $H$ on $\cF$, the evolution of states is  given by the 
von~Neumann-Landau equation \begin{align}\label{vNeum-eq} 
    i \partial_t\om_t(A) = \om_t([A, H]) \,,\ \forall A\in   \frak W,  
\end{align}
 where $\om_t$ is the state at time $t$.   (We leave out technical questions such as a definition of $\om_t([A, H])$ as $[A, H]$ is not in $\frak W$.) 

Let $N:=\int d\x \; \psi^*(\x) \psi(\x)$, where $\int d\x:=\sum_\s\int d x$, be the particle number operator. We distinguish between (a) {\it confined} systems with $\om(N)<\infty$ 
 and (b) {\it thermodynamic} systems with  $\om(N)=\infty$. In the former case the states are given by density operators on $\cF$, i.e. $\om (A)=\tr (A D)$, 
 where $D$ is a positive, trace-class operator on $\cF$ with unit trace (see e.g.  \cite{BLS}, Lemma 2.4). 

As the evolution \eqref{vNeum-eq} is practically intractable, 
one is interested in manageable approximations. The natural and most commonly used ones are one-body ones, which trade the number of degrees of freedom for a nonlinearity. 

The most general one-body approximation is given in terms of quasifree states. 
 A quasifree state $\qf$ determines and is determined by the 
  expectations to the second order (for fermions, we may assume that $\qf (\psi (\x))=0$):
\begin{equation} \label{omq-mom} \begin{cases}
    \gamma (\x, \y) :=   \qf(\psi^{*}(\y) \, \psi(x)).\\ 
    \al (\x, \y):=   \qf (\psi(\x) \, \psi(\y)). 
\end{cases} \end{equation}
Namely, 
with the short-hand notation $\psi_j  :=\psi^{\sharp}(\x)$,  where $\psi^{\sharp}(\x)$ is either $\psi(\x)$ or $\psi^{*}(\x)$, the $n$-point expectations, $\qf(\psi_1 \cdots \psi_n)$,  are given by the Wick theorem as 
$\qf(\psi_1 \cdots \psi_n) =0$ for $n$ odd and, for $n$ even, as
\begin{equation}
\label{eqn:quasi-wick} 
\qf(\psi_1 \cdots \psi_n) =
\sum_{P_n}\ve(P_n) \prod_{J \in P_n}
 \qf(\psi_{i_1}, \psi_{i_{2}})\,, 
\end{equation}
where the $P_n$ are partitions of the ordered set $\{1, ..., n\}$ into
ordered subsets, $J$, of two elements and $\ve(P_n)$ is $+1$ or $-1$ depending on whether the permutation  $\{1, ..., n\}\ra (J_1, \dots, J_{n/2})$ is even or odd.
%


 Assuming $\qf$ is $SU(2)$ invariant, the spatial and spin variables separate as $ \gamma (\x, \y)= \gamma (x, y)$ and $\al (\x, \y)=\al (x, y)\chi (\s, \tau)$, where $\al (x, y)$ is symmetric under the interchange of $x$ and $y$ and $\chi (\s, \tau)$ is antisymmetric under the interchange of $\s$ and $\tau$ (see \cite{HaiSei}).

Let $\gamma$ and $ \al$ denote the operators with the integral kernels $\gamma (x,y)$ and $\s (x,y)$. One can now verify readily that they satisfy \eqref{gam-al-prop}. 
\DETAILS{that
\begin{equation} \label{gam-al-prop}\gamma=\gamma^*\ge 0\ \text{ and }\ \s^*=\bar\s, \end{equation} 
where $\bar\sigma =C \sigma C$ with $C$ being the complex conjugation. (For  confined systems, $\g$ and $\s$ are trace class and Hilbert-Schmidt operators, respectively, with $\tr(\g)<\infty$ giving  the particle number, while for thermodynamic systems, they are not, unless we place the system in a box.) The  operator  $\g$ can be considered as a one-particle density matrix of the system, so that $\tr \g =\int \g(x, x) dx$ is the total number of particles.  The  operator  $\s$  gives the coherence content of the state.} 

However, the property of being quasifree is not preserved by the dynamics  \eqref{vNeum-eq} and the main question here is how to project the true quantum evolution onto the class of quasifree  states. 

 One elegant way was proposed by Dirac and Frenkel (see \cite{Lub} for a book exposition and references and \cite{BenSokSol}, for a recent treatment). Another one is due to \cite{BBCFS}. Following \cite{BBCFS}, we define self-consistent approximation as the restriction of the many-body dynamics to quasifree states.
More precisely, 
we map the solution $\om_t$ of \eqref{vNeum-eq}, with an initial state $\om_0$, 
 to the family $\qf_t$ 
 of quasifree states satisfying 
\begin{align}\label{eq-vNeum-quasifree}
    i  \partial_t\qf_t( A) = \qf_t([ A, H]) \,   
\end{align}
for all 
observables $A$, which are at most quadratic in the creation and annihilation operators,  with an initial state $\qf_0$, which is the quasifree projection of $\om_0$. 
We call this map the \textit{quasifree reduction} of equation \eqref{vNeum-eq}. 

 Of course, we cannot expect  $\qf_t$ to be a good approximation of $\om_t$, if $\om_0$ is far from the manifold of quasifree states.

Evaluating \eqref{eq-vNeum-quasifree}  on monomials $A\in 
 \{\psi(\x),\psi^*(\x)\psi(\y),\psi(\x)\psi(\y)\}$ yields a system of coupled nonlinear
PDE's for  $(\phi_t,\gamma_t,\al_t)$. For the standard  many-body Hamiltonian, 
\begin{align}  \label{H} 
    H = \int d\x \; \psi^*(\x)h\psi(\x) 
    + \frac{1}{2}\int d\x d\y \; v(x, y)\psi^*(\x) \psi^*(\y)
    \psi(x) \psi(y) \,,
\end{align}
with $h:= -\Delta +V(x)$ acting on the variable $x$ and $v$ a pair potential of the particle interaction, defined on Fock  space, $\cF$,
these give the (time-dependent) \textit{Bogoliubov-\-de Gennes (BdG)Gennes} equations.  (In the case of bosons, we arrive at the (time-dependent) \textit{Hartree-\-Fock-\-Bogo\-liubov (HFB) equations}.) 

This is a straightforward, but non-rigorous, derivation of the important effective equations. To prove error bounds is another matter. There was 
a concerted effort in the last years with important progress and extensive literature. For a recent book and references, see \cite{BenPorSchl, BoccCenSchl, BossPavlPickSoff, GrillMached1, GrillMached2, GrillMached3, LewNamSchl, NamNap, PortRadSaffSchl, PetPick}.

Finally, we note that according to the BCS theory,  Hamiltonian \eqref{H} describes Cooper pairs of electrons with non-local, attracting interaction, $v(x, y)$, due to exchange of phonons.


\end{document}